 \newcommand{\cost}{{\operatorname{cost}}}
\DeclareMathOperator{\operatorClassNP}{{\sf NP}}
\newcommand{\classNP}{\ensuremath{\operatorClassNP}}
\DeclareMathOperator{\operatorClassFPT}{{\sf FPT}\xspace}
\newcommand{\classFPT}{\ensuremath{\operatorClassFPT}\xspace}
\DeclareMathOperator{\operatorClassW}{{\sf W}}
\newcommand{\classW}[1]{\ensuremath{\operatorClassW[#1]}}
\DeclareMathOperator{\operatorClassParaNP}{{\sf Para-NP}\xspace}
\newcommand{\classParaNP}{\ensuremath{\operatorClassParaNP}\xspace}
\DeclareMathOperator{\operatorClassXP}{{\sf X}P\xspace}
\newcommand{\classXP}{\ensuremath{\operatorClassXP}\xspace}
\newcommand{\wei}{\mathbf{w}}
\newcommand{\Oh}{\mathcal{O}}
\newtheorem{theorem}{Theorem}
\newtheorem{lemma}{Lemma}
\newtheorem{claim}{Claim}[section]
\newtheorem{corollary}{Corollary}
\newtheorem{definition}{Definition}
\newtheorem{observation}{Observation}
\newtheorem{proposition}{Proposition}
\newtheorem{reduction}{Reduction Rule}[section]
\newcommand{\fillin}{{\sf fill\mbox{-}in}}
\newcommand{\sign}{\operatorname{sign}}
\newcommand{\intc}{{\sf int\mbox{-}comp}}
\newcommand{\splitc}{{\sf split\mbox{-}comp}}
\newcommand{\cliqc}{{\sf c\mbox{-}comp}}
\newcommand{\pname}{\textsc}
\newcommand{\ProblemFormat}[1]{\pname{#1}}
\newcommand{\ProblemIndex}[1]{\index{problem!\ProblemFormat{#1}}}
\newcommand{\ProblemName}[1]{\ProblemFormat{#1}\ProblemIndex{#1}{}\xspace}
  \newcommand{\probWdcS}{\ProblemName{Weighted $d$-colorable Subgraph}}
    \newcommand{\probWHcS}{\ProblemName{Weighted $H$-colorable Subgraph}}
    \newcommand{\probWddS}{\ProblemName{Weighted $d$-degenerate Subgraph}}
 \newcommand{\probWVC}{\ProblemName{Weighted Vertex Cover}}
\newcommand{\probWIS}{\ProblemName{Weighted Independent Set}}
 \newcommand{\probDS}{\ProblemName{Dominating Set}}
\newcommand{\probWFVS}{\ProblemName{Weighted Feedback Vertex Set}}
\newcommand{\probWCFVS}{\ProblemName{Weighted Connected Feedback Vertex Set}}
\newcommand{\probIS}{\ProblemName{Independent Set}}
\newcommand{\probWOCT}{\ProblemName{Weighted Odd Cycle Transversal}}
\newcommand{\probWBS}{\ProblemName{Weighted Bipartite Subgraph}}
\newcommand{\probmWIF}{\ProblemName{Weighted Induced Forest}}
\newcommand{\probCL}{\ProblemName{Clique}}
\newcommand{\probWCL}{\ProblemName{Weighted Clique}}
\newcommand{\probWICP}{\ProblemName{Weighted Induced Cycle Packing}}
\newcommand{\probWCVC}{\ProblemName{Weighted Connected Vertex Cover}}
\newcommand{\probCOL}{\ProblemName{Coloring}}
\newcommand{\chordalmke}{\text{\sc{Chordal}}-ke}
\newcommand{\intervalmke}{\text{\sc{Interval}}-ke}
\newcommand{\splitmke}{\text{\sc{Split}}-ke}
\newcommand{\compmke}{\text{\sc{Complete}}-ke}
\newcommand{\chordal}{\text{\sc{Chordal}}}
\newcommand{\NP}{{\ensuremath{\rm{NP}}}}
\newcommand{\coNP}{{\ensuremath{\rm{coNP}}}}
 \DeclareMathOperator{\poly}{poly}
\newcommand{\compass}{\coNP\subseteq \NP/\poly}
\newcommand{\ncompass}{\coNP\nsubseteq \NP/\poly}
\newlength{\RoundedBoxWidth}
\newsavebox{\GrayRoundedBox}
\newenvironment{GrayBox}[1]%
   {\setlength{\RoundedBoxWidth}{.93\textwidth}
    \def\boxheading{#1}
    \begin{lrbox}{\GrayRoundedBox}
       \begin{minipage}{\RoundedBoxWidth}}%
   {   \end{minipage}
    \end{lrbox}
    \begin{center}
    \begin{tikzpicture}%
       \node(Text)[draw=black!20,fill=white,rounded corners,%
             inner sep=2ex,text width=\RoundedBoxWidth]%
             {\usebox{\GrayRoundedBox}};
        \coordinate(x) at (current bounding box.north west);
        \node [draw=white,rectangle,inner sep=3pt,anchor=north west,fill=white] 
        at ($(x)+(6pt,.75em)$) {\boxheading};
    \end{tikzpicture}
    \end{center}}     
\newenvironment{defproblemx}[2][]{\noindent\ignorespaces%
                                \FrameSep=6pt%
                                \parindent=0pt%
                \vspace*{-1.5em}
                \ifthenelse{\isempty{#1}}{%
                  \begin{GrayBox}{\textsc{#2}}%
                }{%
                  \begin{GrayBox}{\textsc{#2}  parameterized by~{#1}}%
                }
                \begin{tabular*}{\textwidth}{@{\hspace{.1em}} >{\itshape} p{1.8cm} p{0.8\textwidth} @{}}%
            }{
                \end{tabular*}%
                \end{GrayBox}%
                \ignorespacesafterend
            }
\newcommand{\defproblema}[3]{
  \begin{defproblemx}{#1}
    Input:  & #2 \\
    Task: & #3
  \end{defproblemx}
}%
\newenvironment{subproof}[1][\proofname]{%
  \begin{proof}[#1]%
}{%
  \end{proof}%
}
\begin{document}
\title{Subexponential parameterized algorithms and kernelization on almost chordal graphs%
\thanks{The research leading to these results have  been supported by the Research Council of Norway via the project ``MULTIVAL" (grant no. 263317).}
}

\author{
Fedor V. Fomin\thanks{
Department of Informatics, University of Bergen, Norway.} \addtocounter{footnote}{-1}
\and
Petr A. Golovach\footnotemark{} 
}

\date{}

\maketitle

\begin{abstract}
We study algorithmic properties of the graph class  $\chordalmke$, that is,   graphs that can be turned into a chordal graph by adding at most $k$ edges
 or, equivalently, the class of graphs of  fill-in at most $k$.
We discover that a number of fundamental intractable optimization problems being parameterized by $k$ admit \emph{subexponential}   algorithms on graphs from  $\chordalmke$.  While various parameterized algorithms on graphs for many structural parameters like vertex cover or treewidth can be found in the literature, up to the Exponential Time Hypothesis (ETH), the existence of subexponential parameterized algorithms for most of the structural parameters and optimization problems is highly unlikely. This is why we find  the algorithmic behavior of the  ``fill-in parameterization''  very unusual. 

 Being intrigued by this   behaviour,    
we identify a large class of optimization problems  on $\chordalmke$ that admit  algorithms with the typical  running time $2^{\Oh(\sqrt{k}\log k)}\cdot n^{\Oh(1)}$. Examples of the problems from this class are    finding an independent set of maximum weight,    finding a feedback vertex set
or  an odd cycle transversal of minimum weight, or the problem of finding a maximum induced planar subgraph.  On the  other hand, we  show that for some   fundamental optimization problems,  like finding an optimal graph coloring or finding a maximum clique, are FPT on $\chordalmke$ when parameterized by $k$ but do not admit subexponential in $k$ algorithms unless ETH fails. 

Besides subexponential time algorithms, the  class  of  $\chordalmke$ graphs appears to be appealing from the perspective of kernelization (with parameter $k$).
While it is possible to show that most of the weighted variants of optimization problems do not admit polynomial in $k$ kernels on  $\chordalmke$ graphs, this does not exclude the existence of Turing kernelization and kernelization for unweighted graphs.  In particular, we  
construct a polynomial Turing kernel for \probWCL on $\chordalmke$ graphs.
For  (unweighted) \probIS  we design  polynomial kernels on  two interesting subclasses of $\chordalmke$, namely,    $\intervalmke$ and  $\splitmke$ graphs. 

\end{abstract}

\section{Introduction}\label{sec:intro}

Many NP-hard  graph optimization problems are solvable in polynomial or even linear time when the input of the problem is restricted to a special graph class. For example, the chromatic number of a perfect graph can be computed in polynomial time  \cite{GrotLS93}, the \textsc{Feedback Vertex Set} problem is solvable in polynomial time on chordal graphs~\cite{Gavril72}, and \textsc{Hamiltonicity} on interval graphs~\cite{Keil85}. 
From the perspective of parameterized complexity,  the natural question here is how stable are these nice algorithmic properties of graph classes subject to some perturbations. For example, if an input  $n$-vertex graph $G$ is not chordal, but can be turned into a chordal graph by adding at most $k$ edges,  how fast can we solve \textsc{Feedback Vertex Set} on $G$? 
Can we solve the problem in polynomial time for constant $k$? Or maybe for $k=\log{n}$ or even for $k=poly(\log{n})$? 
A word of warning is on order here.
Since  an algorithm for \textsc{Feedback Vertex Set} of running time $2^{o(n)}$ will refute the Exponential Time Hypothesis (ETH)
 of Impagliazzo, Paturi and Zane~\cite{ImpagliazzoP99,ImpagliazzoPZ01}, and because $k\leq \binom{n}{2}$, the existence of an  algorithm of running time $2^{k^{1/2 -\varepsilon}}\cdot n^{\Oh(1)}$ for some $\varepsilon >0$ (which is polynomial for $k=(\log{n})^{2/(1-2\varepsilon)}$) is unlikely. Interestingly, as we shall see,   \textsc{Feedback Vertex Set}  (and many other problems) are solvable in time $2^{k^{1/2}\log{k}}\cdot n^{\Oh(1)}$.

Leizhen Cai in  \cite{Cai03a} introduced a convenient notation for ``perturbed'' graph classes.   Let $\mathcal{F}$ be a  class of graphs, then  $\mathcal{F}-ke$ (respectively $\mathcal{F}-ve$)  is the class of those graphs that can be obtained from a member of $\mathcal{F}$ by deleting at most $k$ edges (respectively vertices). Similarly one can define classes   $\mathcal{F}+ke$ and  $\mathcal{F}+ve$. 
Then for any class $\mathcal{F}$ and optimization problem  $\mathcal{P}$ that can be solved in polynomial time on $\mathcal{F}$, the natural question is whether $\mathcal{P}$ is fixed-parameter tractable parameterized by $k$,  the ``distance'' to  $\mathcal{F}$. 
 
 In this paper we obtain several algorithmic results on  the parameterized complexity of optimization problems on  $\mathcal{F}-ke$, where $\mathcal{F}$ is the class of chordal graphs. Let us remind that  a graph $H$ is {\em chordal} (or {\em triangulated}) if every
cycle of length at least four has a chord, i.e., an edge between
two nonconsecutive vertices of the cycle. We denote by $\chordalmke$ the class of graphs that can be made  chordal graph by adding at most $k$ edges.  While parameterized algorithms for various problems on the class of $\chordalmke$ graphs  were studied (see the section on previous work),  our work introduces the first  subexponential parameterized  algorithms on this graph class.   
 We prove the following.

 \medskip\noindent\emph{Subexponential parameterized algorithms}. We discover a large class of optimization problems on graph class $\chordalmke$ that  are solvable in time $2^{\Oh(\sqrt{k}\log{k})} \cdot n^{\Oh(1)}$. Examples of such optimization problems 
are: the problem of finding an induced $d$-colorable subgraph of maximum weight (which generalizes 
\probWIS  for $d=1$ and  \probWBS for $d=2$); the problem of finding a maximum weight induced subgraph  admitting a homomorphism into a fixed graph $H$; the problem of finding an induced  $d$-degenerate subgraph of maximum weight and its variants like 
\probmWIF (or, equivalently, \probWFVS),
\textsc{Weighted Induced Tree}, \textsc{Induced Planar Graph},
\textsc{Weighted Induced Path (Cycle)} or
 \probWICP;  as well as various connectivity variants of these problems like \probWCVC and  \probWCFVS. This implies that all these problems are solvable in polynomial time for $k=(\frac{\log{n}}{\log\log{n}})^2$. On the other hand,
  we refute (subject to ETH)  existence of a subexponential time   $2^{o(k)}\cdot n^{\Oh(1)}$ algorithms  on graphs in  $\chordalmke$ for \probCOL  and \probCL.   Moreover, our lower bounds hold for way more restrictive graph class
  $\compmke$, the graphs within $k$ edges from a complete graph. We also show that both problems are     fixed-parameter tractable (\classFPT)  (parameterized by $k$) on  $\chordalmke$ graphs.

 \medskip\noindent\emph{Kernelization}. It follows almost directly from the previous work~\cite{JansenB13,BodlaenderJK14} 
that   \probWIS, \probWVC , \probWBS,  \probWOCT, \probWFVS and \textsc{Weighted Clique}
do not admit a polynomial in $k$ kernel   (unless $\ncompass$)  on   $\compmke$ and hence on $\chordalmke$.   Interestingly, these lower bounds do not refute the possibility of polynomial Turing kernelization or kernelization for unweighted variants of the problems. 
Indeed, we show that  \probWCL on $\chordalmke$ parameterized by $k$ admits a Turing kernel.  For unweighted  \probIS  we show that the problem admits polynomial in $k$ kernel on  graph classes   
 $\intervalmke$  and  $\splitmke$ (graphs that can be turned into an interval or split graphs, corrspondingly, by adding at most $k$ edges).
  
 \medskip\noindent\textbf{Previous work.} Chordal graphs form an important subclass of perfect graphs. These graphs were also intensively studied from the algorithmic perspective. We refer to books \cite{BrandLeSpiGraphclasses99,Golumbic80,vandenberghe2015chordal} for introduction to chordal graphs and their algorithmic properties. 
 
 The problem of determining whether a graph $G$ belongs to $\chordalmke$, that is checking whether $G$ can be turned into a chordal graph by adding at most $k$ edges,   is known in the literature as the \textsc{Minimum Fill-in} problem.  
 The name fill-in is due to  the fundamental problem arising in sparse matrix computations which was studied intensively in the past   \cite{Parter61,Rose72}.   The survey of Heggernes    \cite{Heggernes06} gives an overview of techniques and applications of  minimum and minimal triangulations.

\textsc{Minimum Fill-in}  (under the name \textsc{Chordal Graph Completion}) was one of the 12 open  problems presented at the end of the first edition of Garey and Johnson's book  \cite{GareyJ79} and  it was  proved to be NP-complete by Yannakakis 
\cite{Yannakakis81}. Kaplan et al.  proved  that \textsc{Minimum Fill-in}  is  fixed parameter tractable by giving an algorithm of running time  $16^k\cdot n^{\Oh(1)}$ in \cite{focs/KaplanST94}. There was a chain of  algorithmic improvements resulting in decreasing the constant in the base of  the exponents \cite{KaplanST99,Cai96,Bodlaender:2011lr} resulting with a subexponential algorithm of running time  $2^{\Oh(\sqrt{k}\log{k})} \cdot n^{\Oh(1)}$ \cite{FominV13}.
 A significant amount of work in parameterized algorithms is devoted to recognition problems of classes   
$\mathcal{F}- ke$, $\mathcal{F}+ke$, $\mathcal{F}- kv$, and $\mathcal{F}+kv$ for chordal graphs and various subclasses of chordal graphs \cite{AgrawalLMSZ19,AgrawalM0Z19,pic-kernel,BliznetsFPP18,CaoM15,Cao16,Cao17,FominSV13,JansenP18,Marx10,yngve:ic}.

 Parameterized algorithms, mostly for graph coloring problems, were studied on  perturbed chordal graphs and subclasses of this graph class
\cite{Cai03a,takenaga2006vertex}.  Among other results, Cai \cite{Cai03a} proved that \probCOL (the problem of computing the chromatic number of a graph) is  \classFPT  (parameterized by $k$) on $\splitmke$ graphs. Marx~\cite{Marx06aParaCol} proved that \textsc{Coloring} is \classFPT on $\textsc{Chordal}+ke$ and   $\textsc{Interval}+ke$ graphs but is \classW{1}-hard on $\textsc{Chordal}+kv$ and   $\textsc{Interval}+kv$ graphs.
Jansen and Kratsch \cite{jansen2013data}
proved that for  every fixed integer $d$, the problems \textsc{$d$-Coloring} and \textsc{$d$-List Coloring}  admit polynomial kernels on the parameterized graph classes   $\textsc{Split} + kv$,    $\textsc{Cochordal} + kv$, and $\textsc{Cograph} + kv$.

 Liedloff,  Montealegre, and Todinca
\cite{DBLP:journals/algorithmica/LiedloffMT19} gave a general theorem establishing fixed-parameter tractability for a large class of optimization problems. Let $\mathcal{C}_{poly}$  be a class of graphs    having at most  $poly(n)$ minimal separators. (Since every chordal graph has 
at most $n$
 minimal separators, the class of chordal graphs is a subclass of  $\mathcal{C}_{poly}$.)
Let 
$\varphi$ be   a Counting Monadic Second Order Logic  (CMSO) formula,  $G$ be a graph,  and $t\geq 0$ be an integer.   Liedloff,  Montealegre, and Todinca
 proved that on graph class $\mathcal{C}_{poly}+kv$, the following generic problem 
  \begin{equation*}\label{eq:opt_phi} 
\begin{array}{ll}
\mbox{Max}  &  |X|   \\
\mbox{subject to } &  \mbox{There is a set }   F\subseteq V (G)   \mbox{ such that }  X\subseteq F;       \\
 &  \mbox{The treewidth of   }  G[F]    \mbox{ is at most }  t ;     \\
 &  (G[F],X)\models\varphi.  
\end{array}
\end{equation*}
is solvable in time 
 $\Oh( n^{\Oh(t)}\cdot f(t,\varphi,k))$, and thus is  
  fixed-parameter tractable parameterized by $k$. 
   The problem generalizes many classical algorithmic problems like \textsc{Maximum Independent Set},   \textsc{Longest Induced Path}, \textsc{Maximum Induced Forest}, and different packing problems, see 
   \cite{DBLP:journals/siamcomp/FominTV15}.
   
    Since the class  $\mathcal{C}_{poly}+kv$ contains  $\chordalmke$,  the work of Liedloff  et al. 
\cite{DBLP:journals/algorithmica/LiedloffMT19}  yields that all these problems are fixed-parameter tractable on  $\chordalmke$ graphs parameterized by $k+t+|\varphi|$.  However, the theorem of Liedloff et al. cannot be used to derive our results. First, this theorem provides FPT algorithm  only for problems of finding an  induced subgraph of constant treewidth, which   is not the case in our situation. Second, even for graphs of treewidth $0$, their technique does not derive parameterized algorithms with subexponential running times.


 \medskip\noindent\textbf{Organization of the paper.} The remaining part of the paper is organized as follows. In Section~\ref{sec:prelim}, we introduce notation and provide some useful auxiliary results. In Section~\ref{sec:subexpalgo}, we discuss subexponential algorithms on $\chordalmke$. Section~\ref{sec:lower} contains the lower bounds for 
\probCOL and \probCL on $\chordalmke$. Sections~\ref{sec:kern}--\ref{sec:clique} are devoted to kernelization. In Section~\ref{sec:kern}, we give lower bounds and construct a polynomial Turing kernel for \probWCL on $\chordalmke$. In Sections~\ref{sec:interval} and \ref{sec:clique}, we construct polynomial kernels for \probIS on $\intervalmke$ and  $\splitmke$ respectively. We conclude in Section~\ref{sec:concl} by some open problems.

\section{Preliminaries}\label{sec:prelim}

\noindent{\bf{Graphs.}}
All graphs considered in this paper are assumed to be  simple, that is, finite undirected graphs without loops or multiple edges.
For each of the graph problems considered in this paper, we let $n=|V(G)|$ and $m=|E(G)|$ denote the number of vertices and edges,
respectively, of the input graph $G$ if it does not create confusion. For a set $X\subseteq V(G)$, $\binom{X}{2}$ denotes the set of pairs of distinct vertices of $X$. 
For a graph $G$ and a subset $X\subseteq V(G)$ of vertices, we write $G[X]$ to denote the subgraph of $G$ induced by $X$.
We write $G-X$ to denote the subgraph of $G$ induced by $V(G)\setminus X$, and we write $G-u$ instead of $G-\{u\}$ for a single element set. 
Similarly, for an edge set $A$, $G-A$ denotes the graph $G\rq{}=(V(G),E(G)\setminus A)$, and for a set of pairs of vertices $A\subseteq \binom{V(G)}{2}$, $G+A$ is the graph $G\rq{}=(V(G),E(G)\cup A)$. For a single-element set $\{e\}$, we use  $G+e$ for  $G+\{e\}$.  For $A\subseteq \binom{V(G)}{2}$, $G\bigtriangleup A$ denotes the graph $G'=(V(G),E(G)\bigtriangleup A)$.
For a vertex $v$, we denote by $N_G(v)$ the \emph{(open) neighborhood} of $v$, i.e., the set of vertices that are adjacent to $v$ in $G$.
The \emph{closed neighborhood} $N_G[v]$ is $N_G(v)\cup \{v\}$.
For a set of vertices $X\subseteq V(G)$, $N_G[X]=\bigcup_{v\in X}N_G[v]$ and $N_G(X)=N_G[X]\setminus X$.
The \emph{degree} of a vertex $v$ is $d_G(v)=|N_G(v)|$.
The complement of a graph $G$ is the graph $\overline{G}$ with $V(\overline{G})=V(G)$ such that two distinct vertices are adjacent in  $\overline{G}$ if and only if they are not adjacent in $G$.
A \emph{(proper) $\ell$-coloring} of a graph $G$ is an assignment $c\colon V(G)\rightarrow \{1,\ldots,\ell\}$ of \emph{colors} $1,\ldots,\ell$ to the vertices of $G$ in such a way that adjacent vertices get distinct colors; a graph $G$ is \emph{$\ell$-colorable} if it has an $\ell$-coloring. 
A graph is \emph{$d$-degenerate} for a nonnegative integer $d$, if every induced subgraph of $G$ has a vertex of degree at most $d$.  An equivalent way of defining  $d$-degenerate graph is in terms of coloring   orderings~\cite{MR0193025}. A vertex ordering of a graph 
is  a \emph{$d$-coloring} ordering, if each vertex  has at most $d$ neighbors that are after it in the ordering.  Then a graph $G$ is $d$-degenerate if and only if it admits an $d$-coloring ordering. 

For a graph class $\mathcal{C}$ and a nonengative integer $k$, $\mathcal{C}-ke$ denotes the class of all graphs $G$ such that there is a set $A\subseteq \binom{V(G)}{2}\setminus E(G)$ of size at most $k$ such that $G+A\in\mathcal{C}$. In words, this means that $\mathcal{C}-ke$ contains graphs that can be turned to be graphs of $\mathcal{C}$ by at most $k$ edge additions. For a set   $A\subseteq \binom{V(G)}{2}\setminus E(G)$ such that $G+A\in\mathcal{C}$, we say that $A$ is a \emph{$\mathcal{C}$-modulator}. 

\smallskip\noindent\textbf{Graph classes.}
A graph $G$ is \emph{chordal} (or \emph{triangulated}) if it does not contain an induced cycle of length at least four. In other words, 
every cycle of length at least four has a chord, i.e., an edge whose end-vertices are nonconsecutive vertices of the cycle.
The intersection graph of a family of intervals of the real line is called an \emph{interval} graph; it is also said that $G$ is an interval graph if there is a family of intervals (called \emph{interval model} or \emph{representation}) such that $G$ is isomorphic to the intersection graph of this family. 
A graph $G$ is said to be \emph{split} if its vertex set can be partition into independent set and a clique. 
 We refer to~\cite{BrandLeSpiGraphclasses99,Golumbic80} for  detailed introduction to these graph classes. Notice that interval and split graphs are chordal.

A \emph{triangulation} (or a \emph{chordal complementation}) of a graph $G$ is a chordal supergraph $H$ with $V(H)=V(G)$.  The \emph{size} of the triangulation is $|E(H)|-|E(G)|$. 
The \emph{fill-in} of a  graph $G$, denoted $\fillin(G)$, is the minimum integer $k$ such that $G\in \chordalmke$ 
or, in other words, fill-in is the minimum number of edges whose addition makes the graph chordal. 
An \emph{interval complementation} of a graph $G$ is an interval supergraph $H$ with $V(H)=V(G)$.  Similarly, a \emph{split complementation} of $G$ is a split supergraph $H$  and a \emph{clique complementation} is a complete  supergraph with $V(H)=V(G)$. The \emph{size of interval  (split, clique) completion} is $|E(H)|-|E(G)|$ and we denote the minimum size of an interval (split, clique)
 complementation by $\intc(G)$ ($\splitc(G)$, $\cliqc(G)$ respectively). 
Clearly, $G$ has an interval (split, clique) complementation of  size at most $k$ if and only if $G\in\intervalmke$ ($\splitmke$, $\compmke$).
It is easy to see that  $\cliqc(G)=\binom{|V(G)|}{2}-|E(G)|$, and it is well-known that it is NP-hard to compute $\fillin(G)$~\cite{Yannakakis81} and $\intc(G)$~\cite{GareyJ79} and the same holds for $\splitc(G)$~\cite{NatanzonSS01}.

We will make use of the following observation.

\begin{observation}\label{obs:bounds}
For every graph $G$, $\cliqc(G)\geq \intc(G)\geq \fillin(G)$ and  $\cliqc(G)\geq \splitc(G)\geq \fillin(G)$.
\end{observation}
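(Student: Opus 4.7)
The plan is to derive all four inequalities from a single observation: any graph in the ``stronger'' completion class is automatically in the ``weaker'' one, so any completion witness for the stronger class is also a completion witness (with the same number of added edges) for the weaker class. Concretely, I would establish the class containments
\[
\{\text{complete graphs}\}\subseteq\{\text{interval graphs}\}\subseteq\{\text{chordal graphs}\}
\]
and
\[
\{\text{complete graphs}\}\subseteq\{\text{split graphs}\}\subseteq\{\text{chordal graphs}\},
\]
and then translate these inclusions into inequalities on the three completion numbers.

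For the containments I would note the following. Every complete graph on $n$ vertices is an interval graph, since assigning the same interval $[0,1]$ to every vertex yields an interval representation. Every complete graph is also split, as its vertex set can be partitioned into a clique ($V(G)$) and an empty independent set. Interval graphs are chordal by definition of the class (no induced cycle of length $\geq 4$); this is standard and already referenced in the preliminaries via \cite{BrandLeSpiGraphclasses99,Golumbic80}. Split graphs are chordal because any induced $C_\ell$ with $\ell\geq 4$ in a split graph would intersect both the clique part and the independent part, forcing at least two nonadjacent vertices inside the clique part — a contradiction. None of these steps require real work, but I would state them so the reader sees why the final chain of inequalities falls out.

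For the inequalities themselves, fix a graph $G$. Take $H$ to be an optimal clique completion of $G$, so $H$ is complete and $|E(H)|-|E(G)|=\cliqc(G)$. Since $H$ is complete, it is interval; thus the same edge set $A=E(H)\setminus E(G)$ is an interval-completion modulator for $G$, giving $\intc(G)\leq|A|=\cliqc(G)$. The same $H$ is also split, so analogously $\splitc(G)\leq\cliqc(G)$. Next, take an optimal interval completion $H'$ of $G$; since $H'$ is interval, it is chordal, so its completion edge set witnesses $\fillin(G)\leq\intc(G)$. Taking an optimal split completion and using that split graphs are chordal gives $\fillin(G)\leq\splitc(G)$. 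Chaining these yields both chains of the observation.

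The main ``obstacle'' is purely presentational: there is no combinatorial difficulty, so the only thing to be careful about is not to invoke optimality in a circular way — one should always start from an optimum for the stronger class and use it as a feasible (not necessarily optimum) solution for the weaker class, which is exactly the direction in which the inequalities go.
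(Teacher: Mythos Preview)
Your proof is correct and is exactly the intended argument: the inequalities follow immediately from the class containments $\{\text{complete}\}\subseteq\{\text{interval}\}\subseteq\{\text{chordal}\}$ and $\{\text{complete}\}\subseteq\{\text{split}\}\subseteq\{\text{chordal}\}$, since an optimal completion to the smaller class is a feasible completion to the larger one. The paper itself states this observation without proof, treating it as immediate from these well-known containments (already noted in the preliminaries), so your write-up is if anything more detailed than what the paper provides.
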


In particular, this observation implies that complexity lower bounds obtained for graph problems parameterized by the clique completion size hold for the same problems when they are parmeterized by the interval or split completion or by the fill-in, and the hardness for the interval or split completion parameterization implies the hardness for the fill-in parameterization.

Natanzon, Shamir and Sharan~\cite{NatanzonSS98} proved that fill-in admits a polyopt approximation. 

\begin{proposition}[\cite{NatanzonSS98}]\label{prop:fillin-appr}
There is a polynomial algorithm that, given a graph $G$ and a nonnegtive integer $k$, either correctly reports that $\fillin(G)>k$ or returns a triangulation of $G$ of size at most $8k^2$.
\end{proposition}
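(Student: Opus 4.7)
The strategy I would pursue is an iterative covering-based approximation. Since $G$ is chordal iff it contains no induced cycle of length at least four, any fill-in set $F^\star$ of $G$ must, for every induced four-cycle $abcd$ of $G$ (with non-edges $ac,bd$), contain at least one of $ac$ and $bd$. Form the \emph{conflict graph} $\mathcal{H}$ whose vertices are the non-edges of $G$ lying in some induced $C_4$, and whose edges are the pairs $\{ac,bd\}$, one per induced $C_4$ of $G$. A vertex cover of $\mathcal{H}$ is exactly a set of non-edges whose addition destroys every induced $C_4$ of $G$, and $F^\star$ projects to such a cover; hence the minimum vertex cover of $\mathcal{H}$ has size at most $\fillin(G)$. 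Applying the standard factor-$2$ approximation for vertex cover produces, in polynomial time, a set $S_1$ with $|S_1|\le 2\fillin(G)$ such that $G_1:=G+S_1$ is $C_4$-free.

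The algorithm then iterates on the sequence $G_0=G, G_1, G_2, \ldots$, stopping once $G_i$ is chordal. To bound the total added edges I would aim for a monotonicity invariant $\fillin(G_i)\le \fillin(G_{i-1})-1$, so that after at most $k$ rounds the process terminates. Summing $|S_i|\le 2\fillin(G_{i-1})\le 2(k-i+1)$ across the rounds yields a total of at most $2(k+(k-1)+\cdots+1)= k(k+1)\le 8k^2$ new edges, which is the claimed triangulation size. If during the run the accumulated total ever exceeds $8k^2$, or if after $k$ iterations the current graph is still not chordal, the algorithm outputs ``$\fillin(G)>k$'', and by the bound above this output is correct.

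The hardest step will be the monotonicity claim $\fillin(G_i)\le \fillin(G_{i-1})-1$. Adding $S_i$ destroys all induced four-cycles of $G_{i-1}$, but longer induced cycles may persist or even be created, and it is not a priori clear that the fill-in strictly decreases. The natural route is through minimal triangulations and minimal separators of $G_{i-1}$: every minimal chordal completion is obtained by turning each minimal separator into a clique, so by choosing $S_i$ to prefer chords that lie in partial cliques arising from these separators (this can be encoded as a vertex weighting on $\mathcal{H}$ and rounded with the same factor-$2$ guarantee), one can guarantee that $S_i$ completes at least one minimal separator and hence reduces $\fillin$ by at least one. Once this structural lemma is in hand, the size bound and polynomial running time follow directly from the description above.
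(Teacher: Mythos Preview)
First, note that the paper does not prove this proposition: it is quoted from Natanzon, Shamir and Sharan~\cite{NatanzonSS98} and used as a black box, so there is no proof in the paper to compare against.

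Your proposed argument has a fatal gap. The conflict graph $\mathcal{H}$ is built only from induced \emph{four}-cycles, but chordality requires the absence of induced cycles of \emph{every} length $\ge 4$. Take $G=C_5$: it is not chordal, yet it contains no induced $C_4$, so $\mathcal{H}$ is empty, $S_1=\emptyset$, and $G_1=G_0$. Your loop makes no progress and never terminates. More generally, whenever the current $G_i$ has a chordless cycle of length $\ge 5$ but no induced $C_4$, the iteration stalls. So the scheme does not even produce a triangulation, let alone one of bounded size.

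Even setting this aside, the monotonicity claim $\fillin(G_i)\le\fillin(G_{i-1})-1$ is not justified and is in general false for arbitrary edge additions: adding the ``wrong'' edge can \emph{increase} fill-in (e.g., add a single edge to a chordal graph to create a long induced cycle). Your suggested fix---weighting the vertex cover toward minimal-separator non-edges---does not repair this, because a $2$-approximate cover need not fully complete any minimal separator, and completing one separator of $G_{i-1}$ does not by itself guarantee that every minimal triangulation of $G_i$ is contained in one of $G_{i-1}$. The actual algorithm of Natanzon, Shamir and Sharan does not proceed via $C_4$-covering; it combines reduction rules identifying non-edges that \emph{must} be fill edges (e.g., a non-edge contained in more than $2k$ distinct induced $C_4$'s sharing only that non-edge) with a bound showing that after exhaustively applying such rules the remaining essential part of the instance has $\Oh(k)$ vertices, on which any minimal triangulation costs $\Oh(k^2)$ edges. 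If you want to reconstruct a proof, that is the direction to pursue.
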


\medskip\noindent{\bf{Tree decompositions.}} A   {\em tree decomposition} of
a graph $G$ is a pair $\mathcal{T}=(T,\{X_t\}_{t\in V(T)})$, where $T$ is a tree whose every node $t$ is assigned a vertex subset $X_t\subseteq V(G)$, called a bag,
such that the following three conditions hold:
\begin{description}
\item[(T1)] $\bigcup_{t\in V(T)} X_t =V(G)$. In other words, every vertex of $G$ is in at least one bag.
\item[(T2)] For every $uv\in E(G)$, there exists a node $t$ of $T$ such that bag $X_t$ contains both $u$ and $v$.
\item[(T3)] For every  $u\in V(G)$, the set $T_u = \{t\in V(T) | u\in X_t\}$, i.e., the set of nodes whose corresponding  bags contain $u$, induces
a connected subtree of $T$.
\end{description}

To distinguish between the vertices of the decomposition tree $T$ and the vertices of
the graph $G$, we will refer to the vertices of $T$ as {\em{nodes}}. 
By the classical result due to Buneman and Gavril~\cite{Buneman74,Gavril74}, 
every chordal graph $G$  has a tree decomposition   
such that each bag of the decomposition is a maximal clique of $G$.
Such a tree decomposition is referred as a \emph{clique tree} of the chordal graph $G$. 
 
  It is more convenient to describe dynamic programming algorithms on tree decompositions of special nice form. 
 We think of nice tree decompositions as rooted trees. That is, for a tree decomposition $(T,\{X_t\}_{t\in V(T)})$ we distinguish one vertex $r$ of $T$ which will be the root of $T$.  We   say that such a rooted tree decomposition $(T,\{X_t\}_{t\in V(T)})$ is {\em{nice}} if the following conditions are satisfied: 
\begin{itemize}
\item $X_r=\emptyset$ and $X_\ell=\emptyset$ for every leaf $\ell$ of $T$. In other words, all the leaves as well as the root contain empty bags. 
\item Every non-leaf node of $T$ is of one of the following three types:
\begin{itemize}
  \item{\bf{Introduce node}}: a node $t$ with exactly one child $t'$
    such that $X_t=X_{t'}\cup\{v\}$ for some vertex $v\notin X_{t'}$; we say
    that $v$ is {\emph{introduced}} at $t$.
  \item{\bf{Forget node}}: a node $t$ with exactly one child $t'$ such
    that $X_t=X_{t'}\setminus \{w\}$ for some vertex $w\in X_{t'}$; we say
    that $w$ is {\emph{forgotten}} at $t$.
  \item{\bf{Join node}}: a node $t$ with two children $t_1,t_2$ such
    that $X_t=X_{t_1}=X_{t_2}$.
\end{itemize}
\end{itemize}
Throughout the paper, given a nice tree decomposition $(T,\{X_t\}_{t\in V(T)})$ of a graph $G$, we denote by $V_t$ the union of the bags in the subtree of $T$ rooted in a node $t\in V(T)$.
 
We will be using the following proposition, see e.g. \cite{Marx06aParaCol}.

\begin{proposition}\label{lem02:nice-tw}
Every $n$-vertex chordal graph $G$ admits a nice tree decomposition \linebreak $\mathcal{T}=(T,\{X_t\}_{t\in V(T)})$ such that 
every bag $X_t$ of $G$ is a clique.
This decomposition 
 has at most $\Oh(n^2)$ nodes and can be constructed in polynomial  time.  
\end{proposition}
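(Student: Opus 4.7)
The plan is to start from a clique tree of $G$ given by the Buneman--Gavril theorem and then apply the standard transformation into a nice tree decomposition, while verifying that every newly created bag remains a clique. A chordal graph has at most $n$ maximal cliques, so this initial clique tree $\mathcal{T}_0 = (T_0, \{C_t\}_{t \in V(T_0)})$ has $|V(T_0)| \leq n$ bags, each of them a maximal clique of $G$, and it can be produced in polynomial time.

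Root $T_0$ arbitrarily and perform three local modifications. First, for every node $t$ with $d \geq 2$ children, replace $t$ by a binary tree of $d-1$ join nodes, each carrying the bag $C_t$; since every such new bag equals $C_t$, it is a clique. Second, for every tree edge whose endpoints carry distinct bags $X_s, X_{s'}$, subdivide the edge by first a sequence of forget nodes removing the vertices of $X_s \setminus X_{s'}$ one by one to reach the intersection $X_s \cap X_{s'}$, and then a sequence of introduce nodes adding the vertices of $X_{s'} \setminus X_s$ one by one to reach $X_{s'}$; each intermediate bag lies inside $X_s$ or inside $X_{s'}$, hence is a clique. Third, above the root and below every original leaf, attach a chain of forget (respectively introduce) nodes that trims the bag down to $\emptyset$; the intermediate bags are again subsets of a maximal clique, hence cliques.

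Conditions (T1) and (T2) are inherited from $\mathcal{T}_0$, and (T3) is preserved because each vertex $v$ is introduced and forgotten exactly once along any root-to-leaf path, so the set of nodes whose bag contains $v$ still induces a subtree. For the size estimate, the first step adds $\Oh(n)$ nodes in total (the sum of children counts is bounded by $|V(T_0)|$), while each of the $\Oh(n)$ edges of $T_0$ and each of the $\Oh(n)$ leaves contributes at most $\Oh(n)$ subdivision nodes, yielding $\Oh(n^2)$ overall, and every step runs in polynomial time.

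The only step requiring care is the orientation of each edge subdivision: one must route through the intersection of the two maximal cliques, which is precisely what keeps every intermediate bag inside a maximal clique and thereby preserves the clique property along the whole decomposition. All other bookkeeping is routine.
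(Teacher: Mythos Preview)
The paper does not supply its own proof of this proposition; it is stated with a reference to Marx as a known fact. Your argument is the standard construction and is correct, including the essential point that routing each edge subdivision through the intersection $X_s \cap X_{s'}$ is precisely what keeps every intermediate bag inside one of the two maximal cliques.

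One small technicality is worth tightening. After your binarisation step a join node may still have as a child one of the original clique-tree nodes, whose bag is some $C_{t'} \neq C_t$, whereas the definition of a join node requires $X_t = X_{t_1} = X_{t_2}$. The usual fix is to insert one buffer node carrying bag $C_t$ between the join node and each original child before subdividing that edge; this costs only $\Oh(n)$ extra nodes and changes nothing else. Your justification of (T3) is also slightly loose---a vertex may be introduced several times, once per branch below a join, so ``introduced exactly once along any root-to-leaf path'' is not the right invariant---but the conclusion holds because each of your local operations (bag duplication at a join, subdivision through the intersection, appending a monotone chain at the root or a leaf) directly preserves connectivity of the set $\{t : v \in X_t\}$.
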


\noindent
{\bf Parameterized Complexity and Kernelization.}
We refer to the  books~\cite{CyganFKLMPPS15,DowneyF13,FominLSM19} for the detailed introduction to the field. Here we only
briefly review the basic notions.

Parameterized Complexity is a two dimensional framework
for studying the computational complexity of a problem. One dimension is the input size~$n$ and the other is a \emph{parameter}~$k$ associated with the input.

A parameterized problem is said to be \emph{fixed parameter tractable} (or \classFPT) if it can be solved in time $f(k)\cdot n^{\Oh(1)}$ for some function~$f$. 

Parameterized complexity theory also provides tools for obtaining complexity lower bounds. Here we use lower bounds based on \emph{Exponential Time Hypothesis (ETH)} formulated by Impagliazzo,  Paturi and Zane~\cite{ImpagliazzoP99,ImpagliazzoPZ01}.  For an integer $k\geq 3$, let $q_k$ be the infimum of the real numbers $c$ such that the \textsc{$k$-Satisfiability} problem can be solved in time $\Oh(2^{c n})$, where $n$ is the number of variables. Exponential Time Hypothesis states that $\delta_3>3$. In particular, ETH implies that   \textsc{$k$-Satisfiability} cannot be solved in time $2^{o(n)}n^{\Oh(1)}$.

A \emph{compression} of a parameterized problem $\Pi_1$ into a (non-parameterized) problem $\Pi_2$
is a polynomial algorithm that maps each instance $(I,k)$ of $\Pi_1$ with the input~$I$ and the parameter~$k$ to an instance $I'$ of $\Pi_2$ such that
\begin{itemize}
\item[(i)] $(I,k)$ is a yes-instance of $\Pi_1$ if and only if $I'$ is a yes-instance of $\Pi_2$, and
\item[(ii)] $|I'|$ is bounded by~$f(k)$ for a computable function~$f$.
\end{itemize}
The output $I'$ is also called a \emph{compression}. The function~$f$ is said to be the \emph{size} of the compression. A compression is \emph{polynomial} if~$f$ is polynomial.
A \emph{kernelization } algorithm for a parameterized problem $\Pi$ is a polynomial algorithm that maps each instance $(I,k)$ of $\Pi$ to an instance $(I',k')$ of $\Pi$ such that
\begin{itemize}
\item[(i)] $(I,k)$ is a yes-instance of $\Pi$ if and only if $(I',k')$ is a yes-instance of $\Pi$, and
\item[(ii)] $|I'|+k'$ is bounded by~$f(k)$ for a computable function~$f$.
\end{itemize}
Respectively, $(I',k')$ is a \emph{kernel} and $f$ is its \emph{size}. A kernel is \emph{polynomial} if $f$ is polynomial.

While it can be shown that every decidable parameterized problem is \classFPT if and only if it admits a kernel, it is unlikely that every problem in \classFPT has a polynomial kernel (see, e.g., \cite{FominLSM19} for the details). Still, even if a paramterized problem admits no polynomial kernel up to some complexity conjectures, sometimes we can reduce it to solving of a polynomial number of instances of the same problem such that the size of each instance is bounded by a polynomial of the parameter. 

Let $\Pi$ be a parameterized problem and let $f\colon \mathbb{Z}^+ \rightarrow \mathbb{Z}^+$ be a computable function.  A \emph{Turing kernelization} or \emph{Turing kernel} for $\Pi$ of size $f$ is an algorithm that decides whether an instance $(I,k)$ of $\Pi$ is a yes-instance in time polynomial in $|I|+k$, when given access to an oracle that decides whether $(I',k')$ is a yes-instance of $\Pi$ in a single step if $|I'|+k\leq f(k)$.

It is typical to describe a compression or kernelization algorithm as a series of \emph{reduction rules}. A reduction rule is a polynomial algorithm that takes as an input an instance of a problem and output another, usually reduced, instance. A reduction rule is \emph{safe} if the input instance is a yes-instance if and only if the output instance is a yes-instance.

In our paper we consider kernelization/compression for weighted problems.  Because we are using weights, we have to compress their values as well. For this, we follow the  approach proposed by Etscheid et al.~\cite{EtscheidKMR17} that is based on the algorithm for compressing numbers given by Frank and Tardos in~\cite{FrankT87}. We state the result of Frank and Tardos in the form given in~\cite{EtscheidKMR17}.

\begin{proposition}[\cite{FrankT87}]\label{prop:compression}
There is an algorithm that, given a vector $w\in \mathbb{Q}^h$ and an integer $N$, in polynomial time finds a vector $\bar{w}\in\mathbb{Z}^h$ with $\|\bar{w}\|_{\infty}\leq 2^{4h^3}N^{h(h+2)}$ such that $\sign(w\cdot b)=\sign(\bar{w}\cdot b)$ for all vectors $b\in \mathbb{Z}^h$ with $\|b\|_1\leq N-1$.
\end{proposition}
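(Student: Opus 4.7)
The proof is due to Frank and Tardos and rests on \emph{simultaneous Diophantine approximation}, implemented through the LLL lattice-basis reduction algorithm of Lenstra, Lenstra, and Lov\'asz. My plan is to follow that blueprint.

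The starting observation is that the sign of $w\cdot b$ is invariant under positive scaling of $w$, so it suffices to find a positive integer $q$ and an integer vector $\bar w$ such that $q\,w$ is coordinate-wise close to $\bar w$. If $|qw_i-\bar w_i|\leq \varepsilon$ for each $i$, then for every $b\in\mathbb{Z}^h$ with $\|b\|_1\leq N-1$,
$$
\bigl|q(w\cdot b)-(\bar w\cdot b)\bigr|\ \leq\ \sum_{i=1}^{h} |qw_i-\bar w_i|\cdot|b_i|\ \leq\ \varepsilon(N-1).
$$
Choosing $\varepsilon$ small enough forces this difference strictly below $1$. Since $\bar w\cdot b\in\mathbb{Z}$, the case $w\cdot b=0$ then immediately yields $\bar w\cdot b=0$ by integrality, which is the cleanest part of the argument.

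The key primitive is a standard consequence of LLL: given rationals $w_1,\ldots,w_h$ and a precision $\varepsilon>0$, one computes in polynomial time a positive integer $q\leq 2^{h(h+1)/4}\varepsilon^{-h}$ together with integers $p_1,\ldots,p_h$ satisfying $|qw_i-p_i|\leq \varepsilon$ for every $i$. Taking $\bar w:=(p_1,\ldots,p_h)$ and $\varepsilon\sim 1/N$ already produces a candidate with $\|\bar w\|_\infty \leq 2^{\Oh(h^2)} N^h$, which matches the spirit (though not quite the exponent) of the claimed bound.

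The main obstacle — and the source of the cubic exponent in $h$ — is the case in which $w\cdot b$ is nonzero but so small that $|q(w\cdot b)|<\varepsilon(N-1)$; then $\sign(\bar w\cdot b)$ could disagree with $\sign(w\cdot b)$, because a lower bound on nonzero $|w\cdot b|$ would involve the common denominator of $w$, which is not under our control. To avoid this dependence, Frank and Tardos \emph{iterate} the procedure: they apply the approximation to $w$, then to the rescaled output $\bar w/q$ with refined parameters, and so on, for $\Oh(h)$ rounds. At each round the sign invariant $\sign(w\cdot b)=\sign(w^{(i)}\cdot b)$ is preserved for all $b$ with $\|b\|_1\leq N-1$, and tracking the telescoped coefficient growth through the iterations yields the final bound $\|\bar w\|_\infty\leq 2^{4h^3}N^{h(h+2)}$. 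Thus the write-up reduces to (i) invoking the simultaneous Diophantine approximation lemma as a black-box consequence of LLL, (ii) designing the iteration and verifying the sign invariant at every step — which is the technical core of the proof — and (iii) bookkeeping the norm growth to reach the claimed explicit bound.
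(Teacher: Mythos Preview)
The paper does not prove this proposition at all: it is quoted as a black-box result of Frank and Tardos~\cite{FrankT87}, stated in the form given by Etscheid et al.~\cite{EtscheidKMR17}, and then applied directly in the kernelization arguments. There is therefore no ``paper's own proof'' to compare against.

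That said, your sketch is broadly the right outline of the original Frank--Tardos argument: LLL-based simultaneous Diophantine approximation gives the basic rounding, and an iterated refinement over $\Oh(h)$ rounds eliminates the dependence on the denominators of $w$ and yields the cubic exponent in $h$. What you have written is a plan rather than a proof --- step~(ii), the design of the iteration and the verification that the sign invariant survives each round, is exactly where all the work lies, and you have deferred it entirely. If you intend to actually write this up, you would need to specify the per-round parameters, prove the invariant inductively, and do the norm bookkeeping; otherwise, citing the result as the present paper does is the appropriate course.
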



\section{Subexponential algorithms for induced $d$-colorable subgraphs}\label{sec:subexpalgo}
To construct subexponential algorithms on $\chordalmke$, we consider tree decompositions such that each bag is ``almost'' a clique. 

\begin{definition}\label{def:almost}
Let $k$ be a nonnegative integer. 
A tree decomposition 
$\mathcal{T}=(T,\{X_t\}_{t\in V(T)})$ of a graph $G$ is \emph{$k$-almost chordal} if for every $t\in V(T)$, $\cliqc(G[X_t])\leq k$, that is, every bag can be converted to a clique by adding at most $k$ edges.
\end{definition}

Note that every chordal graph has $0$-almost chordal tree decomposition.  

Given a $k$-almost chordal tree decomposition, we are able to construct  dynamic programming algorithms that are subexponential in $k$ for various problems. The crucial property of the graphs in $\chordalmke$ is that we are able to construct nice $k$-almost chordal tree decompositions for them in  subexponential in $k$ time by making use of the following result of Fomin and Villanger~\cite{FominV13}. 

\begin{proposition}[\cite{FominV13}]\label{prop:FV12} 
Deciding whether graph $G$ is in $\chordalmke$ can be done in time $2^{\Oh(\sqrt{k}\log{k})} +\Oh(k^2nm)$. Moreover, if $G\in \chordalmke$, then the corresponding triangulation can be found in time $2^{\Oh(\sqrt{k}\log{k})} +\Oh(k^2nm)$. 
\end{proposition}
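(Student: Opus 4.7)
The plan is in three parts, matching the two terms in the stated running time. First, invoke the polynomial-time approximation algorithm of Proposition~\ref{prop:fillin-appr}: in time $\Oh(k^2 nm)$ it either certifies $\fillin(G) > k$ (and we reject), or produces a concrete triangulation $H$ of $G$ with $|E(H) \setminus E(G)| \le 8k^2$. The graph $H$ accounts for the additive $\Oh(k^2 nm)$ term and, more importantly, provides a combinatorial scaffold (a tree decomposition of width $\poly(k)$) that will guide the subsequent enumeration.

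Second, I would reduce exact computation of $\fillin(G)$ to the Bouchitt\'e--Todinca framework, which computes minimum fill-in (and reconstructs a witness triangulation by a standard traceback) via dynamic programming over the set $\Delta(G)$ of minimal separators of $G$ and the set $\Pi(G)$ of potential maximal cliques, in time polynomial in $n + |\Delta(G)| + |\Pi(G)|$. The task thus reduces to enumerating, in time $2^{\Oh(\sqrt{k}\log k)} \cdot n^{\Oh(1)}$, a family of minimal separators and PMCs rich enough to contain those appearing in some minimum triangulation of $G$.

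Third, and this is where the subexponential behaviour enters, one proves a combinatorial bound: if $\fillin(G) \le k$, then $G$ admits a minimum triangulation whose clique-tree separators all have size $\Oh(\sqrt{k})$, and the number of minimal separators (and compatible PMCs) that could occur in such a triangulation is $2^{\Oh(\sqrt{k}\log k)} \cdot n^{\Oh(1)}$. The size bound comes from a charging argument: any separator $S$ in the clique tree of a minimum triangulation $H^*$ must be completed to a clique, and if the original edge density inside $S$ is low this already consumes $\Omega(|S|^2)$ units of the budget $k$; balancing this against $k$ along a balanced decomposition of the clique tree pins $|S|$ at $\Oh(\sqrt{k})$. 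The counting bound is obtained by a guided branching procedure whose measure is the remaining fill-in budget: each candidate separator is forced to lie within a bounded union of bags of the approximate triangulation $H$ from step one, and the resulting branching vector yields $2^{\Oh(\sqrt{k}\log k)}$ leaves; along the way one harvests the PMCs induced by these separators.

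The main obstacle is precisely this combinatorial step — proving simultaneously that (a) some minimum triangulation uses only separators of size $\Oh(\sqrt{k})$, (b) the relevant minimal separators can be enumerated within the $2^{\Oh(\sqrt{k}\log k)}$ budget using the scaffold $H$, and (c) sufficiently many PMCs are produced for Bouchitt\'e--Todinca to reach the optimum. Once these are in hand, feeding the enumerated families into the Bouchitt\'e--Todinca dynamic program yields both $\fillin(G)$ and an explicit minimum triangulation within the promised $2^{\Oh(\sqrt{k}\log k)} + \Oh(k^2 nm)$ time.
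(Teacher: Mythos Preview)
This proposition is not proved in the paper; it is quoted as a black box from Fomin and Villanger~\cite{FominV13}, so there is no ``paper's own proof'' to compare against. Your sketch is therefore an attempt to reconstruct the argument of the cited reference rather than of the present paper.

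That said, the overall architecture you describe --- Bouchitt\'e--Todinca dynamic programming over a family of potential maximal cliques whose cardinality is bounded by $2^{\Oh(\sqrt{k}\log k)}\cdot n^{\Oh(1)}$ --- is indeed the engine of~\cite{FominV13}. However, your step (a) is wrong as stated: it is \emph{not} true that some minimum triangulation uses only separators of size $\Oh(\sqrt{k})$. Take two vertices $a,b$ each adjacent to every vertex of a clique $\{c_1,\ldots,c_n\}$, with $a\not\sim b$, and delete the single edge $c_1c_2$. This graph has fill-in $1$, yet in every minimum triangulation the unique minimal separator has size $n$. Your charging argument ``completing $S$ to a clique costs $\Omega(|S|^2)$'' fails precisely because $S$ may already be a near-clique in $G$. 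The actual bound in~\cite{FominV13} is on the \emph{number} of relevant potential maximal cliques, not on the \emph{size} of separators; it is obtained by first passing to a polynomial kernel (so $n$ becomes $\poly(k)$) and then running a careful enumeration of PMCs there, rather than by any global size bound on separators. So while your high-level plan is sound, the combinatorial core you propose for the subexponential bound would not go through and needs to be replaced by the kernel-then-enumerate route.
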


\begin{lemma}\label{lem:constr-almost}
A nice $k$-almost chordal decomposition of a graph $G\in\chordalmke$ with at most $n^2$ bags can be constructed in time $2^{\Oh(\sqrt{k}\log{k})}\cdot n^{\Oh(1)}$. 
\end{lemma}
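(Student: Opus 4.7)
The plan is to combine Proposition~\ref{prop:FV12} with Proposition~\ref{lem02:nice-tw}, observing that any chordal supergraph of $G$ obtained by adding at most $k$ edges automatically gives a tree decomposition of $G$ all of whose bags are within $k$ edges of being cliques.

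More concretely, first I would invoke Proposition~\ref{prop:FV12} on $G$. Since $G\in\chordalmke$, this returns in time $2^{\Oh(\sqrt{k}\log k)}\cdot n^{\Oh(1)}$ a triangulation $H=G+A$ with $|A|\leq k$. Next I would apply Proposition~\ref{lem02:nice-tw} to $H$ in order to obtain, in polynomial time, a nice tree decomposition $\mathcal{T}=(T,\{X_t\}_{t\in V(T)})$ of $H$ with at most $\Oh(n^2)$ nodes such that every bag $X_t$ is a clique of $H$. Since $V(H)=V(G)$ and $E(G)\subseteq E(H)$, $\mathcal{T}$ is also a tree decomposition of $G$.

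It remains to verify that this decomposition is $k$-almost chordal. Fix a bag $X_t$. Because $X_t$ is a clique in $H$, the non-edges of $G[X_t]$ are precisely the pairs in $\binom{X_t}{2}\cap A$. Hence $\cliqc(G[X_t])\leq |\binom{X_t}{2}\cap A|\leq |A|\leq k$, which is exactly the condition of Definition~\ref{def:almost}. Summing the running times of the two stages gives the claimed $2^{\Oh(\sqrt{k}\log k)}\cdot n^{\Oh(1)}$ bound, and the bag count bound is inherited directly from Proposition~\ref{lem02:nice-tw}.

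There is no real obstacle here; the only subtlety to watch for is making sure that the tree decomposition produced by Proposition~\ref{lem02:nice-tw} (which is stated for a chordal graph) is used as a decomposition of $G$ rather than of $H$, and to note that the ``clique'' property of the bags in $H$ translates directly to the ``almost clique'' property in $G$ through the bound on $|A|$. This is why it is essential that $H$ differs from $G$ by at most $k$ edges globally, not merely locally in each bag.
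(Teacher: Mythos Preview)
Your proposal is correct and follows essentially the same approach as the paper: apply Proposition~\ref{prop:FV12} to obtain a triangulation $H=G+A$ with $|A|\leq k$, then apply Proposition~\ref{lem02:nice-tw} to $H$, and observe that each bag, being a clique of $H$, differs from a clique of $G$ by at most $|A|\leq k$ edges. The paper's proof is virtually identical in structure and content.
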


\begin{proof}
Let $G$ be a graph. By Proposition~\ref{prop:FV12}, in time $2^{\Oh(\sqrt{k}\log{k})} +\Oh(k^2nm)$ we can construct a triangulation $H$ of $G$ 
such that  $H=G+A$ for $A\subseteq\binom{V(G)}{2}$ of size at most $k$.
Using Proposition~\ref{lem02:nice-tw}, we construct in polynomial time a nice tree decomposition $\mathcal{T}=(T,\{X_t\}_{t\in V(T)})$ of $H$ with at most $n^2$ bags such that 
every bag $X_t$ is a clique of $H$. Clearly, $\mathcal{T}$ is a nice tree decomposition of $G$ and $G[X_t]=H[X_t]-A$ for $t\in V(T)$. Therefore, $\cliqc(G[X_t])\leq k$ for $t\in V(T)$, that is, 
$\mathcal{T}$ is a  nice $k$-almost chordal decomposition of $G$.
\end{proof}

We need the following folklore observation whose proof is provided for completeness. 

\begin{lemma}\label{lem:deg-cliques}
A $d$-degenerate graph $G$ has at most $2^d\cdot n$ cliques  and all cliques of $G$ can be listed in  $2^d\cdot n^{\Oh(1)}$ time.
\end{lemma}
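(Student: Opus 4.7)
The plan is to exploit a degeneracy ordering of $G$. Since $G$ is $d$-degenerate, the standard greedy algorithm (repeatedly remove a vertex of minimum degree in the current graph) produces, in polynomial time, an ordering $v_1,v_2,\ldots,v_n$ of $V(G)$ such that for every $i$, $v_i$ has at most $d$ neighbors among $\{v_{i+1},\ldots,v_n\}$; equivalently, this is a $d$-coloring ordering in the sense recalled in Section~\ref{sec:prelim}. Denote by $R(v_i)$ this set of ``later'' neighbors, so $|R(v_i)|\leq d$.

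For the counting bound, I will charge every nonempty clique $K$ of $G$ to its smallest-indexed vertex $v_i\in K$. Because every other vertex of $K$ is a neighbor of $v_i$ and has larger index, we have $K\setminus\{v_i\}\subseteq R(v_i)$. Consequently, the number of cliques charged to $v_i$ is at most $2^{|R(v_i)|}\leq 2^d$, and summing over $i$ gives at most $n\cdot 2^d$ cliques in total (adding the empty clique changes the bound by at most $1$ and is absorbed by the constants).

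For the listing step, I will iterate over $i=1,\ldots,n$ and, for each vertex $v_i$, enumerate all $2^{|R(v_i)|}\leq 2^d$ subsets $S\subseteq R(v_i)$. For each such $S$ I check in time $\Oh(|S|^2)=\Oh(d^2)$ whether $S\cup\{v_i\}$ induces a clique, and output it if so. The correctness follows from the charging argument above: every clique arises exactly once, namely when $v_i$ is its smallest-indexed vertex. The total running time is $n\cdot 2^d\cdot \poly(n)$ plus the polynomial time to compute the degeneracy ordering, which is $2^d\cdot n^{\Oh(1)}$, as required.

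There is no real obstacle here; the only thing to double-check is the existence and polynomial-time computability of a $d$-coloring ordering for a $d$-degenerate graph, which is the classical fact recalled in the preliminaries.
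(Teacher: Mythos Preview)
Your proposal is correct and follows essentially the same argument as the paper: take a $d$-coloring (degeneracy) ordering, charge each clique to its earliest vertex $v_i$, observe that the rest of the clique lies in the at most $d$ later neighbors of $v_i$, and enumerate subsets of these later neighbors to list all cliques. The paper's proof is identical in structure, differing only in wording.
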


\begin{proof}
Let $G$ be a $d$-degenerate graph, and let $v_1,\ldots,v_n$ be a $d$-coloring ordering of its vertices.
Since $d_{G[v_i,\ldots,v_n]}(v_i)\leq d$, $G[v_i,\ldots,v_n]$ has at most $2^d$ cliques containing $v_i$ for every $i\in\{1,\ldots,n\}$. Therefore, $G$ has at most $2^d\cdot n$ cliques.  The cliques can be enumerated by brute force checking the subsets of the neighbors of $v_i$ in 
$G[v_i,\ldots,v_n]$ for every $i\in\{1,\ldots,n\}$. Because a $d$-coloring ordering can be found in polynomial (in fact, linear)
 time~\cite{MatulaB83}, the total running time is  $2^d\cdot n^{\Oh(1)}$. \end{proof}

The crux of our subexponential algorithms is in the following combinatorial lemma. 
\begin{lemma}\label{maincomblemma}
Let   $d\geq 1$ be an integer. 
Let $G$ be a 
 graph and let $\mathcal{F}$ be a set of induced $d$-colorable subgraphs of $G$. 
Let $U\subseteq V(G)$ be a set of vertices of $G$ such that $\cliqc(G[U])\leq k$, that is, $U$ can be made a clique by adding at most $k$ edges. 
Then 
\begin{itemize}
\item for every $F\in \mathcal{F}$, 
\[
|U\cap V(F)| \leq   \frac{3d+\sqrt{d^2 +8dk}}{2},   
\] and
\item 
the size of  the projection of $\mathcal{F}$ on $U$, that is, the size of the family 
\[\mathcal{S} =\{S \mid S=U\cap V(F) \text{ for some }F\in\mathcal{F} \}
\]
is at most
$(1+2^{(\sqrt{1+8k}-1)/2}\cdot |U|)^d$. 
\end{itemize}
Moreover, there is an algorithm that in time 
$2^{\Oh(d\sqrt{k})}\cdot n^{\Oh(d)}$ outputs a family of sets $\mathcal{S}'\supseteq  \mathcal{S}$ such that  each set from $\mathcal{S}'$ has 
 at most $ \frac{3d+\sqrt{d^2 +8dk}}{2}$ vertices, the number of sets in $\mathcal{S}'$  is  $(1+2^{(\sqrt{1+8k}-1)/2}\cdot n)^d$
 and  $G[S]$ is $d$-colorable for $S\in\mathcal{S}'$.
\end{lemma}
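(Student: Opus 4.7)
The plan is to exploit the equivalence $\cliqc(G[U]) \le k \Longleftrightarrow |E(\overline{G[U]})| \le k$, so that the problem reduces to counting in a graph with at most $k$ edges. The three ingredients I would combine are: (a) independent sets of $G[U]$ correspond bijectively to cliques of $\overline{G[U]}$; (b) a graph with at most $k$ edges has small degeneracy, so by Lemma~\ref{lem:deg-cliques} it has few cliques; (c) a $d$-colorable induced subgraph of $G[U]$ is a union of $d$ independent sets of $G[U]$.

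For the size bound (first bullet) I would fix $F\in\mathcal{F}$, let $S = U\cap V(F)$, and use that $G[S]$ is $d$-colorable to partition $S$ into independent sets $I_1,\dots,I_d$ of $G[U]$ with $|I_i|=s_i$. Each $I_i$ is a clique of $\overline{G[U]}$, so together they contribute $\sum_{i=1}^d \binom{s_i}{2}$ edges to the edge-budget of $\overline{G[U]}$, which is at most $k$. Applying convexity (or Cauchy--Schwarz) to $\sum \binom{s_i}{2}$ subject to $\sum s_i = |S|$ yields $\frac{|S|(|S|-d)}{2d} \le k$, and hence $|S| \le \frac{d+\sqrt{d^2+8dk}}{2}$, which trivially lies below the stated bound $\frac{3d+\sqrt{d^2+8dk}}{2}$.

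For the projection-count bound I first bound the degeneracy of $\overline{G[U]}$. In any subgraph on $n'$ vertices with at most $k$ edges, the minimum degree is at most $\min(n'-1,\, 2k/n')$; maximizing over $n'$ at the balance point $n'(n'-1)=2k$ yields degeneracy at most $(\sqrt{1+8k}-1)/2$. Then Lemma~\ref{lem:deg-cliques} gives at most $2^{(\sqrt{1+8k}-1)/2}\cdot |U|$ nonempty cliques of $\overline{G[U]}$, equivalently nonempty independent sets of $G[U]$; adding the empty set yields at most $1+2^{(\sqrt{1+8k}-1)/2}\cdot|U|$ independent sets. Since every $S\in\mathcal{S}$ admits a partition into $d$ (possibly empty) independent sets of $G[U]$, the number of such ordered $d$-tuples, hence $|\mathcal{S}|$, is at most $\bigl(1+2^{(\sqrt{1+8k}-1)/2}\cdot|U|\bigr)^d$.

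Algorithmically I would mirror the counting proof: enumerate all independent sets of $G[U]$ by listing the cliques of $\overline{G[U]}$ via Lemma~\ref{lem:deg-cliques} in time $2^{\Oh(\sqrt{k})}\, n^{\Oh(1)}$, iterate over all ordered $d$-tuples, emit their unions as candidates, and discard any of size exceeding $\frac{3d+\sqrt{d^2+8dk}}{2}$. Each emitted set is $d$-colorable by construction, every $S\in\mathcal{S}$ appears as the union of a witnessing $d$-coloring, and the total work is $\bigl(1+2^{(\sqrt{1+8k}-1)/2}\cdot n\bigr)^d\cdot n^{\Oh(1)} = 2^{\Oh(d\sqrt{k})}\, n^{\Oh(d)}$. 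There is no single hard step; the main subtlety is the degeneracy estimate for $\overline{G[U]}$, which is what converts the ``almost-clique'' hypothesis on $U$ into a usable algorithmic handle via Lemma~\ref{lem:deg-cliques}.
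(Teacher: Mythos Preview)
Your argument is correct, and for the projection-count bound and the algorithmic part it matches the paper's proof essentially verbatim (degeneracy bound on $\overline{G[U]}$, Lemma~\ref{lem:deg-cliques}, unions of $d$ cliques).

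For the first bullet you take a different and in fact cleaner route than the paper. The paper splits $U$ into the set $Y$ of vertices adjacent to everything in $U$ (a genuine clique, contributing at most $d$ to $|S|$) and the remaining set $X$, then bounds $|S\cap X|$ via Tur\'an's theorem applied to the $d$-colorable graph $G[S\cap X]$ on $x$ vertices, obtaining $x\le \frac{d+\sqrt{d^2+8dk}}{2}$ and hence $|S|\le \frac{3d+\sqrt{d^2+8dk}}{2}$. You instead work directly with a $d$-coloring of $G[S]$ and use the convexity of $\binom{\cdot}{2}$ on the color-class sizes to get $\frac{|S|(|S|-d)}{2d}\le k$, which already yields $|S|\le \frac{d+\sqrt{d^2+8dk}}{2}$---a bound tighter by $d$ than the statement and obtained without the $X/Y$ split or Tur\'an. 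Both arguments are short; yours is more elementary and sharper, while the paper's decomposition makes the ``almost-clique'' structure of $U$ more visible.
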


\begin{proof} 
We partition  $U$ into sets $X$ and  $Y$ as follows. Let
 $X$ be the vertices of $U$ that have at least one non-neighbor in $U$. In other words, for every $v\in X$ there is $u\in U$ that is not adjacent to $v$. Two observations about set $X$ will be useful. First, because $U$, and hence $X$, can be turned into a clique by adding at most $k$ edges, we have that $|X|\leq 2k$. Second, the remaining vertices of $U$, namely, $Y=U\setminus X$, form a clique.  For 
every set $S\in \mathcal{S}$, we define $S_X=X\cap S$ and  $S_Y=Y\cap S$. Note that $S=S_X\cup S_Y$.

Because $Y $ is a clique in $G$, no $d$-colorable subgraph from $\mathcal{F}$ can contain more than $d$ vertices from  $Y$. Hence, $|S_Y|\leq d$.

Let $x=|S_X|$. Because  $G[S_X]$ is an induced subgraph of some 
$d$-colorable graph $F\in \mathcal{F}$, we have that 
$G[S_X]$ is 
 $d$-colorable.  On the other hand, since  $\cliqc(G[U])\leq k$, $G[S_X]$ can be turned into  complete graph by adding at most $k$ edges. These two conditions are used to  estimate $x$. Let us recall that 
\emph{Tur\'an graph}  is the complete $d$-partite graph on $x$ vertices whose partition sets differ in size by at most $1$.  
According to Tur\'an's  theorem, see e.g. \cite{Diestel12}, Tur\'an graph has the maximum possible number of edges among all $d$-colorable graphs. The number of edges in Tur\'an's graph is at most $\frac{1}{2}x^2 \frac{d-1}{d}$. Thus, 
\[
\binom{x}{2}-k\leq |E(G[S_X])|\leq \frac{1}{2}x^2 \frac{d-1}{d}
\]
and
\[
k\geq \binom{x}{2} -\frac{1}{2}x^2 \frac{d-1}{d}=\frac{x^2-dx}{2d}.
\]
Therefore, \[x\leq \frac{d+\sqrt{d^2 +8dk}}{2}.
\]

We obtain that 
\[
|S|=|S_X|+|S_Y|\leq x +d\leq \frac{3d+\sqrt{d^2 +8dk}}{2},
\]
which proves the first claim of the lemma. 

To prove the second claim, let $H=G[U]$. Observe that $\overline{H}$, the complement  of $H$, has at most $k$ edges. Consider $Z\subseteq V(H)$. If $|Z|\leq \frac{\sqrt{1+8k}+1}{2}$, then the minimum degree $\delta(\overline{H}[Z])\leq \frac{\sqrt{1+8k}-1}{2}$. If $|Z|>\frac{\sqrt{1+8k}+1}{2}$, then
$$\delta(\overline{H}[Z])\leq \frac{2|E(\overline{H}[Z])|}{|Z|}\leq \frac{4k}{\sqrt{1+8k}+1}=\frac{\sqrt{8k+1}-1}{2},$$
that is, the minimum degree of every induced subgraph of $\overline{H}$ is at most $\frac{\sqrt{8k+1}-1}{2}$.
Therefore,  $\overline{H}$ is $\frac{\sqrt{1+8k}-1}{2}$-degenerate. 

An induced subgraph of $H$ with the vertex set $S$ is $d$-colorable if and only if $S$  can be partitioned into at most $d$ independent sets. Equivalently, 
$H[S]$ is $d$-colorable if and only if $S$ can be partitioned into at most $d$ cliques of $\overline{H}$. By Lemma~\ref{lem:deg-cliques}, $\overline{H}$
has at most $2^{(\sqrt{8k+1}-1)/2} \cdot |U|$ cliques. Then $U$ contains at most $(1+2^{(\sqrt{1+8k}-1)/2}\cdot |U|)^d$ subsets $S$ such that $S$ can  
 be partitioned into at most $d$ cliques of $\overline{H}$. 
 
 To complete the proof, observe that the cliques of $\overline{H}$ can be listed in time $2^{(\sqrt{1+8k}-1)/2}\cdot n^{\Oh(1)}$ by Lemma~\ref{lem:deg-cliques}.
 Then in  $2^{\Oh(d\sqrt{k})}\cdot n^{\Oh(d)}$ time, we construct $\mathcal{S}'$ by considering the unions of $d$ cliques;  the cliques can be the same or empty. 
\end{proof}

 Let $G$ be a graph and let $F$ be an induced $d$-colorable subgraph of $G$.  Informally, Lemma~\ref{maincomblemma} says that for a given a $k$-almost chordal tree decomposition,  every bag of this tree decomposition contains roughly $\Oh(d+\sqrt{dk})$ vertices of $F$. This statement combined with dynamic programming over the tree decomposition  could easily bring us to the algorithm computing  a maximum $d$-colored subgraph of $G$ in time  
 $n^{\Oh(d+\sqrt{dk})}$. However, this is not what we are shooting for;  such an  algorithm is not fixed-parameter tractable with parameter $k$. This is where the second part of the lemma becomes extremely helpful. Let us look at the family of all $d$-colorable induced subgraphs of $G$. Then the number of different intersections of the graphs from this family with a single bag of the tree decomposition is  bounded by 
 $2^{\Oh(d\sqrt{k})}\cdot n^{\Oh(d)}$.
 This allows us to bound the number of ``partial solutions'' in the dynamic programming, which in turn brings us to a parameterized subexponential algorithm. As an example of the applicability of Lemma~\ref{maincomblemma}, we  give an algorithm for the following generic problem. 
 
\defproblema{\probWdcS}{Graph $G$ with weight function $w\colon V(G) \to \Bbb{R}$. }{Find a  $d$-colorable induced subgraph $H$ of $G$ of maximum weight $\sum_{v\in V(H)} w(v)$.}

For $d=1$, this is the problem of finding an independent set of maximum weight, the \probWIS problem. For $d=2$, this is the problem of finding an induced bipartite subgraph of maximum weight,  \probWBS.

We prove the following theorem for \probWdcS. Note that in Theorem~\ref{theorem-dcolor} we do not require that the input graph is in \chordalmke.

\begin{theorem}\label{theorem-dcolor} Let  $d\geq 1$ be an integer. For a given 
  graph $G$ with a  nice $k$-almost chordal tree decomposition with $n^{\Oh(1)}$ bags, the 
\probWdcS  problem is solvable in time 
$2^{\Oh(\sqrt{k}\cdot d\log d)} \cdot n^{\Oh(d)}$. 
\end{theorem}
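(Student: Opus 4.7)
The plan is to run dynamic programming on the given nice $k$-almost chordal tree decomposition $\mathcal{T}=(T,\{X_t\}_{t\in V(T)})$ with $n^{\Oh(1)}$ bags, using Lemma~\ref{maincomblemma} to keep the state space at each bag subexponential in $k$. I first invoke the algorithm of Lemma~\ref{maincomblemma} with $U=X_t$ at every node $t$ to obtain a family $\mathcal{S}'_t$ containing every intersection $V(F) \cap X_t$ with $F$ a $d$-colorable induced subgraph of $G$. By the lemma, $|\mathcal{S}'_t| \leq n^{\Oh(d)} \cdot 2^{\Oh(d\sqrt{k})}$, every $S \in \mathcal{S}'_t$ has $|S| = \Oh(d+\sqrt{dk})$, and $G[S]$ is $d$-colorable; the precomputation runs in time $n^{\Oh(1)} \cdot 2^{\Oh(d\sqrt{k})} \cdot n^{\Oh(d)}$.

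For each $S \in \mathcal{S}'_t$ and each proper $d$-coloring $c : S \to \{1, \dots, d\}$ of $G[S]$, define
$$D[t, S, c] = \max \{\, w(V(F)) : F \text{ is an induced $d$-colorable subgraph of } G[V_t],\ V(F)\cap X_t = S,\ c \text{ extends to a proper $d$-coloring of } F \,\},$$
with $D[t, S, c] = -\infty$ when no such $F$ exists. Transitions follow the usual tree-decomposition recipe. At an introduce node $t$ with child $t'$ introducing $v$, if $v \notin S$ I set $D[t,S,c] = D[t', S, c]$, and if $v \in S$ I set $D[t,S,c] = D[t', S \setminus \{v\}, c|_{S \setminus \{v\}}] + w(v)$ provided $c(v) \neq c(u)$ for every $u \in N_G(v) \cap S$. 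At a forget node with child $t'$ forgetting $w$, $D[t,S,c]$ is the maximum of $D[t', S, c]$ and of the values $D[t', S \cup \{w\}, c \cup \{w \mapsto i\}]$ over colors $i$ making the extended coloring proper (with the convention that lookups into cells indexed by $S' \notin \mathcal{S}'_{t'}$ return $-\infty$). At a join node with children $t_1, t_2$, $D[t,S,c] = D[t_1, S, c] + D[t_2, S, c] - w(S)$ where $w(S)=\sum_{v\in S} w(v)$. The optimum is $D[r, \emptyset, \emptyset]$.

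Correctness relies on three facts. First, $\mathcal{S}'_t$ is a superset of the true projections of $d$-colorable subgraphs onto $X_t$, so when a DP lookup falls outside $\mathcal{S}'_{t'}$ we may safely return $-\infty$. Second, the coloring $c$ tracked in the state enforces color-compatibility between the two subtree solutions at a join node and rules out monochromatic edges between a newly introduced vertex and its previously introduced neighbors; crucially, at a join the only vertices shared by the two subtrees lie in $X_t$ by property (T3), so matching colorings on $S$ suffice to combine the two solutions into one proper $d$-coloring. Third, the $-w(S)$ correction removes the double count of $S$.

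For the running time, the number of DP entries per bag is bounded by
$$|\mathcal{S}'_t| \cdot d^{|S|} = n^{\Oh(d)} \cdot 2^{\Oh(d\sqrt{k})} \cdot 2^{\Oh((d+\sqrt{dk})\log d)} = n^{\Oh(d)} \cdot 2^{\Oh(\sqrt{k}\, d\log d)},$$
since $d\sqrt{k}$, $d\log d$, and $\sqrt{dk}\log d$ are each $\Oh(\sqrt{k}\, d\log d)$ for $k \geq 1$, and the residual $d^{\Oh(d)}$ factor at $k=0$ is absorbed into $n^{\Oh(d)}$ under the harmless assumption $d \leq n$. Each transition is evaluated in time polynomial in this state count, and there are $n^{\Oh(1)}$ bags, giving total time $2^{\Oh(\sqrt{k}\, d \log d)} \cdot n^{\Oh(d)}$. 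The main obstacle is precisely keeping the state space subexponential: without Lemma~\ref{maincomblemma} a bag of size $\Theta(n)$ with fill-in $k$ could contribute $2^{\Theta(n)}$ relevant subsets, destroying the subexponential-in-$k$ behavior; tracking an explicit $d$-coloring is then the minimum extra information needed to make the join transition sound.
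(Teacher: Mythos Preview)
Your proposal is correct and follows essentially the same approach as the paper: dynamic programming over the nice $k$-almost chordal tree decomposition, with states $(t,S,c)$ where $S$ ranges over the $d$-colorable subsets of $X_t$ supplied by Lemma~\ref{maincomblemma} and $c$ over proper $d$-colorings of $G[S]$, using the standard introduce/forget/join recurrences (including the $-w(S)$ correction at joins) and the same running-time analysis. The only cosmetic differences are that you make the precomputation of the families $\mathcal{S}'_t$ and the $-\infty$ convention for out-of-table lookups explicit, and you note the harmless absorption of $d^{\Oh(d)}$ into $n^{\Oh(d)}$ when $k=0$.
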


\begin{proof}
Let $\mathcal{T}=(T,\{X_t\}_{t\in V(T)})$ be a nice $k$-almost chordal tree decomposition of $G$ with $|V(T)|=n^{\Oh(1)}$.
 We perform dynamic programming over $\mathcal{T}$. Let us note that the width of the  decomposition can be of order of $n$. The proof of the correctness for  this dynamic programming is very similar to the one provided normally for graphs of bounded treewidth. However the running time analysis is based on Lemma~\ref{maincomblemma}. 

Recall that $T$ is rooted at some node $r$, and for a node $t$ of $T$, $V_t$ denotes  the union of all the bags present in the subtree of $T$ rooted at $t$, including $X_t$. For vertex sets $X\subset X'$ of graph $G$, we say that a coloring $c$ of $G[X]$ is \emph{extendible} to a coloring $c'$ of $G[X']$, if for every $x\in X$, $c(v)=c'(v)$. 

 For every node $t$, every $S\subseteq X_t$ such that $G[S]$ is $d$-colorable, every mapping $c\colon S \to \{1, \dots, d\}$ of $G[S]$, we define the following value:
\begin{eqnarray}\label{def_of cost}
\cost[t,S,c] & = & {\textrm{maximum possible weight of a set $\widehat{S}$ such that}} \\
& & {\textrm{$S\subseteq \widehat{S}\subseteq V_t$, $\widehat{S}\cap X_t=S$,  and  $c$ is a proper coloring of $G[S]$  extendible}}\nonumber \\ 
& &
{\textrm{to a proper $d$-coloring of $G[\widehat{S}]$.\nonumber}} 
\end{eqnarray}

If $c$ is not a proper coloring of $G[S]$ or if  no such set $\widehat{S}$ exists, then we put $\cost[t,S,c]=-\infty$. 
We also put $\cost[t,\emptyset,c]$ be the maximum possible weight of a set $\widehat{S}$ such that  
 $\widehat{S}\subseteq V_t$, $\widehat{S}\cap X_t=\emptyset$, and  $G[\widehat{S}]$ is $d$-colorable.
Then $\cost[r,\emptyset,c]$ is exactly the maximum weight of a $d$-colorable induced subgraph   in $G$; this is due to the fact that $V_r=V(G)$ and $X_r=\emptyset$.

\medskip\noindent {\bf{Leaf node.}} If $t$ is a leaf node, then we have    $\cost[t,\emptyset,c]=0$.
In this case, because the leaf node is an empty bag, the correctness of the formula is trivial.

\medskip\noindent {\bf{Introduce node.}} Suppose $t$ is an introduce node with child $t'$ such that $X_t=X_{t'}\cup \{v\}$ for some $v\notin X_{t'}$. Let $S$ be any subset of $X_t$. If $c$ is not a proper    $d$-coloring of $G[S]$,   we   put $\cost[t,S,c]=-\infty$. Otherwise, we claim that the following formula holds: 
\begin{eqnarray}\label{eq:i}
\cost[t,S,c] & = & \begin{cases} \cost[t',S,c] & \textrm{ if $v\notin S$;}\\
 \cost[t',S\setminus \{v\},c']+\wei(v) & \textrm{ otherwise.} \end{cases}
\end{eqnarray}
Here $c'$ is the coloring of $S\setminus \{v\}$ extendible to  $c$. 

\begin{claim}\label{claim:introd} Formula \eqref{eq:i} is correct. 
\end{claim}

\begin{subproof}[Proof of Claim~\ref{claim:introd}]
When $v\not\in S$,  
 the formula trivially holds. Suppose that  $v \in S$. 
Let $\widehat{S}$ be a set maximizing  $\cost[t,S,c]$. Because coloring $c'$ of $S\setminus \{v\}$ is extendible to  coloring $c$ of ${S}$, which in turn is extendible to a $d$-coloring of $\widehat{S}$, we have that 
 set  $\widehat{S}\setminus \{v\}$ is  one of the sets considered in the definition of $\cost[t',S\setminus \{v\}, c']$. 
 Hence $\cost[t',S\setminus \{v\}, c']\geq \wei(\widehat{S}\setminus \{v\})=\wei(\widehat{S})-\wei(v)=\cost[t,S,c]-\wei(v)$.

 On the other hand, let $\widehat{S}'$ be a set for which the maximum is attained in the definition of $\cost[t',S\setminus \{v\},c']$. By the property of tree decompositions, all neighbors  of $v$ in $V_t$, are contained in the bag $X_t$. This yields that set  $\widehat{S}' \cup \{v\}$ induces a graph whose $d$-coloring can be obtained by extending coloring $c$ of $ {S} $, therefore, 
$\cost[t,S,c]\geq \wei(\widehat{S}'\cup \{v\})=\cost[t',S\setminus \{v\},c']+\wei(v) $. We conclude that
$\cost[t,S,c]=\cost[t',S\setminus \{v\},c']+\wei(v) $.
\end{subproof}

To evaluate the running time required to compute  \eqref{eq:i}, note that for fixed $S$ and $c$, we have to verify whether $c$ is a proper coloring of $S$, which can be done in time $\Oh(|S|^2)$ by going through all adjacencies of $G[S]$.

\medskip\noindent {\bf{Forget node.}} Let $t$ be a forget node with child $t'$ such that $X_t=X_{t'}\setminus \{w\}$ for some $w\in X_{t'}$. Let $S$ be any subset of $X_t$; again we assume that $c$ is a proper    $d$-coloring of $G[S]$, since otherwise we put $\cost[t,S, c]=-\infty$. We claim that the following formula holds:
\begin{eqnarray}\label{eq:f}
\cost[t,S,c] & = & \max\Big\{\cost[t',S,c],\max_{c \text{ is extendible to } c'}\cost[t',S\cup \{w\},c']\Big\}.
\end{eqnarray}

\begin{claim}\label{claim:forger} Formula \eqref{eq:f} is correct. 
\end{claim}

\begin{subproof}[Proof of Claim~\ref{claim:forger}]
Because $V_t=V_{t'}$, we have that 
\begin{eqnarray*}
\cost[t,S,c] & 
\geq  & \max\Big\{\cost[t',S,c],\max_{c \text{ is extendible to } c'}\cost[t',S\cup \{w\},c']\Big\}.
\end{eqnarray*}
On the other hand, 
  let $\widehat{S}$ be a set for which the maximum is attained in the definition of $\cost[t,S,c]$. If $w\notin \widehat{S}$, then $\cost[t,S,c] =\cost[t',S,c]$.
  If $w\in \widehat{S}$, then \[\cost[t,S,c] =\max_{c \text{ is extendible to } c'}\cost[t',S\cup \{w\},c'].\] 
  Thus 
  \begin{eqnarray*}
\cost[t,S,c] & 
\leq  & \max\Big\{\cost[t',S,c],\max_{c \text{ is extendible to } c'}\cost[t',S\cup \{w\},c']\Big\}.
\end{eqnarray*}
\end{subproof}
For running time, to compute \eqref{eq:f}, one has to go through all colorings $c'$ of $S\cup \{w\}$ which are extensions of coloring $c$ of $S$. Thus for every possible color $i$ of $w$, we have to check if this color is  compatible with the coloring by $c$  the neighbors of $w$ in $S$. This can be done in time $\Oh(|S|)$.  

\medskip\noindent {\bf{Join node.}} Finally, suppose that $t$ is a join node with children $t_1,t_2$ such that $X_t=X_{t_1}=X_{t_2}$. Let $S$ be any subset of $X_t$; as before, we can assume that    $c$ is a proper    $d$-coloring of $G[S]$. The claimed recursive formula is as follows:
\begin{eqnarray}\label{eq:j}
\cost[t,S,c] & = & \cost[t_1,S,c]+\cost[t_2,S,c]-\wei(S).
\end{eqnarray}

\begin{claim}\label{claim:join} Formula \eqref{eq:j} is correct. 
\end{claim}

\begin{subproof}[Proof of Claim~\ref{claim:join}]
Let  $\widehat{S}$   be a set for which the maximum is attained in the definition of $\cost[t,S,c]$ and   Let $\widehat{S}_1 = \widehat{S}\cap V_{t_1}$ and $\widehat{S}_2 = \widehat{S}\cap V_{t_2}$. Because 
$\widehat{S}\supseteq \widehat{S}_i$, we have that 
$\cost[t_i,S,c]\geq \wei(\widehat{S}_i)$,  $i=1,2$. This yields that 
\[\cost[t_1,S,c] +\cost[t_2,S,c] \geq \wei(\widehat{S}_1)+\wei(\widehat{S}_2)=\wei(\widehat{S})- \wei(\ {S})=\cost[t,S,c]. 
\]
On the other hand, the union of sets maximizing the  costs of  $\cost[t_1,S,c]$ and $\cost[t_2,S,c]$, is a set whose $d$-coloring can be obtained from extending coloring $c$ of $S$. Thus 
\[\cost[t_1,S,c] \geq \cost[t_1,S,c] +\cost[t_2,S,c]  - \wei(\ {S}). 
\]
\end{subproof}
The running time to compute   \eqref{eq:j} for fixed set $S$ and coloring $c$, is proportional to the time required to checked whether $c$ is a proper coloring. Again, this can be done in time $\Oh(|S|^2)$.

\medskip

This concludes the description and the proof of correctness of the recursive formulas for computing the values of $\cost[\cdot,\cdot,\cdot]$. The optimal subgraph of the maximum weight can be found by the standard backtracking arguments. Let us now  estimate the total running time.  The running time of our dynamic programming algorithm, up to a   multiplicative factor $\Oh(|S|^2)$, is  dominated by the number of triples $[t,S,c]$.  
The number $t$ is in $n^{\Oh(1)}$. Every set  $S$ should   induce a $d$-colorable subgraph, so we can restrict our attention only to sets of the form $X_t\cap V(F)$ for some $d$-colorable graph $F$. By Lemma~\ref{maincomblemma}, 
each of these sets is of size at most  $d+\frac{d+\sqrt{d^2 +8dk}}{2} $ and the total number of such  sets $S$ for each bag $X_t$ is 
 is $2^{\Oh(d\sqrt{k})} \cdot n^{\Oh(d)}$ and they can be listed in $2^{\Oh(d\sqrt{k})} \cdot n^{\Oh(d)}$  time.
  Thus, the number of $d$-colorings $c$ of each of the sets $S$ is  $d^{\Oh(|S|)}=d^{\Oh(d+\sqrt{dk})}$. Hence the total running time of the  dynamic programming is  
$2^{\Oh(\sqrt{k}\cdot d\log d)} \cdot n^{\Oh(d)}$.
\end{proof}

Combining Proposition~\ref{lem02:nice-tw},   Lemma~\ref{lem:constr-almost} and Theorem~\ref{theorem-dcolor}, we immediately obtain the following corollary.
Recall that $A\subseteq \binom{V(G)}{2}\setminus E(G)$ is a \chordal-modulator if $G+A$ is a chordal graph.

\begin{corollary}\label{cor:dcolor} \probWdcS on  a 
graph $G\in \chordalmke$ 
is solvable in time $2^{\Oh(\sqrt{k}(\log k+d\log d))} \cdot n^{\Oh(d)}$. Moreover, the problem can be  solved in $2^{\Oh(\sqrt{k}\cdot d\log {d})} \cdot n^{\Oh(d)}$ time
if a \chordal-modulator of size at most $k$ is given. 
\end{corollary}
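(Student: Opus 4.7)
The plan is to derive the corollary by chaining the three cited results in the natural way, treating Theorem~\ref{theorem-dcolor} as a black box that turns any nice $k$-almost chordal tree decomposition with polynomially many bags into an algorithm for \probWdcS of running time $2^{\Oh(\sqrt{k}\cdot d\log d)} \cdot n^{\Oh(d)}$. So the only thing we need to produce, given an input graph $G\in\chordalmke$, is such a decomposition; the time spent producing it will then be added to the running time of Theorem~\ref{theorem-dcolor}.

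For the first statement (no modulator provided), I would simply invoke Lemma~\ref{lem:constr-almost}, which constructs in time $2^{\Oh(\sqrt{k}\log k)}\cdot n^{\Oh(1)}$ a nice $k$-almost chordal tree decomposition of $G$ with at most $n^2$ bags. Feeding this decomposition to Theorem~\ref{theorem-dcolor} yields an algorithm of total running time
\[
2^{\Oh(\sqrt{k}\log k)}\cdot n^{\Oh(1)} + 2^{\Oh(\sqrt{k}\cdot d\log d)} \cdot n^{\Oh(d)} = 2^{\Oh(\sqrt{k}(\log k + d\log d))} \cdot n^{\Oh(d)},
\]
where the two exponential factors are merged into a single $\Oh(\sqrt{k}(\log k + d\log d))$ exponent.

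For the ``moreover'' part, observe that if a \chordal-modulator $A$ of size at most $k$ is already supplied, then the expensive step of Lemma~\ref{lem:constr-almost}, namely invoking Proposition~\ref{prop:FV12} to find the triangulation, is no longer needed: the graph $H=G+A$ is chordal by assumption. Following the rest of the proof of Lemma~\ref{lem:constr-almost}, I would apply Proposition~\ref{lem02:nice-tw} directly to $H$ in polynomial time to get a nice tree decomposition $(T,\{X_t\})$ with at most $\Oh(n^2)$ bags in which each $X_t$ is a clique of $H$. Since $G[X_t]=H[X_t]-A$ and $|A|\le k$, this is a nice $k$-almost chordal tree decomposition of $G$ obtained in polynomial time. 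Plugging it into Theorem~\ref{theorem-dcolor} gives the promised running time $2^{\Oh(\sqrt{k}\cdot d\log d)} \cdot n^{\Oh(d)}$.

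There is no real obstacle here; the mild point worth verifying is that the two exponents in the first statement combine into the single expression written in the corollary (which is straightforward since $\sqrt{k}\log k$ and $\sqrt{k}\cdot d\log d$ are both bounded by $\sqrt{k}(\log k + d\log d)$), and that omitting the Proposition~\ref{prop:FV12} step in the second statement really does eliminate the $2^{\Oh(\sqrt{k}\log k)}$ overhead so the $\log k$ term disappears from the exponent.
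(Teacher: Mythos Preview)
Your proposal is correct and follows exactly the approach the paper indicates: the paper simply states that the corollary is obtained ``immediately'' by combining Proposition~\ref{lem02:nice-tw}, Lemma~\ref{lem:constr-almost}, and Theorem~\ref{theorem-dcolor}, and your write-up spells out precisely this combination, including the observation that when a \chordal-modulator is given one can skip Proposition~\ref{prop:FV12} and apply Proposition~\ref{lem02:nice-tw} directly to $G+A$.
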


By Corollary~\ref{cor:dcolor}, we immediately derive the following. 
\begin{corollary}\label{cor:WIS}
\probWIS and  \probWBS on  $G\in\chordalmke$ are solvable in time  $2^{\Oh(\sqrt{k}\log{k})} \cdot n^{\Oh(1)}$. 
Moreover, the problems can be solved in $2^{\Oh(\sqrt{k})} \cdot n^{\Oh(1)}$ time if a \chordal-modulator of size at most $k$ is given. 
\end{corollary}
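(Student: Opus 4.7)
The plan is to obtain both bounds as direct specializations of Corollary~\ref{cor:dcolor}. A graph is $1$-colorable precisely when it has no edges, so \probWIS is exactly \probWdcS with $d=1$; a graph is $2$-colorable precisely when it is bipartite, so \probWBS is exactly \probWdcS with $d=2$. Therefore it suffices to substitute $d \in \{1, 2\}$ into the two running-time bounds of Corollary~\ref{cor:dcolor}.

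For the case when no \chordal-modulator is supplied, plugging $d \in \{1,2\}$ into $2^{\Oh(\sqrt{k}(\log k + d\log d))} \cdot n^{\Oh(d)}$ yields $2^{\Oh(\sqrt{k}\log{k})} \cdot n^{\Oh(1)}$, since the summand $d\log d$ is either zero (for $d=1$) or a constant (for $d=2$) and is absorbed into the $\sqrt{k}\log k$ term. For the case when a \chordal-modulator is supplied, substituting $d=2$ into $2^{\Oh(\sqrt{k} \cdot d \log d)} \cdot n^{\Oh(d)}$ gives $2^{\Oh(\sqrt{k})} \cdot n^{\Oh(1)}$ immediately. The only subtlety is the boundary at $d=1$, where the formal expression $2^{\Oh(\sqrt{k}\cdot d\log d)}$ collapses because $\log 1 = 0$; to handle this I would revisit the running-time analysis inside the proof of Theorem~\ref{theorem-dcolor} and expose the underlying product of the two factors: namely, $2^{\Oh(d\sqrt{k})} \cdot n^{\Oh(d)}$ candidate subsets per bag coming from Lemma~\ref{maincomblemma}, and $d^{\Oh(d+\sqrt{dk})}$ colorings per subset. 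Evaluating these at $d=1$ produces $2^{\Oh(\sqrt{k})}\cdot n^{\Oh(1)}$ candidate independent sets per bag and a single trivial ``coloring'' of each, giving the claimed $2^{\Oh(\sqrt{k})} \cdot n^{\Oh(1)}$ bound.

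The only point that requires mild care is the $d=1$ degeneracy noted above, which is purely a notational artifact of the $d\log d$ factor; there is no genuine obstacle and no new algorithmic ideas are needed beyond the substitution.
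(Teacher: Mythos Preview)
Your proposal is correct and matches the paper's approach exactly: the paper derives Corollary~\ref{cor:WIS} directly from Corollary~\ref{cor:dcolor} by specializing to $d=1$ and $d=2$, with no further argument. Your explicit handling of the $d=1$ degeneracy (where $d\log d=0$) by unpacking the two factors from the proof of Theorem~\ref{theorem-dcolor} is in fact more careful than the paper, which simply writes ``By Corollary~\ref{cor:dcolor}, we immediately derive the following.''
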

 
 In the \probWVC, we are given a weighted graph $G$ and the task is to find a vertex cover of minimum weight, that is, a set of vertices $X$ such that every edge of $G$ has at least one endpoint in $G$.  Similarly, in  the  \probWOCT, we are asked to find a set of vertices of minimum weight such that every cycle of odd length contains at least one vertex from the set. 
Since the  complement of every independent set  is a vertex cover, and the complement of every induced bipartite subgraph
is an odd cycle transversal, we have the following corollary.  
\begin{corollary}\label{cor:VC}
\probWVC and \probWOCT on  $ \chordalmke$ graphs are  solvable in time  $2^{\Oh(\sqrt{k}\log{k})} \cdot n^{\Oh(1)}$.
Moreover, the problems can be solved in $2^{\Oh(\sqrt{k})} \cdot n^{\Oh(1)}$ time if a \chordal-modulator of size at most $k$ is given. 
\end{corollary}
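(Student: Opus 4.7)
The plan is to reduce both problems to the maximization problems covered by Corollary~\ref{cor:WIS} via the standard complementation duality alluded to in the paragraph preceding the statement. The essential observation is that for a weighted graph $G$ with weight function $w\colon V(G)\to\mathbb{R}$, a set $X\subseteq V(G)$ is a vertex cover if and only if $V(G)\setminus X$ is an independent set, and $X$ is an odd cycle transversal if and only if $G-X$ is bipartite, i.e., $V(G)\setminus X$ induces a bipartite (equivalently, $2$-colorable) subgraph.

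With this in hand, the first step is to compute $W=\sum_{v\in V(G)}w(v)$ in linear time. For \probWVC, I would run the algorithm of Corollary~\ref{cor:WIS} on $(G,w)$ to obtain an independent set $I$ of maximum weight $w(I)$; then $X=V(G)\setminus I$ is a vertex cover, and it has minimum weight $W-w(I)$ because maximizing $w(I)$ over independent sets is exactly the same as minimizing $w(V(G)\setminus I)$ over vertex covers. The same argument applied to \probWBS yields \probWOCT: compute a maximum-weight induced bipartite subgraph $F$ via Corollary~\ref{cor:WIS}, and return $X=V(G)\setminus V(F)$ as the minimum-weight odd cycle transversal, of weight $W-w(V(F))$.

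The running time inherits directly from Corollary~\ref{cor:WIS}: $2^{\Oh(\sqrt{k}\log k)}\cdot n^{\Oh(1)}$ when only $G\in\chordalmke$ is assumed (so that a \chordal-modulator must first be produced via Proposition~\ref{prop:FV12}), and $2^{\Oh(\sqrt{k})}\cdot n^{\Oh(1)}$ when a \chordal-modulator of size at most $k$ is already part of the input, since in that case one can skip the fill-in computation and go straight to Lemma~\ref{lem:constr-almost} followed by Theorem~\ref{theorem-dcolor} with $d=1$ or $d=2$. Since the reduction itself is linear, it does not affect these running times.

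There is essentially no obstacle here: the argument is a one-line reduction and the work is entirely done by Corollary~\ref{cor:WIS}. The only mild point worth stating carefully in the writeup is that the transformation preserves optimality because the weight function is fixed on $V(G)$, so the bijection $I\mapsto V(G)\setminus I$ between independent sets and vertex covers (respectively induced bipartite subgraphs and odd cycle transversals) maps weight $w(I)$ to $W-w(I)$, and hence maximizers correspond to minimizers.
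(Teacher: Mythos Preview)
Your proposal is correct and follows exactly the paper's approach: the paper's entire argument is the one-line observation (stated just before the corollary) that the complement of an independent set is a vertex cover and the complement of an induced bipartite subgraph is an odd cycle transversal, so both problems reduce directly to Corollary~\ref{cor:WIS}. Your write-up simply spells this out in more detail; the only tiny imprecision is that when a \chordal-modulator is given you bypass Proposition~\ref{prop:FV12} entirely (not Lemma~\ref{lem:constr-almost}, whose proof invokes it) and build the nice $k$-almost chordal decomposition directly from the modulator via Proposition~\ref{lem02:nice-tw}, but this is exactly what Corollary~\ref{cor:dcolor} already packages.
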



\subsection{Various extentions}\label{subs:ext}
The technique developed to prove 
Theorem~\ref{theorem-dcolor} could be used to obtain subexponential algorithms for other problems beyond  \probWdcS. These algorithms are very similar to the one 
from Theorem~\ref{theorem-dcolor} and we sketch here only few problems and the adjustments   required to make the dynamic programming from Theorem~\ref{theorem-dcolor} to be applicable to these problems. 
 
A {\em homomorphism} $G\to H$ from a graph $G$ to a  graph $H$ is a mapping
from the vertex set of $G$ to that of $H$ such that the image of every edge of $G$ is an edge of $H$.  
In other words, a homomorphism $G\to H$ exists if and only if there is a mapping $g : V(G) \to V(H)$,
 such that for every edge $uv \in E(G)$,  we have $g(u)g(v) \in E(H)$. 
 Since there is a homomorphism from $G$ to a complete graph $K_d$ on $d$ vertices if and only if $G$ is $d$-colorable, the deciding whether there is a homomorphism from $G$ to $H$ is often referred as the $H$-coloring. Note that if $G$ admits an $H$-coloring, then $G$ is $|V(H)|$-colorable.
 
 The main difference between solving \probWHcS, the problem of finding the maximum weight induced subgraph admiting a homomorphism to $H$,  with   Theorem~\ref{theorem-dcolor}  is that the value $\cost[t,S,c]$ in \eqref{def_of cost} should be redefined.  
 In order to find a maximum weight $H$-colorable induced subgraph of a graph $G$, we need to compute  
 \begin{eqnarray*} 
\cost[t,S,g] & = & {\textrm{maximum possible weight of a set $\widehat{S}$ such that}} \\
& & {\textrm{$S\subseteq \widehat{S}\subseteq V_t$, $\widehat{S}\cap X_t=S$,  and  $g$ is a homomorphism from  $G[S]$  to $H$ extendible}}\nonumber \\ 
& &
{\textrm{to a homomorphism of $G[\widehat{S}]$ to $H$.\nonumber}} 
\end{eqnarray*}
 The number of homomorphisms $g$ from $G[S]$ to $H$ does not exceed $|V(H)|^{|S|}$. The running time of operations  join, introduce and forget operations in the dynamic programming can be bounded by a polynomial of the  number of states. Hence we can solve the problem in time 
 $2^{\Oh(\sqrt{k}(\log k+|V(H)|\log |V(H)|)}\cdot n^{\Oh(|V(H)|)}$. 
 
 Similar running times could be derived for the variants of   \probWdcS where some additional constrains on the properties of the $d$-colorable induced subgraph of minimum weight are imposed by some property $\mathcal{C}$. For example, property $\mathcal{C}$ could be that  the required subgraph is connected, acyclic, regular, degenerate, etc. As far as the information of the partial solution required for  property $\mathcal{C}$ is characterized by set  $S\subseteq V_t$ and all possible subsets of $S$ or all permutations of $S$, we can solve the corresponding problem in time   
  $2^{\Oh((d\sqrt{k})\log(dk))}\cdot n^{\Oh(d)}$. 
 
 As a concrete example, consider \probWddS, whose task is to find a maximum weight $d$-degenerate induced subgraph $H$ of the input graphs $G$.  
Recall that a graph is $d$-degenerate for a nonnegative integer $d$, if every induced subgraph of $G$ has a vertex of degree at most $d$. In particular, every forest  is   $1$-degenerate. Recall also that 
a graph $G$ is $d$-degenerate if and only if it admits an $d$-coloring ordering~\cite{MR0193025}. In particular, this immediately implies that every $d$-degenerate graph is $(d+1)$-colorable, that is, the chromatic number bound is automatically given by the $d$-degeneracy property. 

Let $G$ be a graph and let $\pi$ be an ordering of its vertices. Let $X\subseteq V(G)$.  We say that $\pi$ as an \emph{extension} of an ordering $\pi\rq{}$ of $X$ if the vertices of $X$ occur in $\pi$ in the same order as in $\pi'$. Suppose that $X=\{x_1,\ldots,x_r\}$ and the vertices of $X$ are indexed with respect to their order in $\pi$. Denote by $\delta_\pi(G,X)=(\delta_1,\ldots,\delta_r)$ the sequence of nonnegative integers such that for each $i\in\{1,\ldots,k\}$, $\delta_i$ is the degree of $x_i$ in the graph obtained from $G$ by the deletion of the vertices that occur before $x_i$ in $\pi$. Notice that if $\pi$ is a $d$-coloring ordering of $G$, then $\delta_i\leq d$ for $i\in\{1,\ldots,r\}$. 
We use the following lemma to construct our algorithm for \probWddS.

\begin{lemma}\label{lem:degenerate}
Let $G$ be a graph and let $V_1,V_2\subseteq V(G)$ such that $V_1\cup V_2=V(G)$. Let also $X=V_1\cap V_2$ with $r=|X|$. Then $G$ has a $d$-coloring ordering if and only if there is  an ordering $\pi$ of $X$ and orderings $\pi_1$ and $\pi_2$ of $V_1$ and $V_2$ respectively such that 
\begin{itemize}
\item[(i)] $\pi_i$ is an extension of $\pi$ for $i=1,2$,
\item[(ii)] for the sequences $\delta_{\pi}(G[X],X)=(\delta_1,\ldots,\delta_r)$,
 $\delta_{\pi_1}(G[V_1],X)=(\delta_1^{(1)},\ldots,\delta_r^{(1)})$ and \linebreak  $\delta_{\pi_2}(G[V_2],X)=(\delta_1^{(2)},\ldots,\delta_r^{(2)})$, it holds that 
 $\delta_i^{(1)}+\delta_i^{(2)}-\delta_i\leq d$ for every $i\in\{1,\ldots,r\}$.
\end{itemize}
\end{lemma}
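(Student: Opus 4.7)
The plan is to prove each direction of the equivalence by, respectively, restricting and merging vertex orderings. I first note that the lemma as stated implicitly relies on two facts that hold in its intended join-node application but are not explicit in the statement: $X=V_1\cap V_2$ separates $V_1\setminus X$ from $V_2\setminus X$ in $G$ (no edges between these two sides), and $\pi_1,\pi_2$ are $d$-coloring orderings of $G[V_1],G[V_2]$ respectively. Without the latter, condition (ii) constrains only the $X$-vertices and is insufficient for the ``if'' direction; without the former, the forward identities below fail.

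For the ``only if'' direction, given a $d$-coloring ordering $\sigma$ of $G$, I set $\pi=\sigma|_X$, $\pi_1=\sigma|_{V_1}$, $\pi_2=\sigma|_{V_2}$ (preserving relative order). Condition (i) is immediate, and each $\pi_j$ is a $d$-coloring ordering of $G[V_j]$ since it is a restriction of $\sigma$. For (ii), fix $i$ and let $N$ be the set of neighbors of $x_i$ appearing strictly after $x_i$ in $\sigma$, so $|N|\leq d$. Partitioning $N$ by membership in $V_1\setminus X$, $V_2\setminus X$, or $X$, and using that $\pi_j$ extends $\pi$, one finds
\[
\delta_i^{(j)} \;=\; |N\cap (V_j\setminus X)| + \delta_i \qquad\text{and}\qquad |N\cap X|=\delta_i,
\]
whence $|N|=\delta_i^{(1)}+\delta_i^{(2)}-\delta_i\leq d$ by inclusion--exclusion.

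For the ``if'' direction, I would merge $\pi_1$ and $\pi_2$ into a single ordering $\sigma$ of $V(G)$ by weaving together their ``gap blocks''. Writing $X=\{x_1,\ldots,x_r\}$ in the order of $\pi$, and for each $i\in\{1,\ldots,r+1\}$ letting $B_i^{(j)}$ be the subsequence of $\pi_j$ lying strictly between $x_{i-1}$ and $x_i$ (with $x_0,x_{r+1}$ taken as sentinels), the extension hypothesis (i) forces $B_i^{(j)}\subseteq V_j\setminus X$, and the two sides of the separator are disjoint. I then define
\[
\sigma \;=\; B_1^{(1)},\, B_1^{(2)},\, x_1,\, B_2^{(1)},\, B_2^{(2)},\, x_2,\, \ldots,\, x_r,\, B_{r+1}^{(1)},\, B_{r+1}^{(2)},
\]
with internal orders inherited from $\pi_1$ and $\pi_2$, so that $\sigma|_{V_j}=\pi_j$ for $j=1,2$. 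To see that $\sigma$ is a $d$-coloring ordering I would check each vertex $v$ separately: if $v\in V_j\setminus X$, the separation property places all of $v$'s neighbors in $V_j$, and the relative order of the later neighbors in $\sigma$ coincides with that in $\pi_j$, giving a bound of $d$; if $v=x_i$, the later neighbors split into $V_1\setminus X$, $V_2\setminus X$, and $X$ parts of sizes $\delta_i^{(1)}-\delta_i$, $\delta_i^{(2)}-\delta_i$, and $\delta_i$ by the same identity as before, summing to $\delta_i^{(1)}+\delta_i^{(2)}-\delta_i\leq d$ by (ii). The main substantive step is the inclusion--exclusion identity linking the three partial degree sequences; once it is isolated, both directions reduce to routine bookkeeping.
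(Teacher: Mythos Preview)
Your approach is the same as the paper's: restrict a global $d$-coloring ordering to $X,V_1,V_2$ for the forward direction, and interleave the gap blocks of $\pi_1,\pi_2$ between consecutive $x_i$'s for the converse. The paper's proof is terser but carries out exactly this construction.

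Your diagnosis of the missing hypotheses is correct and worth recording. As stated, the lemma is false: take $G=K_4$ on $\{a,b,c,x\}$ plus an isolated-in-context vertex, with $V_1=\{a,b,c,x\}$, $V_2=\{x\}$, $X=\{x\}$; choosing $\pi_1=(a,b,c,x)$ makes $\delta_1^{(1)}=0$ so (ii) holds for $d=2$, yet $G$ is not $2$-degenerate. The paper's proof says only that ``it is easy to see that $\pi'$ is a $d$-coloring ordering because of (i) and (ii)'', but this step genuinely needs both that $\pi_1,\pi_2$ are themselves $d$-coloring orderings of $G[V_1],G[V_2]$ (to bound the later-degree of vertices in $V_j\setminus X$) and that $X$ separates $V_1\setminus X$ from $V_2\setminus X$ (so that such a vertex acquires no new later neighbours from the other side after merging). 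Both hold automatically in the intended join-node application, which is presumably why the paper omits them. One minor quibble: your inclusion--exclusion identity $\delta_i^{(1)}+\delta_i^{(2)}-\delta_i=|N|$ in the forward direction does \emph{not} require separation --- it is pure inclusion--exclusion on $V_1,V_2$ with intersection $X$ --- so only the converse actually depends on that hypothesis.
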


\begin{proof}
Suppose that $\pi\rq{}$ is a $d$-coloring ordering. Then we define $\pi$, $\pi_1$ and $\pi_2$ as the orderings of $X$, $V_1$ and $V_2$ respectively as the ordering inherited from $\pi$, that is, every two vertices of the corresponding sets occur exactly in the same order as in $\pi$. It is straightforward to  verify that (i) and (ii) are fulfilled. For the opposite direction, assume that there are $\pi$, $\pi_1$ and $\pi_2$ satisfying (i) and (ii). Let $\pi=(x_1,\ldots,x_r)$. We construct the ordering $\pi\rq{}$ of $G$ as follows. First, we concatenate the suborderings of $\pi_1$ and $\pi_2$ containing the vertices that occur before $x_1$ in these orderings and add $x_1$ in the end. Then for $i=2,\ldots,r$, we consecutively add suborderings of $\pi_1$ and $\pi_2$ composed by the vertices that occur  between $x_{i-1}$ and $x_i$ and add $x_i$ in the end. Finally, we add suborderings of $\pi_1$ and $\pi_2$ with the vertices that are after $x_r$. It is easy to see that $\pi\rq{}$ is a $d$-coloring ordering of $G$ because of (i) and (ii).  
\end{proof}

Using Lemma~\ref{lem:degenerate}, we can define the values computed by our dynamic programming algorithm for  \probWddS.
 For a nonnegative integer $r$, we define 
$\Delta_r=\{(\delta_1,\ldots,\delta_r)\mid 0\leq \delta_i\leq d,~1\leq i\leq r\}$, i.e, $\Delta_r$ is the set of all sequences of length $r$ of nonnegative integers at most $d$. 
Similarly to \eqref{def_of cost}, for every node $t$, every $S\subseteq X_t$ such that $G[S]$ is $(d+1)$-colorable, every ordering $\pi$ of $S$ and every $\delta\in \Delta_{|S|}$, we define
  \begin{eqnarray*} 
\cost[t,S,\pi,\delta] & = & {\textrm{maximum possible weight of a set $\widehat{S}$ such that}} \\
& & {\textrm{$S\subseteq \widehat{S}\subseteq V_t$, $\widehat{S}\cap X_t=S$,  and  ordering $\pi$ of  $S$ can be extended }}  \\ 
 & &
 {\textrm{to a $d$-coloring ordering $\pi\rq{}$ of $G[\widehat{S}]$ with $\delta_{\pi\rq{}}(G[\widehat{S}],S)=\delta$.}} 
\end{eqnarray*} 
For a given $S\subseteq X_t$, there are $|S|!$ different ordering of the vertices of $S$ and $|\Delta_{|S|}|=(d+1)^{|S|}$. By Lemma~\ref{maincomblemma}, we have that $|S|\leq \frac{3d+\sqrt{d^2 +8dk}}{2}$ and, therefore, we consider $2^{\Oh((d+\sqrt{dk})\log(dk))}$ orderings $\pi$ and $2^{\Oh((d+\sqrt{dk})\log d)}$ sequences $\delta$. Taking into account that by Lemma~\ref{maincomblemma}, we consider $2^{\Oh(d\sqrt{k})} \cdot n^{\Oh(d)}$ sets $S$, we obtain that the table of our dynamic programming algorithm stores 
$2^{\Oh((d\sqrt{dk})\log (kd))} \cdot n^{\Oh(d)}$ values. 
Applying Lemma~\ref{lem:degenerate} and the standard dynamic programming arguments, it is easy to show that we can solve \probWddS using the information stored in the tables.

 Many natural problem can be described by combining  connectivity, degeneracy or degree constraints. Examples of such problems are various maximization problems like 
\textsc{Weighted Induced Tree}, 
  \textsc{Weighted Induced Path},  \textsc{Weighted Induced Cycle}, as well as various packing variants of these problems like   \probWICP, the problem of finding a maximum induced subgraph whose each connected component is a  cycle.  
 
We summarize these observations with the following theorem. 

 \begin{theorem}\label{thm:general}
Let  $d\geq 1$ be an integer and $G$ be a 
graph from  $\chordalmke$. Then 
\begin{itemize}
\item \probWHcS  can be solved in  $2^{\Oh(\sqrt{k}(\log k+|V(H)|\log |V(H)|))}\cdot n^{\Oh(|V(H)|)}$ time, 
\item  \probWddS is solvable in time $2^{\Oh((d\sqrt{k})\log(dk))}\cdot n^{\Oh(d)}$,
\end{itemize}
and
\begin{itemize}
\item \probmWIF (\probWFVS ),
\item \textsc{Weighted Induced Tree},
 \item \textsc{Weighted Induced Path (Cycle)}
\item  \probWICP,
\end{itemize}
are solvable in $2^{\Oh(\sqrt{k}\log k )}\cdot n^{\Oh(1)}$ time.
\end{theorem}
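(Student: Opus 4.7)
The plan is to treat each of the four bullets as an instance of the same dynamic-programming template used in the proof of Theorem~\ref{theorem-dcolor}. First, Lemma~\ref{lem:constr-almost} constructs, in $2^{\Oh(\sqrt{k}\log k)}\cdot n^{\Oh(1)}$ time, a nice $k$-almost chordal tree decomposition $\mathcal{T}=(T,\{X_t\}_{t\in V(T)})$ of $G$ with $|V(T)|=\Oh(n^2)$. The task for each bullet is then to identify the appropriate state to attach to a bag-intersection set $S\subseteq X_t$, to formulate analogues of the recurrences~\eqref{eq:i}--\eqref{eq:j} for introduce, forget and join nodes, and to bound the total number of relevant $(S,\text{state})$ pairs per bag via Lemma~\ref{maincomblemma}.

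For \probWHcS, a graph admitting a homomorphism to $H$ is $|V(H)|$-colorable, so Lemma~\ref{maincomblemma} with $d=|V(H)|$ forces $|S|=\Oh(|V(H)|+\sqrt{|V(H)|\,k})$ and enumerates at most $2^{\Oh(|V(H)|\sqrt{k})}\cdot n^{\Oh(|V(H)|)}$ relevant sets per bag. The partial coloring $c$ from Theorem~\ref{theorem-dcolor} is replaced by a homomorphism $g\colon G[S]\to H$, of which there are at most $|V(H)|^{|S|}$; introduce and forget nodes extend or restrict $g$ one vertex at a time, and a join node glues two partial homomorphisms that agree on~$S$, with correctness established just as in Claims~\ref{claim:introd}--\ref{claim:join}. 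Multiplying these counts and combining with the decomposition cost yields the promised bound. For \probWddS, Lemma~\ref{lem:degenerate} shows that a partial $d$-coloring ordering of $\widehat{S}$ is faithfully represented by an ordering $\pi$ of $S$ together with the sequence $\delta=\delta_{\pi'}(G[\widehat{S}],S)\in\Delta_{|S|}$. Since $d$-degenerate graphs are $(d{+}1)$-colorable, Lemma~\ref{maincomblemma} controls $|S|$ and the number of relevant sets; the number of $(\pi,\delta)$ pairs is bounded by $|S|!\cdot(d+1)^{|S|}=2^{\Oh((d+\sqrt{dk})\log(dk))}$. Introduce and forget nodes modify~$\pi$ and a single entry of~$\delta$, while a join node enforces condition~(ii) of Lemma~\ref{lem:degenerate} componentwise over a common~$\pi$, giving the claimed running time.

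The four remaining problems are handled by instantiating the \probWddS algorithm with a small constant $d$ ($d=1$ for \probmWIF and \textsc{Weighted Induced Tree}, $d=2$ for \textsc{Weighted Induced Path (Cycle)} and \probWICP), so the degeneracy component of the state contributes only $2^{\Oh(\sqrt{k}\log k)}$ combinations. On top of $(S,\pi,\delta)$ the state is augmented by a partition of~$S$ recording which vertices currently lie in the same connected component of the partial solution, together with, for the path and cycle variants, a tag identifying the at most two free endpoints inside each component; the number of such augmentations is at most $|S|^{\Oh(|S|)}=2^{\Oh(\sqrt{k}\log k)}$ because $|S|=\Oh(\sqrt{k})$ for constant $d$. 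The main obstacle will be the join-node recurrence for these augmentations: two equivalence relations on the common bag must be merged while rejecting any merge that creates a cycle where none is allowed (tree, path, forest) or produces a vertex of degree greater than two (path, cycle, cycle packing), all while keeping $\pi$ and $\delta$ consistent. This step is a standard adaptation of the well-known treewidth bookkeeping for connectivity-style constraints, and multiplying all contributions together with the $2^{\Oh(\sqrt{k}\log k)}$ construction cost from Lemma~\ref{lem:constr-almost} gives the promised $2^{\Oh(\sqrt{k}\log k)}\cdot n^{\Oh(1)}$ bound.
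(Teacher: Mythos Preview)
Your proposal is correct and follows essentially the same route as the paper: the paper's own ``proof'' of Theorem~\ref{thm:general} is the informal discussion preceding it, which redefines $\cost[t,S,\cdot]$ with a homomorphism $g$ for \probWHcS, with a pair $(\pi,\delta)$ via Lemma~\ref{lem:degenerate} for \probWddS, and then remarks that connectivity, acyclicity and degree constraints can be handled by augmenting the state with subsets or permutations of $S$. You spell out the connectivity bookkeeping (partitions of $S$ plus endpoint tags, merged at join nodes) more explicitly than the paper does, but this is the same template and the same counting via Lemma~\ref{maincomblemma}. One cosmetic point: a path is $1$-degenerate, not $2$-degenerate, so you could take $d=1$ there as well; this does not affect the bound.
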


In some cases, we can obtain a better running time if a \chordal-modulator of size at most $k$ is given. For  \probWHcS, this is done in the same way as for \probWdcS. For some other problems, like   \probmWIF (\probWFVS ), this would demand using recent techniques for dynamic programming on graphs of bounded treewidth for problems with connectivity contstraints (see~\cite{cut-and-count,BodlaenderCKN15,FominLPS16,DBLP:journals/corr/abs-1104-3057})
but this goes beyond the scope of our paper. 

Another extension of Theorem~\ref{theorem-dcolor} can be derived from the very recent results of Baste, Sau and Thilikos~\cite{BasteST20} about the \textsc{$\mathcal{F}$-Minor Deletion} problem on graphs of bounded treewidth. Recall that a graph $F$ is a \emph{minor} of $G$ if a graph isomorphic to $F$ can be obtained from $G$ by vertex and edge deletions and edge contractions.  
Respectively, $G$ is said to be \emph{$F$-minor free} if $G$ does not contain $F$ as a minor. For a family of graphs $\mathcal{F}$, $G$ is $\mathcal{F}$-minor free if $G$ is $F$-minor free for every $F\in\mathcal{F}$. For a family $\mathcal{F}$, the task of \textsc{$\mathcal{F}$-Minor Deletion} is, given a graph $G$, to find a minimum set of vertices $X$ such that $G-X$ is $\mathcal{F}$-minor free. Clearly, \textsc{$\mathcal{F}$-Minor Deletion} is equivalent to \textsc{$\mathcal{F}$-Minor Free Induced Subgraph}, whose task is to find a maximum $\mathcal{F}$-minor free induced subgraph of $G$. A family of graphs $\mathcal{F}$ is \emph{connected} if every $F\in\mathcal{F}$ is a connected graph. Baste et al.~\cite{BasteST20} obtained, in particular, the following result. 

\begin{proposition}[\cite{BasteST20}]\label{prop:minor-free}
Let $\mathcal{F}$ be a finite connected family of graphs. Then \textsc{$\mathcal{F}$-Minor Deletion} can be solved in time $2^{\Oh(w \log w)}\cdot n^{\Oh(1)}$ on graphs of treewidth at most $w$.\footnote{the constants hidden in the big-$\Oh$ notation depend on $\mathcal{F}$.}
\end{proposition}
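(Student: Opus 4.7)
The plan is to design a dynamic programming algorithm over a nice tree decomposition of $G$ of width at most $w$. For each node $t$ with bag $X_t$, we maintain a table whose entries are indexed by ``abstract states'' encoding the minor-theoretic behaviour of the partial subgraph rooted at $t$ relative to the interface $X_t$, together with the minimum size of a deletion set in $V_t$ that yields that state. The target is to bound the number of distinct states by $2^{\Oh(w\log w)}$, which combined with polynomial-time transitions will yield the claimed running time.

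The central combinatorial step is to introduce an equivalence relation $\sim_w$ on $w$-boundaried graphs $(H,B)$: we set $(H_1,B_1)\sim_w (H_2,B_2)$ when, for every $w$-boundaried graph $(H',B')$ with $|B'|=|B_1|=|B_2|$, the graph obtained by identifying the boundary of $(H',B')$ with that of $(H_i,B_i)$ is $\mathcal{F}$-minor free for the same values of $i\in\{1,2\}$. The heart of the proof is to show that the number of equivalence classes of $\sim_w$ is at most $2^{\Oh(w\log w)}$. The connectedness hypothesis on $\mathcal{F}$ is crucial here: since every $F\in\mathcal{F}$ is connected, any partial $F$-minor model living on one side of the boundary must communicate with the other side only through boundary vertices. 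Its state can therefore be captured by (i) the connectivity pattern induced on the boundary, (ii) which vertices of $F$ have already been mapped into the processed side, and (iii) the partial branch-set structure witnessed at the boundary. A careful combinatorial count of such canonical encodings yields the desired bound.

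Once the state space is controlled, the dynamic programming itself is standard. At an introduce node we absorb the new vertex and its edges into the current boundaried graph; at a forget node we either mark the forgotten vertex as deleted (incrementing the cost by one) or project the boundaried graph onto the reduced boundary; at a join node we ``glue'' the boundaried graphs of the two children and look up the equivalence class of the result. At each step we keep a single best-cost representative per equivalence class, so the table at each node has at most $2^{\Oh(w\log w)}$ entries and each transition runs in time polynomial in this quantity. Since the decomposition has $\Oh(n)$ nodes, the total time is $2^{\Oh(w\log w)}\cdot n^{\Oh(1)}$.

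The main obstacle is establishing the $2^{\Oh(w\log w)}$ bound on the number of equivalence classes of $\sim_w$ for a finite connected family $\mathcal{F}$; this is where the new machinery of representative sets and \emph{folios} for boundaried minors developed by Baste, Sau and Thilikos is genuinely needed, and is the single place where the connectedness hypothesis is essential (for general $\mathcal{F}$ such a single-exponential bound is not available). The remaining steps, while notation-heavy, amount to a routine execution of dynamic programming with compressed states over a tree decomposition, in the spirit of our proof of Theorem~\ref{theorem-dcolor}.
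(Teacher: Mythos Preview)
The paper does not prove this proposition at all: it is stated as a black-box result imported from Baste, Sau and Thilikos~\cite{BasteST20}, with no proof given here. So there is nothing in the paper to compare your proposal against.

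That said, your sketch is a faithful high-level description of the approach in~\cite{BasteST20}: dynamic programming over a nice tree decomposition, with states given by equivalence classes of boundaried graphs under $\mathcal{F}$-minor containment, and the key technical contribution being the $2^{\Oh(w\log w)}$ bound on the number of relevant classes (using what they call representatives and folios). You correctly identify that the connectedness of $\mathcal{F}$ is what makes the single-exponential-in-$w\log w$ bound go through, and that this is the genuinely hard part. One caveat: your description of the equivalence classes and the encoding via ``connectivity pattern, mapped vertices of $F$, partial branch-set structure'' is too vague to actually yield the bound without substantial further work; in particular, a naive count along those lines would give a worse dependence on $w$, and the actual argument in~\cite{BasteST20} is considerably more delicate. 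But since the present paper only cites the result, no proof is expected here, and your proposal should simply be replaced by the citation.
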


It is well-known (see, e.g., the book~\cite{NesetrilM12} for the inclusion relations between the classes of sparse graphs) that if $\mathcal{F}$ is a finite family, then there is a positive integer $d$ such that every $\mathcal{F}$-minor free graph is $d$-degenerate. This means that for a finite family  $\mathcal{F}$,  $\mathcal{F}$-minor free graphs are $d$-colorable for some constant $d$ that depends on $\mathcal{F}$ only. This allows us to use Lemma~\ref{maincomblemma} and then combine our approach from Theorem~\ref{theorem-dcolor} with the techniques of Baste et al.~\cite{BasteST20}. Using Lemma~\ref{prop:minor-free}, we obtain the following theorem.

 \begin{theorem}\label{thm:minor-free}
Let $\mathcal{F}$ be a finite connected family of graphs. 
Let  also $G$ be a 
 graph from  $\chordalmke$. Then 
 \textsc{$\mathcal{F}$-Minor Free Induced Subgraph} (or, equivalently, \textsc{$\mathcal{F}$-Minor Deletion}) can be solved
in $2^{\Oh(\sqrt{k}\log k)}\cdot n^{\Oh(1)}$ time.
\end{theorem}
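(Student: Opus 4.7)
The plan is to combine the enumeration tool from Lemma~\ref{maincomblemma} with a dynamic programming on a $k$-almost chordal tree decomposition, using the Baste--Sau--Thilikos machinery (Proposition~\ref{prop:minor-free}) as a black-box-like inner routine to handle the minor-free constraint on each ``slice'' of the solution. First, since $\mathcal{F}$ is a finite connected family, there exists a constant $d=d(\mathcal{F})$ such that every $\mathcal{F}$-minor free graph is $d$-degenerate, and in particular $(d+1)$-colorable; this is the sparsity fact cited from~\cite{NesetrilM12}. Next, by Lemma~\ref{lem:constr-almost} we build in time $2^{\Oh(\sqrt{k}\log k)}\cdot n^{\Oh(1)}$ a nice $k$-almost chordal tree decomposition $\mathcal{T}=(T,\{X_t\}_{t\in V(T)})$ of $G$ with $n^{\Oh(1)}$ bags.

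Now fix a hypothetical optimal $\mathcal{F}$-minor free induced subgraph $G[F^\star]$ of $G$. Because $G[F^\star]$ is $(d+1)$-colorable and $\cliqc(G[X_t])\leq k$, Lemma~\ref{maincomblemma} (applied with $d$ replaced by $d+1$) guarantees that $|F^\star\cap X_t|=\Oh(\sqrt{k})$ for every $t$, and, crucially, it produces in time $2^{\Oh(\sqrt{k})}\cdot n^{\Oh(1)}$ a family $\mathcal{S}'_t$ of candidate intersections at each bag, with $|\mathcal{S}'_t|=2^{\Oh(\sqrt{k})}\cdot n^{\Oh(1)}$, that necessarily contains $F^\star\cap X_t$. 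Thus, although individual bags $X_t$ may be huge, the set of \emph{relevant} traces of any solution on $X_t$ is of subexponential size in $k$.

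The dynamic programming then proceeds bottom-up on $\mathcal{T}$ using as state the pair $(t, S)$ where $S \in \mathcal{S}'_t$ plays the role of the ``boundary'' of the partial solution at node $t$. Because $|S|=\Oh(\sqrt{k})$, the partial solution accumulated at $t$, restricted to the relevant interface with the rest of $G$, lives in a graph of effective treewidth $\Oh(\sqrt{k})$: the set $S$ plus the structural bookkeeping needed to test minor-freeness after join with the parent. Following Baste, Sau and Thilikos, for each $S$ we attach to the state the additional per-boundary information that encodes the classes of partial drawings/representatives needed to ensure $\mathcal{F}$-minor freeness; by Proposition~\ref{prop:minor-free} the number of such additional states on a boundary of size $w$ is $2^{\Oh(w\log w)}$, which in our setting is $2^{\Oh(\sqrt{k}\log k)}$. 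The forget, introduce and join updates are exactly those of Theorem~\ref{theorem-dcolor} for the boundary part, augmented with the representative-merging procedure of~\cite{BasteST20} on the boundary of size $\Oh(\sqrt{k})$, so each transition runs in $2^{\Oh(\sqrt{k}\log k)}\cdot n^{\Oh(1)}$ time. The final answer is read at the root, and multiplying the number of nodes $n^{\Oh(1)}$ by the per-node cost $2^{\Oh(\sqrt{k}\log k)}\cdot n^{\Oh(1)}$ gives the claimed bound.

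The main obstacle, and the only non-routine step, is the second paragraph of the third one: adapting the Baste--Sau--Thilikos state to our ``sparse within a large bag'' setting. In their original setting the boundary \emph{is} the bag of a tree decomposition of bounded width, so their representative sets are defined with respect to the bag. Here the bag $X_t$ can be as large as $n$, while the boundary of an optimal solution inside the bag has size $\Oh(\sqrt{k})$. One needs to verify that their state definition depends only on the induced subgraph carried by the boundary $S$ (together with its neighborhood structure toward $V_t\setminus X_t$), so that it can be indexed by $S\in\mathcal{S}'_t$ rather than by the whole bag. Once this is checked, the join compatibility condition and the forget/introduce transitions carry over verbatim, and the running time bookkeeping is exactly as above.
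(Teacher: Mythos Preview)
Your proposal is correct and follows essentially the same approach as the paper: exploit that $\mathcal{F}$-minor free graphs are $d$-degenerate (hence $(d+1)$-colorable for a constant $d$), invoke Lemma~\ref{maincomblemma} to bound each bag's relevant traces by $2^{\Oh(\sqrt{k})}\cdot n^{\Oh(1)}$ sets of size $\Oh(\sqrt{k})$, and plug the Baste--Sau--Thilikos state of size $2^{\Oh(w\log w)}$ on these small boundaries into the dynamic programming of Theorem~\ref{theorem-dcolor}. In fact your write-up is more detailed than the paper's own sketch and correctly isolates the one non-routine point---that the BST representative sets must be indexed by the solution trace $S$ rather than the full bag $X_t$---which the paper leaves implicit.
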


For example, this framework encompasses such problems as \textsc{Induced Planar Subgraph} or \textsc{Induced Outerplanar Subgraph} whose task is to find a  subgraph 
of maximum size that is planar or outerplanar, respectively.

With a small adjustment the dynamic programming could be applied to the problems with specific requirements on the complement of the maximum induced $d$-colored subgraph. 
 For example,  consider the following problem. A set of vertices $S\subseteq V(G)$ is a \emph{connected vertex cover} if $S$ is a vertex cover and $G[S]$ is connected. Then in the \probWCVC problem, we are  given a graph $G$ with a weight function $w\colon V(G)\rightarrow \mathbb{Z}^+$ and the task is to find a connected vertex cover in $G$ of minimum weight. 
  Similarly,  \probWCFVS is the problem of finding a connected feedback vertex set of minimum weight. 
  
  The complement of every vertex cover is an independent set, that is a $1$-colorable subgraph, and the complement of every feedback vertex set is a forest, hence $2$-colorable subgraph.  While now the connectivity constraints are not on the maximum induced subgraph but on its complement our previous arguments can be adapted to handle these problems. 

\begin{theorem}\label{thm:WCVC}
\probWCVC and  \probWCFVS are  solvable in time $2^{\Oh(\sqrt{k}\log k)}\cdot n^{\Oh(1)}$ on $\chordalmke$.
\end{theorem}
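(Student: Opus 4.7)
The plan is to piggyback on the dynamic programming schemes already developed in this section. Since the complement of a vertex cover is an independent set (hence $1$-colorable) and the complement of a feedback vertex set is a forest, minimizing the weight of a connected cover/FVS is equivalent to maximizing the weight of an induced $1$-colorable, respectively acyclic, subgraph $F$ of $G$ under the additional constraint that $G[V(G)\setminus V(F)]$ is connected. For \probWCVC we will therefore start from the $d=1$ case of Theorem~\ref{theorem-dcolor}, and for \probWCFVS from the \probmWIF DP underlying Theorem~\ref{thm:general}; in both cases the DP state is to be augmented with extra bookkeeping that tracks connectivity of the complement.

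Concretely, for every bag $X_t$ of the nice $k$-almost chordal decomposition produced by Lemma~\ref{lem:constr-almost}, we augment the state $[t,S,c,\ldots]$ by a partition $P$ of $C_t := X_t\setminus S$ that records which pairs of its vertices currently lie in the same connected component of $G[C\cap V_t]$, where $C=V(G)\setminus V(F)$ denotes the desired connected complement. The transitions are those standard for connectivity DPs on tree decompositions: at an introduce node a new vertex of $C$ enters as a fresh singleton class and is immediately merged with the classes of its neighbors already in $C_t$; at a forget node we drop the forgotten vertex and discard states in which this leaves an empty class, since no future bag can ever reconnect such an orphan component; at a join node we take the transitive closure of the two children's partitions on $C_t$; and at the root, whose bag is empty, we accept only states in which a single class has survived, witnessing that $G[C]$ is connected.

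The crux of keeping the running time subexponential in $k$ is the following observation. In a $k$-almost chordal decomposition, $G[X_t]$ is missing at most $k$ edges from being complete, and therefore so is $G[C_t]$. If $G[C_t]$ has $\ell$ connected components of sizes $n_1,\ldots,n_\ell$, then all $\sum_{i<j}n_in_j$ pairs across components must be non-edges, forcing $\binom{\ell}{2}\le\sum_{i<j}n_in_j\le k$ and hence $\ell=\Oh(\sqrt{k})$. Any admissible partition $P$ must be a coarsening of the natural component partition of $G[C_t]$ (two vertices already connected inside $X_t$ are \emph{a fortiori} connected inside $V_t$), so the number of relevant partitions per state is at most the Bell number $B_{\Oh(\sqrt{k})}=2^{\Oh(\sqrt{k}\log k)}$.

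Putting everything together, the $n^{\Oh(1)}$ choices of $t$, the $2^{\Oh(\sqrt{k})}\cdot n^{\Oh(1)}$ enumerations of $(S,c)$ provided by Lemma~\ref{maincomblemma} (with $d=1$ for \probWCVC and $d=2$ plus the forest-orderings of the \probmWIF DP for \probWCFVS), the $2^{\Oh(\sqrt{k}\log k)}$ partitions $P$, and polynomial work per transition combine to yield the claimed $2^{\Oh(\sqrt{k}\log k)}\cdot n^{\Oh(1)}$ bound. The main technical subtlety will be the join node, where the two partitions must be correctly fused into their transitive closure while the DP value is combined without double-counting the weight on $S$; this is standard in connectivity DP, the only new ingredient being that our bags have unbounded width but bounded non-edge count, which is exactly what bounds the partition space.
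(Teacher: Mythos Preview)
Your proposal is correct and follows essentially the same route as the paper's proof sketch: build the nice $k$-almost chordal decomposition via Lemma~\ref{lem:constr-almost}, run the DP from Theorem~\ref{theorem-dcolor} (respectively the induced-forest DP behind Theorem~\ref{thm:general}) augmented by a partition of the complement $C_t=X_t\setminus S$ that records the connected components of the partial cover, and bound the number of admissible partitions by showing $G[C_t]$ has $\Oh(\sqrt{k})$ components. The one cosmetic difference is that the paper obtains this component bound by picking one vertex per component and invoking the $d=1$ case of Lemma~\ref{maincomblemma} to cap the resulting independent set, whereas you count cross-component non-edges directly via $\binom{\ell}{2}\le k$; both arguments yield the same $\Oh(\sqrt{k})$ bound and hence the same $2^{\Oh(\sqrt{k}\log k)}$ count on coarsenings.
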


\begin{proof}[Proof sketch]
The proof is similar to the proof of  Theorem~\ref{theorem-dcolor}, so we only briefly sketch the main idea. We also give the sketch for \probWCVC only, solution to \probWCFVS is basically the same.

Let $(G,w,W)$ be an instance of \probWCVC.

We use Lemma~\ref{lem:constr-almost} to find a nice $k$-almost chordal tree decomposition $\mathcal{T}=(T,\{X_t\}_{t\in V(T)})$ of $G$ with $\Oh(n^2)$ bags.  Let $r$ be the root of $T$ and $V_t$ be the union of the bags in the subtree of $T$ rooted in $t$. 

The following claim is crucial for our algorithm.

\begin{claim}\label{cl:comp}
Let $t$ be a node of $\mathcal{T}$ and let $S\subseteq X_t$. Then $G[S]$ has at most $ \frac{3+\sqrt{1+8k}}{2}$ connected components. 
\end{claim}

To show the claim, it is sufficient to observe that any set of vertices constructed by picking an arbitrary vertex from each connected component of $G[S]$ is independent. Hence, the number of connected components of $G[S]$ is upper bounded by the maximum size of an independent set in $X_t$.  By Lemma~\ref{maincomblemma}, any independent subset of $X_t$ has size at most $ \frac{3+\sqrt{1+8k}}{2}$ and the claim follows.  

\medskip
Let $t\in\mathcal{T}$. For a set $S\subseteq X_t$, denote by $\mathcal{P}_t(S)$ the set of all partitions of $S$ such that every two vertices $x,y\in S$ that are in the same component are in the same set of each partition; we assume that $\mathcal{P}_t(\emptyset)=\{\emptyset\}$.

 Notice that $S\subseteq V(G)$ is a connected vertex cover if and only if $U=V(G)\setminus S$ is an independent set and $G[S]$ is a connected graph.

For every $t\in V(T)$, every independent set $U\subseteq X_t$ and every  $P\in\mathcal{P}_t(S)$ for $S=X_t\setminus U$, our algorithm computes  the value 
$\cost[t,U,P]$ that is the minimum possible weight of a set $\widehat{S}\subseteq V_t$ such that 
\begin{itemize}
\item $\widehat{S}\cap X_t=S=X_t\setminus U$,
\item $\widehat{S}$ is a vertex cover of $G[V_t]$,
\item for each connected component $C$ of $G[\widehat{S}]$, $V(C)\cap S\neq\emptyset$ unless $S=\emptyset$,
\item every two vertices $x,y\in S$ are in the same set of $P$ if and only if they belong to the same connected component of $G[\widehat{S}]$.
\end{itemize}

We compute $\cost[t,U,P]$ bottom-up for $t\in V(T)$ starting from the leaves. The computations are trivial for leaf nodes and are performed in a standard way for introduce, forget and join nodes. Finally, we compute $\cost[r,\emptyset,\{\emptyset\}]$, which  gives us the minimum weight of a connected vertex cover. 

By Lemma~\ref{maincomblemma}, $X_t$ has $2^{\Oh(\sqrt{k})}\cdot n^{\Oh(1)}$ independent subsets. 
By Claim~\ref{cl:comp}, we have that for every $S\subseteq X_t$, $G[S]$ has at most $\frac{3+\sqrt{1+8k}}{2}$ connected components. Therefore, $|\mathcal{P}_t(S)|=2^{\Oh(\sqrt{k}\log k)}$ and the algorithm computes $2^{\Oh(\sqrt{k}\log k)}\cdot n^{\Oh(1)}$ values of $\cost[t,U,P]$ for each $t\in V(T)$. Therefore,  the running time of the algorithm is $2^{\Oh(\sqrt{k}\log k)}\cdot n^{\Oh(1)}$.
\end{proof}

In this section, we discussed optimization problems but, in many cases, similar dynamic programming can be applied for counting problems.  For example, we can compute the number of (inclusion) maximal independent sets, maximal bipartite subgraphs, minimal (connected) feedback vertex sets, minimal connected vertex covers  in time  $2^{\Oh(\sqrt{k}\log k )}\cdot n^{\Oh(1)}$ on  $\chordalmke$.

\section{Beyond induced $d$-colorable subgraphs}\label{sec:lower}

In Section~\ref{sec:subexpalgo}, among other algorithms,  we gave a subexponential (in $k$) algorithm on 
$\chordalmke$ graphs that finds a maximum $d$-colorable subgraph. In particular, this also implies that for every fixed $d$, deciding whether a graph from  
$\chordalmke$ is $d$-colorable, can be done in time subexponential in $k$. In this section we show that two fundamental  problems, namely, \probCOL and 
\probCL, while still being \classFPT,  but unlikely be solvable in subexponential parameterized time.  
\subsection{Coloring  $\chordalmke$ graphs}
First, we consider \probCOL whose task is,  given a graph $G$ and a positive integer $\ell$, decide whether the chromatic number of $G$ is at most $\ell$, that is, if  $G$ is $\ell$-colorable. Note that $\ell$ here is not a fixed constant as in Section~\ref{sec:subexpalgo} and may be arbitrarily large.

Cai \cite{Cai03a} proved that \probCOL  is   \classFPT   (parameterized by $k$) on $\splitmke$ graphs. We generalize his result by showing that 
 \probCOL is \classFPT on  $\chordalmke$. Our approach is based on  the dynamic programming which is similar to the one we used in Section~\ref{sec:subexpalgo}. 
 
 We need the following property of graph colorings.
Let $G$ be a graph. It is well-known that an $\ell$-coloring of a graph $G$ can be seen as a partition $\mathcal{X}$ of $V(G)$ into at most $\ell$ independent sets formed by the vertices of the same color that are called \emph{color classes}. We call $\mathcal{X}$ an \emph{$\ell$-coloring partition} of $G$.  We also say that a partition $\mathcal{X}$ of $V(G)$ into  independent sets is  a \emph{coloring partition} of $G$. 
Let $\mathcal{X}=\{X_1,\ldots,x_\ell\}$ be a coloring partition of $G$ and 
let $U\subseteq V(G)$. We say that the partition $\mathcal{X}|_U$ of $U$ formed by nonempty sets $U\cap X_1,\ldots,U\cap X_\ell$ is a \emph{projection} of $\mathcal{X}$ on $U$;  we assume that $\mathcal{X}_\emptyset$ contains the unique element $\emptyset$. 

\begin{lemma}\label{lem:proj}
Let $G\in \chordalmke$ and let $A\subseteq \binom{V(G)}{2}$ such that $|A|\leq k$ and $G'=G+A$ is a chordal graph. Let also $C$ be a  clique of $G'$.  
Then there are at most $(2k)^{2k}$ partitions of $C$ into independent in $G$ sets 
 and these partitions can be enumerated in $2^{\Oh(k\log k)}\cdot n$ time.
\end{lemma}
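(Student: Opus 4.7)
The plan is to exploit the fact that $C$ is a clique in $G' = G+A$, so the only non-edges of $G[C]$ are those lying in $A$. In particular, the complement $\overline{G}[C]$ has at most $|A \cap \binom{C}{2}| \le k$ edges. Let $X \subseteq C$ denote the set of vertices of $C$ having at least one $G$-non-neighbor inside $C$ (equivalently, those incident in $\overline{G}[C]$ to some edge), and let $Y = C\setminus X$. Since $|A|\le k$, we have $|X|\le 2k$, while every $y\in Y$ is $G$-adjacent to every other vertex of $C$.

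The key observation is that in any partition $\mathcal{P}$ of $C$ into $G$-independent sets, each vertex $y\in Y$ must appear as a singleton part: since $y$ is $G$-adjacent to every other vertex of $C$, the only $G$-independent subset of $C$ containing $y$ is $\{y\}$ itself. Hence $\mathcal{P}$ is completely determined by its restriction $\mathcal{P}|_X$, which is an arbitrary partition of $X$ into $G$-independent sets. Therefore the number of partitions of $C$ into $G$-independent sets is at most the number of partitions of $X$, which is bounded by the Bell number $B_{|X|}\le |X|^{|X|}\le (2k)^{2k}$.

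For the enumeration, I would first identify $X$ by inspecting the endpoints of the at most $k$ non-edges of $G[C]$, then enumerate all $B_{|X|}=2^{\Oh(k\log k)}$ partitions of $X$, checking each one in $\mathrm{poly}(k)$ time for the $G$-independence condition on $G[X]$; for each valid partition of $X$ I would output the corresponding partition of $C$ by appending the $|Y|$ singletons $\{y\}$ with $y\in Y$, at a cost of $\Oh(n)$ per partition. Summed over all $2^{\Oh(k\log k)}$ partitions this gives the claimed $2^{\Oh(k\log k)}\cdot n$ running time. The only conceptually nontrivial point is the singleton observation, which shrinks the problem from partitioning $C$ (potentially of size $\Theta(n)$) to partitioning the small core $X$ of size at most $2k$; once this is in place, the bound and the enumeration reduce to standard brute force over partitions of an $\Oh(k)$-element set.
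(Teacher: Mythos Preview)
Your proposal is correct and follows essentially the same approach as the paper: both identify the small ``core'' of at most $2k$ vertices in $C$ touched by the missing edges, observe that every other vertex of $C$ is universal in $G[C]$ and hence must be a singleton in any partition into independent sets, and then bound and enumerate partitions of the core by brute force. The paper defines its core $S$ as all endpoints in $C$ of pairs in $A$ (even if the other endpoint lies outside $C$), whereas your $X$ is the possibly smaller set of endpoints of non-edges lying entirely inside $C$; this is a cosmetic difference and the argument is the same.
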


\begin{proof}
Let $S$ be the set of end-vertices of the elements of $A$ in $C$. Then $C'=C\setminus S$ is a clique of $G$ such that each vertex $v$ of $C$ is adjacent to every  vertex of $C'$ distinct from $v$. Therefore,  for every coloring partition $\mathcal{X}$ of $G$, the vertices of $C'$ form single-element sets of the projection of $\mathcal{X}$ on $C$. Therefore, only the vertices of $S$ may be included in nontrivial sets of the partitions. Since $|S|\leq 2k$, there are at most $(2k)^{2k}$ partitions of $S$ into independent sets. Hence,  there are at most $(2k)^{2k}$ partitions of  $C$. To enumerate all the partitions, we do the brute force enumeration of the partitions of $S$ and verify for each partition, whether this is a partition of $S$ into independents sets. 
Since $|C|\leq n$, the enumeration can be done in $2^{\Oh(k\log k)}\cdot n$ time.   
\end{proof}

Lemma~\ref{lem:proj} implies that there are at most  $(2k)^{2k}$ projections of the coloring partitions of $G$ on $C$ and these projections can be enumerated in $2^{\Oh(k\log k)}\cdot n$ time.
This 
allows to construct a dynamic programming algorithm for \probCOL. 

\begin{theorem}\label{thm:col}
\probCOL can be solved in time $2^{\Oh(k\log k)}\cdot n^{\Oh(1)}$ on $\chordalmke$.
\end{theorem}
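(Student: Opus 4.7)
The plan is to perform dynamic programming on a nice $k$-almost chordal tree decomposition of $G$ whose bags are cliques of a chordal completion of $G$, and to use partitions of each bag into $G$-independent sets (rather than full colourings) as the DP state; this keeps the number of states per bag bounded by the quantity from Lemma~\ref{lem:proj}. First, using Lemma~\ref{lem:constr-almost}, in time $2^{\Oh(\sqrt{k}\log k)}\cdot n^{\Oh(1)}$ I compute a set $A\subseteq\binom{V(G)}{2}\setminus E(G)$ of size at most $k$ for which $G'=G+A$ is chordal, together with a nice clique tree decomposition $\mathcal{T}=(T,\{X_t\}_{t\in V(T)})$ of $G'$ with $\Oh(n^2)$ nodes in which every bag is a clique of $G'$. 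By Lemma~\ref{lem:proj} every bag $X_t$ admits at most $(2k)^{2k}$ partitions into sets independent in $G$, and these can be enumerated bag by bag in total time $2^{\Oh(k\log k)}\cdot n^{\Oh(1)}$.

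For each node $t\in V(T)$, each partition $\mathcal{X}_t$ of $X_t$ into $G$-independent sets, and each integer $q\in\{0,\dots,n\}$, define $f[t,\mathcal{X}_t,q]$ to be true iff $G[V_t]$ admits a partition into exactly $q$ sets independent in $G$ whose projection on $X_t$ equals $\mathcal{X}_t$. Then $\chi(G)$ is the minimum $q$ with $f[r,\emptyset,q]=\text{true}$, where $r$ is the root of $\mathcal{T}$, and the algorithm accepts iff this minimum is at most $\ell$. The total number of DP cells is $\Oh(n^2)\cdot (2k)^{2k}\cdot (n+1)=2^{\Oh(k\log k)}\cdot n^{\Oh(1)}$, which is why the claimed running time goes through.

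The recurrences are the standard ones from chordal-like dynamic programming, with one small bookkeeping trick. At an introduce node $t$ with child $t'$ and new vertex $v$ (whose neighbours in $V_t$ all lie in $X_t$), let $\mathcal{X}'_{t'}$ denote the partition of $X_{t'}$ obtained from $\mathcal{X}_t$ by removing $v$ from its class; if $v$ belonged to a non-singleton class I simply inherit $f[t',\mathcal{X}'_{t'},q]$, while if $\{v\}$ was itself a class of $\mathcal{X}_t$ I take the disjunction of $f[t',\mathcal{X}'_{t'},q-1]$ (open a new class for $v$) and $f[t',\mathcal{X}'_{t'},q]$ conditional on $q>|\mathcal{X}'_{t'}|$ (merge $v$ into an internal class of the child, which exists precisely when $q$ exceeds the number of classes of the child partition touching $X_{t'}$, since the internal count equals $q-|\mathcal{X}'_{t'}|$). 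Forget nodes take the disjunction over the at most $|\mathcal{X}_t|+1$ liftings of $\mathcal{X}_t$ that place the forgotten vertex $w$ into an existing class that remains $G$-independent after adding $w$ or into a fresh singleton. At a join node with children $t_1,t_2$ the internal classes on the two sides are vertex-disjoint and independently realizable, hence the recurrence reads $f[t,\mathcal{X}_t,q]=\bigvee_{q_1+q_2=q+|\mathcal{X}_t|}f[t_1,\mathcal{X}_t,q_1]\wedge f[t_2,\mathcal{X}_t,q_2]$.

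The slightly delicate step to nail down is the introduce recurrence in the case when $\{v\}$ is a singleton class of $\mathcal{X}_t$: one must argue that the counter $q$ faithfully encodes the availability of an internal class in the child into which $v$ may be absorbed, and this follows from the identity that any partition of $V_{t'}$ extending $\mathcal{X}'_{t'}$ with exactly $q$ classes has $q-|\mathcal{X}'_{t'}|$ internal classes, combined with the fact that $v$ has no neighbours outside $X_t$ so can join any such internal class. Each transition is computable in polynomial time, so the dynamic programming runs in $2^{\Oh(k\log k)}\cdot n^{\Oh(1)}$ time overall, and together with the preprocessing this gives the advertised bound.
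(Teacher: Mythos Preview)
Your overall architecture matches the paper's: build a nice tree decomposition whose bags are cliques of a chordal completion, use Lemma~\ref{lem:proj} to bound the number of partitions of each bag into $G$-independent sets, and do DP over these partitions. The difference is that the paper records only a boolean ``is there an $\ell$-colouring of $G[V_t]$ with this projection?'', whereas you additionally carry the exact class count~$q$. That extra bookkeeping is where your argument breaks.

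The join recurrence
\[
f[t,\mathcal{X}_t,q]=\bigvee_{q_1+q_2=q+|\mathcal{X}_t|}f[t_1,\mathcal{X}_t,q_1]\wedge f[t_2,\mathcal{X}_t,q_2]
\]
is not complete: it implicitly assumes that internal classes coming from the two children stay distinct in the combined partition. But since $X_t$ separates $V_{t_1}\setminus X_t$ from $V_{t_2}\setminus X_t$, any internal class on the left can be merged with any internal class on the right and the union is still independent; thus a $q_1$-partition and a $q_2$-partition with common projection $\mathcal{X}_t$ can be glued into a $q$-partition for every $q$ with $\max(q_1,q_2)\le q\le q_1+q_2-|\mathcal{X}_t|$, not only for $q=q_1+q_2-|\mathcal{X}_t|$. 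Concretely, take $G=P_3$ on vertices $b,a,c$ with edges $ab,ac$, and a decomposition with a join at bag $\{a\}$ whose two branches carry $\{a,b\}$ and $\{a,c\}$. On each side the only partition with projection $\{\{a\}\}$ has two classes, so your recurrence gives $f[\text{join},\{\{a\}\},2]=\textit{false}$ and $f[\text{join},\{\{a\}\},3]=\textit{true}$, and after forgetting $a$ you output $\chi(G)=3$; but $P_3$ is bipartite.

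The paper sidesteps this entirely by not tracking $q$: at a join it simply takes the conjunction of the two children's booleans, and the soundness direction uses exactly the merging-across-the-join trick above to show that two $\ell$-partitions combine into an $\ell$-partition (the combined count is $\max(q_1,q_2)\le\ell$). If you want to keep your formulation, the easy fix is to relax the join to range over all $q_1,q_2\le q$ with $q_1+q_2\ge q+|\mathcal{X}_t|$; this is still polynomial per cell and gives the correct answer.
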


\begin{proof}
Since the algorithm is constructed along the same lines as the algorithms from Section~\ref{sec:subexpalgo},  we only briefly sketch the main idea.

Let $(G,\ell)$ be an instance of \probCOL.
By Lemma~\ref{lem:constr-almost}, one can construct a nice $k$-almost chordal tree decomposition  $\mathcal{T}=(T,\{X_t\}_{t\in V(T)})$ of $G$ with $\Oh(n^2)$ bags. Let $r$ be the root of $T$
and recall that $V_t$ denotes the union of the bags in the subtree of $T$ rooted in $t$. For $t\in V(T)$,
let $\mathcal{Y}_t$ be the family of all partitions of $X_t$ into independent sets. We put  $\mathcal{Y}_t=\{\emptyset\}$ if $X_t=\emptyset$. 

For every $\mathcal{X}\in \mathcal{Y}_t$, we define 
$${\sf Col}_t(\mathcal{X})=
\begin{cases}
true&\mbox{if there is an $\ell$-coloring partition }\mathcal{X}'\text{ of }G[V_t]\text{ s.t. }\mathcal{X}'|_{X_t}=\mathcal{X},\\
false&\mbox{otherwise}.
\end{cases}
$$

We compute the table of values of ${\sf Col}_t(\mathcal{X})$ bottom-up for $t\in V(V)$ starting from the leaves. Clearly, $G$ is $\ell$-colorable if and only if ${\sf Col}_r(\emptyset)=true$.

\medskip
\noindent {\bf Leaf node.} Computing the tables for leaves is trivial as the bags are empty and we assume that the empty graph is $\ell$-colorable.  

\medskip
From now on we assume  that a node $t\in V(T)$ has children and the tables are already constructed fro them. 

\medskip
\noindent {\bf Introduce node.} Let $t$ be an introduce node of $T$. Denote by $t'$ its child and assume that $X_t=X_{t'}\cup\{v\}$ for $v\notin X_{t'}$. For $\mathcal{X}\in\mathcal{Y}_t$, we define $\mathcal{X}-v$ as  the partition of $X_{t'}$ obtained from $\mathcal{X}$ either by the deletion of $\{v\}$ if $\{v\}$ is an element of $\mathcal{X}$ or by the deletion of $v$ from a nontrivial set of $\mathcal{X}$ containing $v$. For every $\mathcal{X}\in\mathcal{Y}_t$, we set
$${\sf Col}_t(\mathcal{X}):=(|\mathcal{X}|\leq \ell)\wedge {\sf Col}_{t'}(\mathcal{X}-v).
$$
 
\medskip
\noindent {\bf Forget node.} Let $t$ be a forget node of $T$. Denote by $t'$ its child and assume that $X_t=X_{t'}\setminus\{v\}$ for $v\in X_{t'}$. For $\mathcal{X}\in\mathcal{Y}_t$, we define 
$\mathcal{X}+v\subseteq \mathcal{Y}_{t\rq{}}$ to be the sets of all partitions $\mathcal{X}\rq{}$ of $X_{t\rq{}}$ into independent sets such that  $\mathcal{X}=\mathcal{X}\rq{}-v$, that is, every $\mathcal{X}\rq{}\in\mathcal{X}+v$ is either obtained by adding the single-element set $\{v\}$ or by including $v$ into one of the independent set. 
  For every $\mathcal{X}\in\mathcal{Y}_t$, we set
$${\sf Col}_t(\mathcal{X}):= \bigvee_{\mathcal{X}\rq{}\in\mathcal{X}+v}{\sf Col}_{t'}(\mathcal{X}\rq{}).
$$

\medskip\noindent {\bf{Join node.}}  Let $t$ be a join node of  $T$ with children $t_1$ and $t_2$. Then $\mathcal{Y}_t=\mathcal{Y}_{t_1}=\mathcal{Y}_{t_2}$.  
 For every $\mathcal{X}\in\mathcal{Y}_t$, 
$${\sf Col}_t(\mathcal{X}):= {\sf Col}_{t_1}(\mathcal{X})\wedge {\sf Col}_{t_2}(\mathcal{X}).$$

The correctness of the computation of the tables of values of ${\sf Col}_t(\mathcal{X})$ for $t\in\ V(T)$ is proved by standard arguments. By Lemma~\ref{lem:proj}, we have that each table has at most $(2k)^{2k}=2^{\Oh(k\log k)}$ elements. The same lemma together with the description of the computing the tables for the leave and the  introduce, forget and join nodes implies that the computation of the table for every $t\in V(T)$ can be done in time $2^{\Oh(k\log k)}n$. Therefore, the total running time is   $2^{\Oh(k\log k)}\cdot n^{\Oh(1)}$.
\end{proof}

Now we show that it is unlikely that \probCOL can be solved in subexponential in $k$ time.   For this, we show the complexity lower bound based on ETH. 
For this, we use the result of Komusiewicz and Uhlmann~\cite{KomusiewiczU12} for  the auxiliary \textsc{Triangle Cover} problem that, given a graph $G$ with $n=3p$ vertices, asks whether $V(G)$ can be covered by $p$ disjoint triangles, that is, by $p$ copies of $K_3$ that are subgraphs of $G$. 

\begin{proposition}[\cite{KomusiewiczU12}]\label{lem:triangle-cover}
Unless ETH is false, \textsc{Triangle Cover} cannot be solved in time $2^{o(n+m)}\cdot(n+m)^{\Oh(1)}$ even when the input restricted to the graphs without induced $K_4$. 
\end{proposition}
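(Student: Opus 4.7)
The plan is to prove this ETH lower bound by a polynomial-time, linear-size reduction from $3$-SAT to \textsc{Triangle Cover} restricted to $K_4$-free graphs. By the Sparsification Lemma of Impagliazzo, Paturi, and Zane, one may assume the input formula $\varphi$ has $n$ variables and $m = \Oh(n)$ clauses. If the reduction outputs a graph $G$ with $|V(G)| + |E(G)| = \Oh(n+m)$, then a hypothetical $2^{o(n+m)}\cdot(n+m)^{\Oh(1)}$-time algorithm for \textsc{Triangle Cover} would, through this reduction, yield a $2^{o(n)}$-time algorithm for $3$-SAT, contradicting ETH. Because triangle covers require $|V(G)|$ divisible by $3$, the construction will also include a constant number of filler triangles to restore divisibility, which does not affect the size bound.

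First, for each variable $x_i$ I would design a variable gadget admitting exactly two perfect triangle covers, corresponding to the two truth values of $x_i$. A natural design is a cyclic ``ladder'' of triangles of even length whose two alternating triangle-packings leave opposite sides of the ladder uncovered; the uncovered vertices serve as \emph{true}-ports and \emph{false}-ports, one pair of ports per occurrence of $x_i$ in a clause. Second, for each clause $C_j = \ell_{j,1} \vee \ell_{j,2} \vee \ell_{j,3}$, I would construct a clause gadget consisting of one apex vertex with three attached triangles, one per literal, each literal-triangle sharing a single port vertex with the corresponding variable gadget. A perfect triangle cover is then forced to consume exactly one literal-triangle, which must correspond to a literal whose port is currently uncovered by the variable gadget, i.e.~a satisfied literal. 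The equivalence between perfect triangle covers of $G$ and satisfying assignments of $\varphi$ then follows by a standard gadget-composition argument.

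The main obstacle is enforcing that the resulting graph contains no induced $K_4$. In triangle-packing reductions this is typically achieved by keeping all gadget triangles edge-disjoint and by carefully controlling common neighborhoods at the port vertices: one chooses ports of small degree and checks locally that any two triangles meeting at a port share only that single vertex, so no four-vertex set spans six edges. Once the gadgets are glued, $K_4$-freeness reduces to a finite local case analysis on the types of vertices involved (variable-gadget interior, clause apex, port). Since each variable contributes $\Oh(1 + \#\text{occurrences of }x_i)$ vertices and edges and each clause contributes $\Oh(1)$ more, and since in a sparsified $3$-SAT instance the total number of literal occurrences is $\Oh(n+m)$, the reduction has the desired linear size and the ETH lower bound follows.
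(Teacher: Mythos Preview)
The paper does not prove this proposition at all: it is quoted verbatim as a result of Komusiewicz and Uhlmann~\cite{KomusiewiczU12} and used as a black box in the proof of Theorem~\ref{thm:col-lower}. There is therefore no ``paper's own proof'' to compare against; you are attempting to re-derive a cited external result.

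As for the sketch itself, the overall strategy (linear-size reduction from sparsified $3$-SAT, variable gadgets with two perfect triangle packings, clause gadgets that force at least one satisfied literal, local $K_4$-freeness check) is the standard template for ETH lower bounds of this kind and is plausible. But several steps are not pinned down enough to be a proof. In your clause gadget, an apex with three pendant triangles has $1+3\cdot 2=7$ new vertices (plus shared ports); if a perfect triangle cover uses exactly one of the three literal-triangles to absorb the apex, you must still cover the remaining four non-port vertices of the other two literal-triangles, and you have not said how. Likewise, your variable gadget is described as having packings that ``leave opposite sides of the ladder uncovered,'' but a perfect triangle cover cannot leave vertices uncovered, so the interface between variable and clause gadgets needs to be spelled out so that the port vertices are covered exactly once by either side depending on the truth value. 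Finally, the $K_4$-freeness argument is deferred to ``a finite local case analysis,'' which is where these constructions typically break; without the concrete gadgets one cannot verify it. If you want a self-contained proof you should either reproduce the actual construction from~\cite{KomusiewiczU12} or give complete gadgets with the case analysis; otherwise, citing the proposition as the paper does is the appropriate course.
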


Now we rule out the existence of a subexponential algorithm for \probCOL on $\chordalmke$ parameterized by $k$. In fact, we show a stronger claim.

\begin{theorem}\label{thm:col-lower}
\probCOL cannot be solved in time $2^{o(k)}\cdot n^{\Oh(1)}$ on graphs in $\compmke$ unless ETH fails. 
\end{theorem}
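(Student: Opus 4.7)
The plan is to reduce \textsc{Triangle Cover} on $K_4$-free graphs to \probCOL on $\compmke$ with a linear blow-up of the parameter, so that a $2^{o(k)}\cdot n^{\Oh(1)}$ algorithm for the latter would contradict Proposition~\ref{lem:triangle-cover}. The reduction I have in mind is the simplest possible one: given a $K_4$-free instance $G$ of \textsc{Triangle Cover} on $n=3p$ vertices, output the pair $(G',\ell):=(\overline{G},p)$.

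The correctness of this reduction rests on two observations. First, an $\ell$-coloring of $\overline{G}$ is nothing but a partition of $V(G)$ into at most $\ell$ cliques of $G$. Second, because $G$ is $K_4$-free every clique of $G$ has at most $3$ vertices, so any partition of $V(G)=3p$ vertices into $p$ cliques of $G$ must consist of $p$ triangles. Hence $\overline{G}$ is $p$-colorable if and only if $V(G)$ can be covered by $p$ vertex-disjoint triangles. For the parameter bound, $\overline{G}$ can be turned into a complete graph by adding exactly the $m=|E(G)|$ edges of $G$, so $\cliqc(\overline{G})=m$ and therefore $G'\in\compmke$ for $k:=m$.

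To convert the reduction into the claimed ETH lower bound, observe that we may assume $m\geq n$: a triangle cover of $3p$ vertices uses $p$ edge-disjoint triangles containing $3p=n$ edges in total, so any graph with $m<n$ is a trivial no-instance of \textsc{Triangle Cover} and can be rejected in polynomial time. Under this assumption $n+m\leq 2m=2k$, so a hypothetical algorithm solving \probCOL on $\compmke$ in time $2^{o(k)}\cdot n^{\Oh(1)}$ would, via the reduction, solve \textsc{Triangle Cover} on $K_4$-free graphs in time $2^{o(m)}\cdot n^{\Oh(1)}=2^{o(n+m)}\cdot(n+m)^{\Oh(1)}$, contradicting Proposition~\ref{lem:triangle-cover} and hence ETH.

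I do not see a serious obstacle here: the reduction is a one-line construction and the only thing to be careful about is the relation between the parameter $k$ of the produced \probCOL instance and the size $n+m$ of the source instance. The slight subtlety is that \textsc{Triangle Cover} is hard in the combined parameter $n+m$, whereas our parameter $k=m$ only equals $m$; this is resolved by the trivial observation that yes-instances must satisfy $m\geq n$, which absorbs $n$ into $m$ up to a factor of two.
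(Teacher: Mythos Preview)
Your proposal is correct and follows essentially the same approach as the paper: both take the complement of a $K_4$-free \textsc{Triangle Cover} instance to obtain a \probCOL instance whose clique-completion parameter equals the number of edges of the original graph, and both rely on the clique-cover/coloring duality together with the $K_4$-freeness to force all cliques to be triangles. Your treatment is in fact slightly more careful than the paper's, since you explicitly justify why $k=m$ dominates $n+m$ (via the observation that yes-instances of \textsc{Triangle Cover} must have $m\geq n$), a point the paper leaves implicit.
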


\begin{proof}
A set of cliques $\{C_1,\ldots,C_\ell\}$ of a graph $G$ is a \emph{clique cover} if the cliques are pairwise disjoint and $V(G)=\cup_{i=1}^\ell C_i$.
 It is well-known that $G$ is $\ell$-colorable if and only if its complement $\overline{G}$ has a clique cover of size $\ell$. Let $G$ be a graph with $n=3\ell$ vertices that has no induced $K_4$. Observe that $G$ has a clique cover of size $\ell$ if and only if $V(G)$ can be covered by $\ell$ cliques of size 3, that is, by triangles. Consider a graph $G$ with $n=3\ell$ vertices such that $\overline{G}$ has no induced $K_4$. We obtain that $(G,\ell)$ is a yes-instance of \probCOL if and only if $\overline{G}$ is a yes-instance of \textsc{Triangle Cover}. Observe that $G\in\compmke$ if and only if $\overline{G}$ has at most $k$ edges.  Then the existence of an algorithm for \probCOL running in time $2^{o(k)}\cdot n^{\Oh(1)}$ on graphs in $\compmke$ would  imply that 
\textsc{Triangle Cover} can be solved in time $2^{o(k)}\cdot n^{\Oh(1)}$ on graphs without induced $K_4$ with at most $k$ edges. This is impossible unless ETH is false by Proposition~\ref{lem:triangle-cover}.
\end{proof}
 \subsection{Clique in $\chordalmke$ graphs} 
Now we consider the \probCL problem that asks, given a graph $G$ and a positive integer $\ell$, whether $G$ has a clique of size at least $\ell$. We show that \textsc{Clique} is \classFPT on $\chordalmke$ when parameterized by $k$ even for the weighted variant of the problem in Section~\ref{sec:kern} by demonstrating that the problem admits a Turing kernel. Here, we prove the following lower bound. 

\begin{theorem}\label{thm:clique-lower}
\probCL cannot be solved in time $2^{o(k)}\cdot n^{\Oh(1)}$ on graphs in $\compmke$ unless ETH fails. 
\end{theorem}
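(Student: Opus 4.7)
The plan is to mirror the complementation trick used in the proof of Theorem~\ref{thm:col-lower}. The starting point is the elementary equivalence that $G$ contains a clique of size $\ell$ if and only if its complement $\overline{G}$ contains an independent set of size $\ell$, combined with the fact that $G \in \compmke$ if and only if $\overline{G}$ has at most $k$ edges. Consequently, a hypothetical $2^{o(k)} \cdot n^{\Oh(1)}$ algorithm for \probCL on $\compmke$ would immediately transfer to a $2^{o(m)} \cdot n^{\Oh(1)}$ algorithm for \probIS on arbitrary graphs, where $m$ denotes the number of edges of the input graph.

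Next, I would invoke a standard ETH-based lower bound on \probIS restricted to sparse graphs. Starting from $3$-SAT on $n$ variables, the Sparsification Lemma (or, equivalently, the ETH-hardness of the variant of $3$-SAT in which each variable occurs in at most a constant number $B$ of clauses) yields an equivalent formula with $\Oh(n)$ clauses. The textbook reduction to \probIS creates one vertex per literal occurrence, a triangle inside each clause, and an edge between each pair of occurrences of contradictory literals; this results in a graph with $\Oh(n)$ vertices and $\Oh(n)$ edges whose maximum independent set has size equal to the number of clauses exactly when the formula is satisfiable. A $2^{o(m)} \cdot n^{\Oh(1)}$ algorithm for \probIS on such instances would thus solve $3$-SAT in time $2^{o(n)}$, contradicting ETH.

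Chaining these two observations yields the theorem. The only subtle point I foresee is ensuring that the \probIS instance produced by the reduction is genuinely sparse: without a sparsification preprocessing one only gets $\Oh(n^2)$ edges in the reduced graph, which would make the resulting bound vacuous. Once one starts from a bounded-occurrence variant of $3$-SAT, all relevant quantities are linear in $n$, and both implications (SAT to \probIS, and \probIS on $m$-edge graphs to \probCL on $\compmke$) go through cleanly. Notably, unlike in the proof of Theorem~\ref{thm:col-lower} one does not need to route through the specialized \textsc{Triangle Cover} hardness result of Komusiewicz and Uhlmann; the classical SAT-to-\probIS reduction is enough, because the upper bound $2^{o(k)}$ we wish to rule out translates directly into the edge parameter of $\overline{G}$.
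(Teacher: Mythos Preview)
Your proposal is correct and follows essentially the same approach as the paper: both use the complementation trick (Clique in $G$ $\leftrightarrow$ Independent Set in $\overline{G}$, and $G\in\compmke$ $\leftrightarrow$ $|E(\overline{G})|\leq k$) to reduce to a known ETH lower bound for \probIS. The only difference is packaging: the paper directly invokes the result of Impagliazzo, Paturi and Zane that \probIS cannot be solved in time $2^{o(n+m)}\cdot n^{\Oh(1)}$ unless ETH fails, whereas you reconstruct this bound from scratch via the Sparsification Lemma and the textbook $3$-SAT-to-\probIS reduction.
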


\begin{proof}
Observe that a graph $G$ has a clique of size at least  $\ell$ if and only if $\overline{G}$ has an independent set of size at least $\ell$.  
Recall that $G\in\compmke$ if and only if $\overline{G}$ has at most $k$ edges.  
It was shown by Impagliazzo, Paturi and Zane~\cite{ImpagliazzoPZ01} that \probIS cannot be solved in time $2^{o(n+m)}n^{\Oh(1)}$ on graphs with $n$ vertices and $m$ edges unless ETH fails. These  observations immediately imply the claim of the theorem.
\end{proof}

We proved that \probCOL and \probCL do not admit subexponential algorithms on $\compmke$, when parameterized by $k$, unless ETH fails. Note that by Observation~\ref{obs:bounds}, this means that these problems cannot be solved by subexponential algorithms on $\chordalmke$ as well unless ETH is false.

\section{Kernelization on \chordalmke}\label{sec:kern}
In this section we discuss kernelization of the problems considered in the previous section.  

Jansen and Bodlaender in \cite{JansenB13} and Bodlaender, Jansen and Kratsch in 
\cite{BodlaenderJK14} 
proved that \probWIS, \probWVC , \probWBS,  \probWOCT, \probWFVS and \textsc{Clique}
do not admit a polynomial kernel parameterized by the size of the minimum vertex cover of the  graph unless $\compass$.

We use the following observation.

\begin{observation}\label{obs:split}
If $G$ has a vertex cover of size at most $k$, then $\splitc(G)\leq\binom{k}{2}$.
\end{observation}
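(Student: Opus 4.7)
The plan is to use the vertex cover directly as the clique side of a split partition. Let $C\subseteq V(G)$ be a vertex cover of $G$ with $|C|\leq k$. By definition of a vertex cover, every edge of $G$ has at least one endpoint in $C$, hence $I=V(G)\setminus C$ is an independent set in $G$.

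Next, I would define $A=\binom{C}{2}\setminus E(G)$, the set of all ``missing'' pairs inside $C$, and consider $G'=G+A$. In $G'$ the set $C$ induces a clique, while $I$ is still an independent set because we added no edges incident to vertices of $I$. Therefore $V(G')=C\cup I$ is a partition of $V(G')$ into a clique and an independent set, which is precisely the definition of a split graph. Hence $G'$ is a split complementation of $G$, showing that $G\in\splitmke[|A|]$.

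Finally I bound $|A|$: since $|C|\leq k$, we have $|A|\leq \binom{|C|}{2}\leq \binom{k}{2}$, so $\splitc(G)\leq \binom{k}{2}$. There is no real obstacle here; the only thing to double-check is that the empty independent set $I$ (the corner case $C=V(G)$) is allowed in the definition of a split graph, which is standard.
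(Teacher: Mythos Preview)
Your proof is correct and follows essentially the same approach as the paper: take the vertex cover as the clique side of a split partition, add all missing edges inside it, and bound the number of added edges by $\binom{k}{2}$. You merely spell out in more detail why $G+A$ is split, whereas the paper states this in one line.
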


\begin{proof}
Let $G$ be a graph with a vertex cover $X$ of size at most $k$. Let $A=\binom{X}{2}\setminus E(G[X])$. Clearly, $G+A$ is a split graph.  Therefore, $\splitc(G)\leq \binom{k}{2}$. 
\end{proof}

Observation~\ref{obs:split} together with the results of~\cite{BodlaenderJK14,JansenB13} give the following proposition.

\begin{proposition}\label{prop:no_polykernel}
\probWIS, \probWVC , \probWBS,  \probWOCT, \probWFVS and \textsc{Clique} do not admit a polynomial in $k$ kernel on $\splitmke$ graphs unless $\compass$.
\end{proposition}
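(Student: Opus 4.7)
The plan is to derive Proposition~\ref{prop:no_polykernel} as a direct consequence of Observation~\ref{obs:split} together with the results of~\cite{BodlaenderJK14,JansenB13}. The idea is a parameter-preserving transformation: any instance of one of the listed problems whose graph has a vertex cover of size $k$ is automatically an instance parameterized by $\splitc$ with parameter at most $\binom{k}{2}$, so a polynomial kernel in $\splitc(G)$ would yield a polynomial kernel in the vertex cover number.

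More precisely, I would argue by contradiction. Fix one of the six problems $\Pi$ and suppose there is a polynomial kernelization for $\Pi$ on $\splitmke$ parameterized by $k$, i.e., an algorithm that takes an instance $(G,\ldots,k)$ with a promise that $G\in\splitmke$ and outputs an equivalent instance of bit-size at most $p(k)$ for some polynomial $p$. I would then show how to use it to obtain a polynomial kernel for $\Pi$ parameterized by the minimum vertex cover size, contradicting \cite{JansenB13,BodlaenderJK14}. Given an instance $(G,\ldots)$ together with a vertex cover $X\subseteq V(G)$ of size at most $\tau$, I set $k := \binom{\tau}{2}$. By Observation~\ref{obs:split} we have $\splitc(G)\le k$, and moreover a \textsc{Split}-modulator of size at most $k$ is explicitly given by $A=\binom{X}{2}\setminus E(G[X])$, so no NP-hard modulator computation is needed. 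Feeding $(G,\ldots,k)$ to the hypothetical kernelization produces an equivalent instance of size $p(k)=p\!\left(\binom{\tau}{2}\right)$, which is polynomial in $\tau$.

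A minor subtlety is that the results cited in the preamble assume that the vertex cover (or its size) is part of the input, since this is the standard way of stating lower bounds for structural parameterizations; a vertex cover of size $\tau$ can always be assumed to come with the instance, either explicitly or because any $2$-approximation yields a cover of size within a constant factor of the minimum, which only changes $\tau$ by a constant factor and hence $k$ by a polynomial factor. Another subtlety is the weighted setting: the kernel we feed in must respect the weights, but since we are merely passing the input graph through (without altering weights), the equivalence of yes-instances is preserved. Thus every one of \probWIS, \probWVC, \probWBS, \probWOCT, \probWFVS and \textsc{Clique} inherits the lower bound from the vertex cover parameterization, and the existence of a polynomial kernel on $\splitmke$ for any of them would imply $\ccompass$.

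The main (minor) obstacle is really just bookkeeping: making sure that the transformation $(G,\tau)\mapsto(G,\binom{\tau}{2})$ is a polynomial-parameter transformation in the sense of~\cite{BodlaenderJK14}, and that the previously cited lower bounds are stated in a form that rules out polynomial kernels parameterized by the vertex cover number (and not merely by some other related parameter). Once this is verified, Proposition~\ref{prop:no_polykernel} follows by a one-line appeal to the cited theorems.
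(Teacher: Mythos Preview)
Your proposal is correct and follows exactly the paper's approach: the paper simply states that the proposition follows from Observation~\ref{obs:split} together with the results of~\cite{BodlaenderJK14,JansenB13}, and you have spelled out precisely that polynomial-parameter transformation argument. One minor typo: you wrote \verb|$\ccompass$| where the paper's macro is \verb|$\compass$|.
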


By Observation~\ref{obs:bounds}, these problems  parameterized by $k$ have no polynomial kernel on $\chordalmke$ as well unless $\compass$.

These results do not refute the existence of polynomial Turing kernels. We show that \probWCL has such a kernel. The input of \probWCL contains a graph $G$ together with a weight function $w\colon V(G)\rightarrow \mathbb{Z}^+$ and a nonnegative integer $W$, and the task is to decide whether $G$ has a clique $C$ of weight at least $W$.

Our kernelization algorithm uses the following well-known property of chordal graphs.

\begin{proposition}[\cite{Gavril74a,TarjanY84}]\label{prop:max-cl}
An $n$-vertex chordal graph has at most $n$ inclusion-maximal cliques and they can be listed in linear time.
\end{proposition}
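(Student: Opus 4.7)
The plan is to prove both parts by leveraging the existence of a perfect elimination ordering (PEO) of a chordal graph, which is the classical characterization of chordality due to Fulkerson and Gross. Recall that a PEO is an ordering $v_1, \ldots, v_n$ of $V(G)$ such that for each $i$, the set $L_i = N_G(v_i) \cap \{v_{i+1}, \ldots, v_n\}$ of later neighbors of $v_i$ induces a clique. A chordal graph always admits such an ordering, and it can be computed in linear time using Maximum Cardinality Search (MCS) or LexBFS.

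For the counting bound, I would associate to every maximal clique $C$ of $G$ a witness vertex, namely the vertex $v_i \in C$ with smallest index in a fixed PEO. The key observation is that $C$ is then exactly $C_i := \{v_i\} \cup L_i'$ for some $L_i' \subseteq L_i$; indeed, every $v_j \in C$ with $j > i$ is a later neighbor of $v_i$, so $C \setminus \{v_i\} \subseteq L_i$. In fact, since $\{v_i\} \cup L_i$ is itself a clique (by the PEO property combined with $v_i$ being adjacent to every vertex of $L_i$), maximality of $C$ forces $C = \{v_i\} \cup L_i = C_i$. Hence each maximal clique is uniquely identified by its smallest-index vertex in the PEO, yielding at most $n$ maximal cliques.

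For the linear-time listing, I would enumerate the candidate cliques $C_1, \ldots, C_n$ defined above and then prune those that are not maximal. Computing the sets $L_i$ takes $\Oh(n+m)$ total time once a PEO is available. To test maximality efficiently, I would use the standard criterion: $C_i$ is non-maximal precisely when there exists $v_j \in L_i$ with $j > i$ such that $L_i \subseteq \{v_j\} \cup L_j$, in which case $C_i \subsetneq C_j$; equivalently, letting $p(i)$ be the element of $L_i$ with smallest index, $C_i$ is maximal iff $|L_i| \neq |L_{p(i)}| + 1$ (or $L_i = \emptyset$). Since this comparison boils down to comparing sizes computed during the PEO pass, the whole procedure runs in $\Oh(n+m)$ time.

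The main obstacle is making the linear-time maximality test rigorous: one has to argue that the simple size-comparison against the ``parent'' $p(i)$ correctly distinguishes maximal from non-maximal $C_i$, which relies on the fact that if $C_i \subseteq C_j$ for some $j > i$, then $p(i)$ witnesses this containment. This follows from the PEO property applied to the chordal graph and is exactly the structural ingredient behind the clique tree underpinning Proposition~\ref{lem02:nice-tw}; the linear-time bookkeeping is essentially the algorithm of Tarjan and Yannakakis. Given the classical nature of the statement, I would simply cite this pipeline rather than reproving the PEO computation from scratch.
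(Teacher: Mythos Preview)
The paper does not give its own proof of this proposition; it is stated purely as a citation to Gavril and to Tarjan--Yannakakis. Your sketch via a perfect elimination ordering is precisely the classical argument underlying those references, and the counting part (each maximal clique equals $\{v_i\}\cup L_i$ for its earliest vertex $v_i$) is correct.

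There is, however, a small slip in your maximality test. With $p(i)$ defined as the smallest-index vertex of $L_i$ you have $p(i)>i$, and then the containment goes the other way: from $L_i\setminus\{v_{p(i)}\}\subseteq L_{p(i)}$ one gets $C_{p(i)}\subseteq C_i$ when $|L_i|=|L_{p(i)}|+1$, so this condition flags $C_{p(i)}$ as non-maximal, not $C_i$. (Indeed $C_i\subseteq C_j$ with $j>i$ is impossible, since $v_i\notin\{v_j\}\cup L_j$.) The correct linear-time criterion is the mirror image: $C_i$ is non-maximal iff there exists $j<i$ with $p(j)=i$ and $|L_j|=|L_i|+1$. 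This is only a bookkeeping reversal; once fixed, your pipeline is exactly the Tarjan--Yannakakis algorithm the paper cites.
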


\begin{theorem}\label{thm:int-compr}
\probWCL on $\chordalmke$ parameterized by $k$ admits a Turing kernel with at most $16k^2$ vertices with size $\Oh(k^8)$. 
\end{theorem}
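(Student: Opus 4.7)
The approach combines three ingredients already set up in the paper: the polyopt triangulation algorithm (Proposition~\ref{prop:fillin-appr}), the fact that a chordal graph has only $\Oh(n)$ maximal cliques (Proposition~\ref{prop:max-cl}), and the Frank--Tardos weight compression (Proposition~\ref{prop:compression}).

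First I would run the approximation algorithm on $(G,k)$. It either certifies that $G\notin\chordalmke$, in which case we reject, or returns a triangulation $H=G+A$ with $|A|\le 8k^2$. Let $U\subseteq V(G)$ be the set of endpoints of edges in $A$, so $|U|\le 16k^2$. Using Proposition~\ref{prop:max-cl}, list all maximal cliques $C_1,\dots,C_m$ of $H$, with $m\le n$. The structural observation that drives the kernel is: any clique of $G$ is also a clique of $H$, hence lies inside some $C_i$; moreover, $G[C_i]$ is obtained from the complete graph on $C_i$ by removing the edges of $A$ lying in $C_i$, all of whose endpoints are in $U$. Consequently every vertex of $C_i\setminus U$ is still adjacent in $G$ to every other vertex of $C_i$, so a maximum weight clique of $G[C_i]$ is precisely $(C_i\setminus U)\cup S^*_i$, where $S^*_i$ is a maximum weight clique of the small subgraph $G[U_i]$ with $U_i:=C_i\cap U$ and $|U_i|\le 16k^2$.

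Set $W_i:=W-w(C_i\setminus U_i)$. Then $(G,w,W)$ is a yes-instance iff for some index $i$ the graph $G[U_i]$ admits a clique of weight at least $W_i$. To turn each of these into a query of size $\Oh(k^8)$, I would apply Proposition~\ref{prop:compression} to the integer vector $(-W_i,w(v_1),\dots,w(v_h))$ of length $h+1\le 16k^2+1$ with $N=h+2$. The output $(\bar W_i,\bar w(v_1),\dots,\bar w(v_h))$ has entries of absolute value at most $2^{4(h+1)^3}(h+2)^{(h+1)(h+3)}=2^{\Oh(k^6)}$, and satisfies
\[
\sign\Bigl(-W_i+\sum_{v\in S}w(v)\Bigr)=\sign\Bigl(-\bar W_i+\sum_{v\in S}\bar w(v)\Bigr)
\]
for every $S\subseteq U_i$ (since $\|(1,\chi_S)\|_1\le h+1\le N-1$). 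Thus the original threshold question is equivalent to the threshold question with compressed weights. Each query is posed on a graph with at most $16k^2$ vertices and weights of bit-length $\Oh(k^6)$, giving total size $\Oh(k^2\cdot k^6)=\Oh(k^8)$. The algorithm fires one query per $C_i$, at most $n$ in total, and answers yes iff some query does, so this is a polynomial-time Turing kernelization with the claimed bounds.

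The main technical obstacle is the compression step: Frank--Tardos may return negative entries, whereas \probWCL expects positive integer weights and a positive threshold, and what is preserved is only a sign of an inner product rather than raw weights. I would resolve this along the lines of Etscheid et al.~\cite{EtscheidKMR17}, translating the sign-preservation into a weighted clique query in positive weights by a uniform shift combined with a size-stratified sequence of oracle calls (stratifying by the cardinality $s=|S|$ of the candidate clique, which keeps the shift contribution $(M{+}1)s$ constant within each stratum and hence preserves comparisons). The stratification adds only an $\Oh(k^2)$ factor to the number of queries, and every queried instance still has at most $16k^2$ vertices and bit-size $\Oh(k^8)$, matching the statement of the theorem.
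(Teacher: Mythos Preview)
Your approach is essentially the same as the paper's. The one unnecessary complication is your worry about Frank--Tardos producing negative entries: by applying the sign-preservation guarantee to each standard unit vector $b=e_j$ (which has $\|b\|_1=1\le N-1$), you get $\sign(\bar w(v_j))=\sign(w(v_j))>0$, so the compressed vertex weights are automatically positive; the same argument on the coordinate holding $W_i$ shows the compressed threshold is nonnegative. Hence no shift or size-stratification is needed --- the paper simply applies Frank--Tardos to $(w(v_1),\dots,w(v_p),W_i)$ with $N=p+2$ and observes directly that positivity is preserved and that $\sum_{j\in I}w(v_j)\ge W_i$ iff $\sum_{j\in I}\bar w(v_j)\ge \bar W_i$. (A small related point: the paper takes $W_i=\max\{0,W-w(C_i\setminus X_i)\}$ rather than your $W-w(C_i\setminus U_i)$, so that the threshold fed into the compression is already nonnegative.)
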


\begin{proof}
Let $(G,w,W)$ be an instance of \probWCL. 

We apply Proposition~\ref{prop:fillin-appr} to approximate the fill-in of $G$. If the algorithm reports that $\fillin(G)>k$, we report that $G\notin\chordalmke$ and stop.
Assume that this is not the case. Then the algorithm returns a set $A\subseteq \binom{V(G)}{2}$ of size at most $8k^2$ such that $G'=G+A$ is a chordal graph. 
We define $X$ to be the set of vertices that are the end-vertices of the edges of $A$. Note that $|X|\leq 16k^2$. 
Then we use Proposition~\ref{prop:max-cl} to list all the inclusion maximal cliques $C_1,\ldots,C_r$ of $G'$. Let $X_i=X\cap C_i$ for $i\in\{1,\ldots,r\}$. Observe that if $C$ is a clique of $G$ of weight at least $W$, then $C\subseteq C_i$ for some $i\in\{1,\ldots,r\}$. Observe also that each $G[C_i]$ contains a clique of weight at least $W$ if and only if $G_i=G[X_i]$ contains a clique of 
weight at least $W_i=\max\{0,W-w(C_i\setminus X_i)\}$. 
Trivially, $\fillin(G_i)\leq\fillin(G)$ for $i\in\{1,\ldots,r\}$, that is, each $G_i\in\chordalmke$ if $G\in\chordalmke$.

Observe that  $(G,w,W)$ is a yes-instance of \probWCL if and only if $(G_i,w,W_i)$ is a yes-instance for at least one $i\in\{1,\ldots,r\}$.  
Then Turing kernelization algorithm   solves \probWCL  by calling the oracle for each instance $(G_i,w,W_i)$ with the parameter $k$ for $i\in \{1,\ldots,r\}$.

Since $|V(G_i)|=|X_i|\leq |X|\leq 16k^2$ for $i\in\{1,\ldots,r\}$, we solve instances with the input graphs of bounded size.
To be able to call the oracle that solves the instances of bounded size, we have to compress the weights. For this, we apply Proposition~\ref{prop:compression}. 

Let $i\in\{1,\ldots,r\}$ and consider the instance $(G_i,w,W_i)$. 
Let $v_1,\ldots,v_p$ be the vertices of $G_i$.  Using the notation  of Proposition~\ref{prop:compression}, let $h=p+1$ and $N=p+2$. Consider  vector $w=(w(v_1),\ldots,w(v_p),W_i)^\intercal\in \mathbb{Z}^h$. The algorithm of Frank and Tardos finds a vector 
$\bar{w}=(\bar{w}_1,\ldots,\bar{w}_p,\bar{W}_i)^\intercal$ with $\|\bar{w}\|_{\infty}\leq 2^{4h^3}N^{h(h+2)}$
such that  $\sign(w\cdot b)=\sign(\bar{w}\cdot b)$ for all vectors $b\in \mathbb{Z}^h$ with $\|b\|_1\leq N-1$. 
 We define $\bar{w}(v_j)=\bar{w}_j$ for $j\in\{1,\ldots,p\}$ and consider the instance $(G_i,\bar{w},\bar{W}_i)$ of \probWIS. 

Since  $\sign(w\cdot b)=\sign(\bar{w}\cdot b)$ for all vectors $b\in \mathbb{Z}^h$ with $\|b\|_1\leq N-1$, the equality holds for every $b$ whose elements are in $\{-1,0,1\}$. This implies that the weights $\bar{w}(v_j)$ are positive and $\bar{W}_i$ is nonnegative. Also we have that for every set of indices $I\subseteq\{1,\ldots,p\}$, 
$\sum_{j\in I}w(v_j)\leq W_i$ if and only if $\sum_{j\in I}\bar{w}(v_j)\leq \bar{W}_i$. This proves  that the instances $(G_i,w,W_i)$ and $(G_i,\bar{w},\bar{W}_i)$ are equivalent. 

Since $|V(G_i)|\leq 16k^2$, we obtain that the weights in the instance $(G_i,\bar{w},\bar{W}_i)$ can be encoded by stings of length $\Oh(k^6)$ for $i\in\{1,\ldots,r\}$. Hence, the size of each instance is $\Oh(k^8)$. 

Propositions~\ref{prop:fillin-appr}, \ref{prop:max-cl} and \ref{prop:compression} immediately imply that the kernelization algorithm runs in polynomial time.
\end{proof}

\section{Independent Set on $\intervalmke$}\label{sec:interval}
In this section we show that \probIS parameterized by the size of interval completion of the input graph admits a polynomial compression into the \probWIS problem.
We state \probWIS as a decision problem, whose input  contains a graph $G$ with a weight function $w\colon V(G)\rightarrow \mathbb{Z}^+$ and a nonnegative integer $W$, and the task
is to decide whether $G$ has an independent set $S$ with $w(S)\geq W$.

\noindent
More formally, we show the following theorem.

\begin{theorem}\label{thm:int-compr}
\probIS on $G\in\intervalmke$ admits a compression of size $\Oh(k^{56})$ into \probWIS. 
\end{theorem}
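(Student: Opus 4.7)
The plan is to reduce $G$ to a polynomial-size weighted instance by collapsing large module-like groups of vertices, and then to apply the Frank--Tardos number compression of Proposition~\ref{prop:compression} to bound the bit-length of the weights.

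First, I would compute an interval completion $G' = G + A$ with $|A| = \Oh(\mathrm{poly}(k))$ by invoking (or deriving, along the lines of Proposition~\ref{prop:fillin-appr}) a polynomial-time approximation for interval completion. Let $X \subseteq V(G)$ be the set of endpoints of $A$, so $|X| = \Oh(\mathrm{poly}(k))$; for every $v \notin X$, the adjacencies of $v$ in $G$ and in $G'$ coincide since no completion edge is incident to $v$. Fix an interval representation of $G'$. The endpoints of the $X$-intervals partition the real line into $\Oh(|X|)$ \emph{zones}, and each non-$X$ vertex $v$ inherits a \emph{type}: the ordered pair of zones containing the endpoints of its interval $I_v$. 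Two non-$X$ vertices of the same spanning type (endpoints in different zones) have identical $X$-neighborhoods and are pairwise adjacent in $G$, since their intervals jointly cover every strictly intermediate zone. Consequently any independent set of $G$ contains at most one vertex per spanning type, and I keep a single representative per such type.

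Next I handle the single-zone types by refining the zones: add the endpoints of all chosen representatives as new markers, producing $\Oh(|X|^2)$ \emph{sub-zones}. By construction, no external vertex (an $X$-vertex or a kept representative) has an endpoint strictly inside a sub-zone, so each external vertex either fully contains a given sub-zone or is disjoint from it. Thus, inside every sub-zone $\sigma$, the non-$X$ vertices whose interval lies inside $\sigma$ share a common external neighborhood and behave as a module. I compute in polynomial time the maximum independent set size $m_\sigma$ of the interval subgraph they induce and replace the entire group by one weighted super-vertex $s_\sigma$ of weight $m_\sigma$, adjacent exactly to the common external neighbors. Any remaining multi-sub-zone-type vertices inside a single original zone are pairwise adjacent and are treated recursively by the same representative-selection step; a bounded number of refinement rounds suffices, with only a polynomial blow-up in each round.

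Finally, the compressed instance consists of $X$ together with the representatives and the super-vertices, each carrying an integer weight, with the target $W$ unchanged. Applying Proposition~\ref{prop:compression} to the resulting weight vector bounds the bit-length of every weight by a polynomial in the number of vertices, yielding total encoding size $\Oh(k^{56})$ after careful bookkeeping. The main obstacle I anticipate is proving the correctness of the super-vertex replacement: one must show that every independent set in the compressed instance lifts to an independent set of equal total weight in $G$ and vice versa. This hinges on the module property inside each sub-zone, i.e.\ the all-or-nothing adjacency between external vertices and the sub-zone vertices, which in turn is ensured by the refinement step. Controlling the recursion so that the final instance has only polynomially many vertices is what dictates the exponent in the final size bound.
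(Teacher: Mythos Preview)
Your first step already contains a genuine gap. You assume that a polynomial-time approximation for \emph{interval} completion exists, or can be derived ``along the lines of Proposition~\ref{prop:fillin-appr}.'' But Proposition~\ref{prop:fillin-appr} is the Natanzon--Shamir--Sharan result for \emph{chordal} completion (fill-in), and no analogous polyopt approximation is known for interval completion. The paper is forced to confront exactly this: it approximates only $\fillin(G)$, obtaining a set $A$ of size at most $8k^2$ such that $G+A$ is chordal but not necessarily interval. All of Subsection~\ref{sec:int-compr} is then devoted to coping with the missing AT-freeness. Your zone/type argument presupposes an interval model of a supergraph close to $G$, and without the approximation you invoke, you simply do not have one.

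The paper's route is therefore quite different. Starting from the chordal-complementing set $X$, it repeatedly searches for $X$-touching asteroidal triples with small associated witnesses (the graphs $F_1,\ldots,F_5$ of Figure~\ref{fig:forb-int}) and enlarges $X$ by their vertices; Lemma~\ref{lem:touching-triples-forb} guarantees that each such witness forces a completion edge with at most one endpoint in $X$, so after $k+1$ iterations one either stops or has exhausted all such witnesses. A further rule deletes vertices whose non-$X$ neighborhood contains a large independent set, which bounds the length of the $F_3(r)$ and $F_4(r)$ witnesses. The key structural payoff is Lemma~\ref{lem:comp-crucial}: once no $X$-touching AT remains, a maximum independent set of $G$ decomposes as $S\subseteq X$ together with maximum independent sets of connected components of $G-(X\cup N_G(Y))$ for suitable $Y\subseteq S$ with $|Y|\leq 2$. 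The family $\mathcal{C}$ of all such components has size $\Oh(k^{14})$, and the compressed instance is the intersection graph on $X\cup\mathcal{C}$ with weights given by component independence numbers, followed by Frank--Tardos.

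Even granting you the interval completion for free, your recursion step (``a bounded number of refinement rounds suffices, with only a polynomial blow-up in each round'') is asserted rather than proved; you would need to argue carefully why the multi-sub-zone vertices inside a single original zone stabilize in polynomially many rounds.
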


Since  \probWIS is in \classNP{} and, consecutively, has a polynomial reduction to \probIS  that is  \classNP-complete~\cite{GareyJ79}, the theorem immediately gives the following corollary.

\begin{corollary}\label{cor:kern-int}
\probIS on $G\in\intervalmke$ admits a polynomial kernel when parameterized by $k$. 
\end{corollary}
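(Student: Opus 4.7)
The plan is to exploit the rigid structure of interval graphs to collapse $V(G)$ into $\Oh(\mathrm{poly}(k))$ equivalence classes of interchangeable vertices, and then to control the bit-length of the result using Frank--Tardos compression on weights.

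First I would invoke a polynomial-time approximation algorithm for interval completion (an analogue of Proposition~\ref{prop:fillin-appr}) to obtain an interval supergraph $H = G + A$ of $G$ with $|A| = \Oh(k^c)$ for some absolute constant $c$. Setting $X$ to be the set of endpoints of the edges of $A$, we have $|X| = \Oh(k^c)$, and every edge of $A$ has both endpoints in $X$. Thus for any $v \in V(G) \setminus X$ and any $u \in V(G)$, $uv \in E(G)$ if and only if $uv \in E(H)$; the adjacency structure of $v$ in $G$ is entirely encoded by its interval in $H$.

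Next, I would fix an interval representation of $H$. The $2|X|$ endpoints of the $X$-intervals partition the real line into $\Oh(|X|)$ \emph{elementary intervals}, and for $v \in V(G) \setminus X$ the ordered pair of elementary intervals containing the two endpoints of $I_v$ defines its \emph{type}; there are $\Oh(|X|^2)$ types. The key observation is that any two vertices of the same type have overlapping intervals, hence are adjacent in $H$ and thus in $G$, and they have identical $G$-adjacencies to every vertex outside their type class, since $A$ touches only vertices of $X$ and the adjacencies to $X$ are determined by interval overlap. Consequently, each type class is a clique of $G$ whose members are fully interchangeable, so any independent set of $G$ uses at most one vertex from each type class.

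The compressed instance $(G', w, W)$ is then obtained by retaining $X$ together with one representative of each nonempty type class, inducing $G'$ as a subgraph of $G$, setting $w \equiv 1$, and preserving the threshold $W$; independent sets of $G$ and $G'$ correspond bijectively under the interchangeability, giving $p = \Oh(k^{2c})$ vertices in $G'$. Finally, to encode the resulting \probWIS instance with bit-length bounded purely in $k$, I would apply Proposition~\ref{prop:compression} to the weight-threshold vector; this produces equivalent integer weights of magnitude $2^{\Oh(p^3)}$, each encodable in $\Oh(p^3)$ bits, for a total size of $\Oh(p^4) = \Oh(k^{8c})$ bits. The main obstacle lies in the preprocessing step: one must identify a polynomial approximation for \textsc{Interval Completion} whose degree $c$ is small enough that $8c \leq 56$. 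Once such an approximation is in hand, the combinatorial collapse by types is tight, and the final bound $\Oh(k^{56})$ follows directly.
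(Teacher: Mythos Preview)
Your proposal has two genuine gaps that break the argument.

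First, the ``type'' collapse does not work as claimed. You assert that any two vertices $u,v\notin X$ of the same type have overlapping intervals and identical $G$-adjacencies to every vertex outside their type class. Neither statement holds. If the type is $(E_i,E_i)$, i.e.\ both endpoints of $I_u$ and $I_v$ lie in the same elementary interval $E_i$, then $I_u$ and $I_v$ need not intersect at all, so the type class is not a clique. More seriously, even for distinct types the adjacency is not determined: a vertex of type $(E_i,E_j)$ and a vertex of type $(E_j,E_l)$ (with $i<j<l$) may or may not be adjacent, depending on where exactly their endpoints fall inside $E_j$. Your justification (``adjacencies to $X$ are determined by interval overlap'') only covers adjacencies to $X$, not to the other non-$X$ vertices. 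Consequently, picking one representative per type class does not preserve the independent-set structure of $G$, and the claimed bijection between independent sets of $G$ and $G'$ fails.

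Second, you rely on a polynomial-factor approximation for \textsc{Interval Completion} (an ``analogue of Proposition~\ref{prop:fillin-appr}''), but no such result is available; Proposition~\ref{prop:fillin-appr} is specific to chordal completion. The paper avoids this obstacle entirely: it uses the chordal fill-in approximation (which applies because $\intervalmke\subseteq\chordalmke$) to obtain a chordal-complementing set $X$, and then controls the interval structure via the Lekkerkerker--Boland characterisation. Concretely, the paper iteratively enlarges $X$ by adding small asteroidal-triple witnesses (Reduction Rules~\ref{red:claws}--\ref{red:stop-AT}), deletes vertices with too many independent neighbours outside $X$ (Reduction Rule~\ref{red:irr}), and then shows (Lemma~\ref{lem:comp-crucial}) that a maximum independent set decomposes into $S\subseteq X$ together with maximum independent sets of connected components of $G-(X\cup N_G(Y))$ for some $Y\subseteq S$ with $|Y|\le 2$. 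It is this structural lemma, not an interval-representation collapse, that bounds the number of relevant pieces and yields the compression into \probWIS. The corollary itself then follows in one line from Theorem~\ref{thm:int-compr} via the standard \classNP-completeness reduction from \probWIS to \probIS.
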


The remaining part of the section contains the proof of Theorem~\ref{thm:int-compr}. In Subsection~\ref{sec:int-prelim}, we introduce additional notions and state some auxiliary results. Then, in Subsection~\ref{sec:int-compr}, we give the compression itself.

\subsection{Technical lemmas}\label{sec:int-prelim}
An interval graph has been defined as an intersection graph of a family of intervals of the real line but for our compression algorithm we need the characterization of interval graphs in terms of forbidden induced subgraphs.

Three pairwise nonadjacent vertices of a graph form an \emph{asteroidal triple (AT)} if there is a path between every two of them that avoids the closed neighborhood of the third. 
For an asteroidal triple $T$ of a graph $G$, a $T$-AT-witness is an inclusion minimal  induced subgraph $F$ of $G$ such that $T$ is an asteroidal triple of $F$.  The vertices of $T$ are called \emph{terminals} of $F$. Clearly, $F$ is induced by the vertices of induced paths between every two vertices of $T$ that avoid the closed neighborhood of the third.
 This implies that the existence of an asteroidal triple $T$ can be checked in polynomial time and then the construction of a $T$-AT-witness can be done in polynomial time using the self-reducibility technique.
An \emph{asteroidal witness} is an inclusion minimal  induced subgraph that contain an asteroidal triple and we call the verices of an asteroidal triple \emph{terminals} of the witness (note that the choice of terminals may be not unique).   

It is said that a graph $G$ is \emph{AT-free} if $G$ has no asteroidal triple. 
We use the following classical result of Lekkerkerker and Boland~\cite{LekkerkerkerB62}.

\begin{proposition}[\cite{LekkerkerkerB62}]\label{prop:LB}
A graph $G$ is an interval graph if and only if $G$ is chordal and AT-free.  
\end{proposition}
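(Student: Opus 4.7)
The plan is to prove the equivalence in two directions, with the forward direction being essentially immediate from an interval representation and the reverse direction constituting the substantive part.

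For $(\Rightarrow)$, fix an interval representation $\{I_v\}_{v \in V(G)}$. Chordality follows from a short case analysis: an induced cycle $v_1 v_2 \cdots v_\ell$ with $\ell \ge 4$ would yield $\ell$ intervals on the real line that pairwise intersect exactly for consecutive indices, which is incompatible with the linear order of endpoints (the two intervals neighboring the leftmost one would be forced to overlap). For AT-freeness, three pairwise nonadjacent vertices $a,b,c$ correspond to pairwise disjoint intervals, and after relabeling we may assume $I_a$ lies entirely left of $I_b$, which lies entirely left of $I_c$. Any $a$-$c$ path then traces a sequence of intervals whose union on the line is connected and contains points both left and right of $I_b$; the interval realizing the crossing must meet $I_b$, producing a path vertex in $N[b]$ and contradicting the AT condition.

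For $(\Leftarrow)$, the plan is to use the classical equivalence that a chordal graph is an interval graph if and only if it admits a \emph{clique path}, that is, a clique tree whose underlying tree is a path. Since $G$ is chordal it has a clique tree $T$ whose nodes are the maximal cliques of $G$ and which satisfies the subtree-intersection property: for every vertex $v$, the set $T_v$ of nodes of $T$ containing $v$ induces a subtree. I will prove the contrapositive: if no clique tree of $G$ is a path, then $G$ has an AT. Fix such a $T$ with a node $X$ of degree $\ge 3$, and let $T_1,T_2,T_3$ be three branches of $T \setminus \{X\}$. In each $T_i$ choose a maximal clique $C_i$ and then a vertex $a_i \in C_i \setminus X$; this set is nonempty because maximality of $C_i$ together with $C_i \neq X$ forbids $C_i \subseteq X$. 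The triple $\{a_1,a_2,a_3\}$ is pairwise nonadjacent since $T_{a_i} \subseteq T_i$ and the $T_i$'s are vertex-disjoint subtrees of $T$.

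The main obstacle is verifying the AT condition itself: producing, for each pair $\{i,j\}$, an $a_i$-$a_j$ path in $G$ that avoids $N[a_k]$. The key structural input is that any neighbor of $a_k$ appearing in a clique of $T_i\cup T_j$ must lie in $X$, because $T_{a_k}\subseteq T_k$ and clique subtrees are connected in $T$. Moreover $X\setminus N[a_k]$ is nonempty, for otherwise $X\cup\{a_k\}$ would be a clique contradicting the maximality of $X$. The remaining task is to route a path from $a_i$ through some $x\in X\setminus N[a_k]$ to $a_j$ while staying inside the vertex set of cliques in $T_i\cup\{X\}\cup T_j$ and avoiding $X\cap N[a_k]$. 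Making this routing rigorous is the crux: it uses the fact that each minimal vertex separator in a chordal graph is a clique (so the separators inside each branch are well-behaved), and may require replacing $a_i$ by a vertex chosen deeper in $T_i$ (for instance a simplicial vertex of $G$ restricted to that branch) together with a minimal-counterexample argument on the triple $(T, X, \{a_1, a_2, a_3\})$. Once this routing is established, $\{a_1,a_2,a_3\}$ is an asteroidal triple of $G$, contradicting AT-freeness and completing the reverse implication.
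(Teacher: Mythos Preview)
The paper does not prove this proposition; it is quoted from Lekkerkerker and Boland \cite{LekkerkerkerB62} as a classical characterization and used as a black box. So there is no ``paper's proof'' to compare against, and your write-up is being judged on its own merits.

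Your forward direction is fine. The reverse direction, however, is not a proof but an outline with the hard step left open. Two concrete issues:

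\textbf{Choice of clique tree.} You write ``Fix such a $T$ with a node $X$ of degree $\geq 3$'' for an arbitrary clique tree $T$. This is not enough: a chordal non-interval graph may have many clique trees, and for a poorly chosen $T$ and branching node $X$ the three vertices $a_1,a_2,a_3$ you pick need not form an AT (the separators $X\cap C_i'$ on the edges leaving $X$ may all be contained in $N[a_k]$ for the wrong $k$, blocking the routing). The standard arguments pick $T$ extremally (e.g.\ minimizing the number of leaves, or via a specific leaf-pruning/asteroidal-number argument) precisely so that the branching at $X$ is ``essential'' and cannot be bypassed by a different clique tree. You gesture at this with ``minimal-counterexample argument'' but never set it up.

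\textbf{The routing itself.} You correctly isolate the crux: getting from $a_i$ to some $x\in X\setminus N[a_k]$ inside $T_i\cup\{X\}$ while avoiding $X\cap N[a_k]$. But you then say this ``may require replacing $a_i$ by a vertex chosen deeper in $T_i$'' and stop. That replacement is exactly where the work is, and without it the argument is incomplete. As written, nothing prevents the separator between $T_i$ and $X$ from lying entirely inside $N[a_k]$, in which case no such path exists for your chosen $a_i$.

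In short, the strategy (clique tree $\Rightarrow$ clique path via AT-freeness) is a valid and standard one, but the proof as submitted stops at the point where the actual combinatorics begins.
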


The main result of Lekkerkerker and Boland~\cite{LekkerkerkerB62} is   the characterization of interval graphs by  forbidden induced subgraphs. We use this characterization in the following form tailored for our purposes.

\begin{figure}[tb]
\centering
\scalebox{0.7}{
\input{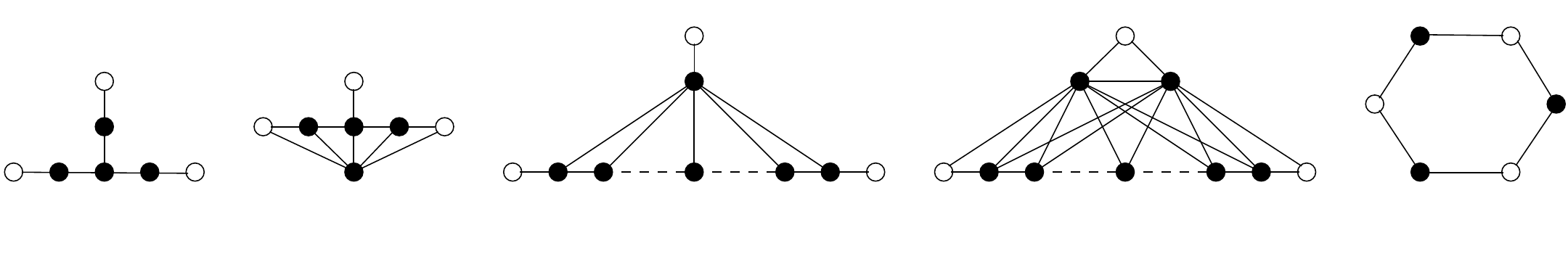_t}}
\caption{Minimal asteroidal witnesses; asteroidal triples (terminals) are shown by white bullets.}
\label{fig:forb-int}
\end{figure}

\begin{lemma}[\cite{LekkerkerkerB62}]\label{lem:LB}
If a chordal graph $G$ contains an asteroidal triple $T$, then it contains a minimal asteroidal witness $F$ isomorphic to one of the graphs $F_1$, $F_2$, $F_3(r)$ for $r\geq 2$ or $F_4(r)$ for $r\geq 1$ that are shown in Figure~\ref{fig:forb-int} (a)--(d) and for every nonterminal vertex $v$ of $F$, it holds that $v\in V(G)\setminus T$. Moreover, such a witness $F$ can be found in polynomial time. 
\end{lemma}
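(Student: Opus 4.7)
This is a chordal-graph refinement of the Lekkerkerker--Boland characterization. I would produce the witness by first building a $T$-AT-witness and then arguing that, because $G$ is chordal, this witness already has one of the four prescribed forms, and is in fact a minimal asteroidal witness.

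To obtain a starting object, I would compute three induced paths $P_{ab}, P_{bc}, P_{ca}$ between the terminals of $T = \{a,b,c\}$ so that $P_{xy}$ avoids $N_G[z]$ for $\{x,y,z\} = \{a,b,c\}$; such paths exist because $T$ is an AT of $G$, and each is obtained by a BFS inside $G - N_G[z]$. Let $F_0$ be the induced subgraph of $G$ on $V(P_{ab}) \cup V(P_{bc}) \cup V(P_{ca})$, and greedily delete any vertex outside $T$ whose removal preserves the AT property of $T$, until no further deletion is possible. The resulting $F_0$ is an inclusion-minimal $T$-AT-witness, and by minimality every vertex of $V(F_0) \setminus T$ lies on at least one of the three witnessing induced paths, so nonterminals automatically belong to $V(G) \setminus T$.

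The core structural claim is that such an $F_0$ is isomorphic to one of $F_1$, $F_2$, $F_3(r)$ with $r \geq 2$, or $F_4(r)$ with $r \geq 1$, and is therefore itself a minimal asteroidal witness with terminal set $T$. I expect this to be the most technical step; it is essentially Lekkerkerker and Boland's classical case analysis specialized to chordal inputs. Using that each pair of terminals is joined by a $T$-minimal induced path and that no nonterminal may be removed, one classifies the possible overlaps of the three paths: they may meet at a single inner vertex (yielding $F_1$), share a common edge $y_1 y_2$ among three triangle-like attachments (yielding $F_2$), or two of them may share a common ``spine'' of length $r$ with an attached pendant structure (yielding $F_3(r)$ or $F_4(r)$). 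Every remaining configuration either contains an induced cycle of length $\geq 4$, contradicting chordality and corresponding to the excluded case $F_5$ from Figure~(e), or else admits the deletion of a nonterminal without destroying $T$, contradicting the minimality of $F_0$. Since each of the graphs $F_1, F_2, F_3(r), F_4(r)$ has a unique asteroidal triple which coincides with $T$, any vertex deletion destroys this triple, so $F_0$ is indeed a minimal asteroidal witness.

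Finally, the construction is polynomial: the three initial induced paths are produced by a constant number of BFS executions, the greedy minimization uses at most $|V(F_0)|$ rounds each running an $\Oh(n^3)$ AT-check, and matching the reduced $F_0$ against the templates $F_1, F_2, F_3(r), F_4(r)$ is a direct structural recognition. The hardest part of the argument is the case analysis above; once that is in place, the existence and efficient computability of the witness follow with no further work.
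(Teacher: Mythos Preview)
The paper does not actually prove this lemma; it is stated with a citation to Lekkerkerker and Boland~\cite{LekkerkerkerB62} and treated as a known classical result (the sentence following the lemma is just a clarifying remark, not a proof). So there is no ``paper's own proof'' to compare against.

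Your sketch is a reasonable reconstruction of how one would extract the statement from the original Lekkerkerker--Boland argument, but there is one point where you need to be more careful. The paper distinguishes between a \emph{$T$-AT-witness} (inclusion-minimal such that the fixed triple $T$ is an AT) and a \emph{minimal asteroidal witness} (inclusion-minimal such that \emph{some} AT exists). You construct the former and then assert it must be one of $F_1,F_2,F_3(r),F_4(r)$; but the classical statement of Lekkerkerker--Boland is that these are exactly the minimal asteroidal witnesses among chordal graphs. A priori a minimal $T$-AT-witness could properly contain one of the $F_i$ whose terminal triple differs from $T$, while still being minimal for $T$. Your fix---that each $F_i$ has a unique AT coinciding with its terminal set---is the right way to close this gap, and it is true, but it is itself a structural fact that needs the same kind of case analysis you are deferring; you should not present it as an afterthought. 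Once you have that uniqueness, the two notions of minimality collapse on these graphs and the argument goes through.

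The algorithmic part (BFS for the paths, greedy vertex deletion with a polynomial AT-check, template matching) is fine and matches what the paper implicitly relies on when it says ``Moreover, such a witness $F$ can be found in polynomial time.''
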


The last statement of the lemma means that the vertices of an asteroidal triple of $G$ may be only terminal vertices of a minimal witness.

We say that vertex set  $X\subseteq V(G)$ is a \emph{chordal-complementing} set for $G$ if there is $A\subseteq \binom{X}{2}\setminus E(G[X])$ such that $G+A$ is chordal. Note that if $G+A$ is chordal  then the set of end-vertices of the edges of $A$ is a chordal-complementing set. Observe also that every superset of a chordal-complementing set is chordal-complementing, and for each $U\subseteq V(G)$, $X\setminus U$ is a chordal-complementing set of $G-U$.

We say that a triple of vertices $T$ of $G$ is an \emph{$X$-touching AT} if $T$ is an asteroidal triple of $G-E(G[X])$ that has a $T$-AT-witness $F$ such that either $|V(F)\cap X|\leq 1$ or $V(F)\cap X\subseteq T$. We say that $F$ is \emph{associated} with $T$.

Our compression algorithm uses the properties of chordal-complementing sets and $X$-touching ATs with associated witnesses given in the following two lemmas.

\begin{lemma}\label{lem:touching-triples} 
Let  $X\subseteq V(G)$ be a chordal-complementing set for a graph $G$ and let $T$ be an $X$-touching AT triple of $G$. 
Then $G$ has an $X$-touching AT $T'$ with an associated witness 
$F'$ isomorphic to one of the graphs $F_1$, $F_2$, $F_3(r)$ for $r\geq 2$, $F_4(r)$ for $r\geq 1$ or $F_5$ that are shown in Figure~\ref{fig:forb-int} (a)--(e). Moreover, $T'$ and 
an associated witness $F'$  can be constructed in polynomial time.
\end{lemma}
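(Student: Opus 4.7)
The plan is to process the given witness $F$ by passing to the chordal completion $H = G + A$ with $A \subseteq \binom{X}{2}$, and to reduce the problem to the Lekkerkerker--Boland structure theorem (Lemma~\ref{lem:LB}) whenever $F$ turns out to be chordal; the role of the extra graph $F_5$ is to cover the situation where $F$ itself is non-chordal.

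First I would observe that since $F$ is an induced subgraph of $G - E(G[X])$ and every edge of $A$ has both endpoints in $X$, the only edges $H$ places inside $V(F)$ are those of $B := A \cap \binom{V(F)\cap X}{2}$, so $H[V(F)] = F + B$. When $|V(F) \cap X| \le 1$ the set $B$ is empty, so $F$ coincides with the chordal graph $H[V(F)]$ and is itself chordal. Then Lemma~\ref{lem:LB} applied to $F$ yields an induced sub-witness $F'$ of one of the types $F_1, F_2, F_3(r), F_4(r)$, and by the last clause of that lemma every nonterminal vertex of $F'$ lies outside $T$; since $T \subseteq V(F')$, the three terminals of $F'$ are exactly $T$, so I set $T' = T$. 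The $X$-touching condition is inherited from $F$, because $V(F') \cap X \subseteq V(F) \cap X$ still has at most one element. The same argument carries through unchanged when $V(F) \cap X \subseteq T$ and $F$ happens to be chordal, giving $V(F') \cap X \subseteq V(F) \cap X \subseteq T$.

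The substantive case is $V(F) \cap X \subseteq T$ with $F$ non-chordal. Here $F$ must contain a chordless cycle $C$ of length at least four, and since $F + B$ is chordal, $C$ has to be chorded by an edge of $B \subseteq \binom{T}{2}$, which forces two non-consecutive vertices of $C$ to lie in $T$. I would then combine this constraint with the minimality of $F$ as a $T$-AT-witness---which pins every non-terminal vertex of $F$ onto one of the three induced paths realising the asteroidal connections---to pin down the global shape of $F$ and extract an induced sub-witness isomorphic to $F_5$ whose terminals $T'$ form an AT of $G - E(G[X])$. Verifying that $T'$ is indeed an AT of $G - E(G[X])$ (and not merely of the sub-witness) should be short: the edges removed to form $G - E(G[X])$ all lie in $\binom{X}{2}$ and, by $V(F') \cap X \subseteq T' \subseteq V(F) \cap X$, none of them occur between non-terminal vertices of $F'$ or between a terminal and a non-terminal, so the induced paths certifying the AT survive verbatim in $G - E(G[X])$.

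The hard part will be this non-chordal subcase: identifying the precise combinatorial shape of $F_5$ and arguing that any minimal $T$-AT-witness whose only ``missing chords'' lie in $\binom{T}{2}$ must contain an induced copy of $F_5$. Once that structural claim is in hand, the rest is routine: constructing $H$ via Proposition~\ref{prop:fillin-appr}, checking chordality, and invoking Lemma~\ref{lem:LB} are all polynomial, which in turn gives the claimed polynomial-time construction of $(T', F')$.
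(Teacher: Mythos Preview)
Your overall plan matches the paper's: split on whether the associated witness $F$ is chordal, invoke Lemma~\ref{lem:LB} in the chordal case, and produce $F_5$ in the non-chordal case. The chordal case is handled essentially as you describe, and the paper dispatches it in one line.

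The gap is in the non-chordal case, which you explicitly defer. The paper fills it by a case analysis on the length of a chordless cycle $C$ in $F$, and the decisive tool is \emph{not} minimality of $F$ (which you emphasise) but the chordal-complementing property of $X$. Concretely: if $|C|=5$, or $|C|\ge 7$, or $|C|=6$ with only two vertices of $T$ on $C$, then $C$ contains an induced path of length at least $3$ between two vertices of $T\cap V(C)$ whose interior avoids $T$; adding any subset of $\binom{T}{2}$ to $G$ still leaves an induced cycle of length at least $4$ through that path, contradicting that $G+A$ is chordal for \emph{some} $A\subseteq\binom{X}{2}$. Minimality of $F$ is used only for $|C|=4$ (one of the two internal vertices on the $4$-cycle is redundant on the induced $(u_1,u_2)$-path, contradicting inclusion-minimality). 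The sole surviving possibility is $|C|=6$ with all three vertices of $T$ on $C$, alternating with three non-terminals; then $C$ is already $F_5$, and minimality forces $F=C$. So the conclusion is stronger than you state: $F$ \emph{is} $F_5$, not merely contains an induced copy, and $T'=T$ throughout the lemma.

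One minor point: you do not need to construct $H$ via Proposition~\ref{prop:fillin-appr} for the polynomial-time claim. The algorithm only needs to find an associated witness $F$ (polynomial by self-reducibility), test whether $F$ is chordal, and in the chordal case call the polynomial procedure of Lemma~\ref{lem:LB}; the chordal-complementing hypothesis is used purely existentially to derive contradictions, not computationally.
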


\begin{proof}
Suppose that $T=\{u_1,u_2,u_3\}$ is an  $X$-touching AT.  We find an associated witness $F$. 
Note that this can be done in polynomial time using the self-reducibility technique. 

Assume that $F$ is not a chordal graph. Then it contains an induced cycle $C$ with at least four vertices. Notice that since $X$ is a chordal-complementing set, $C$ contains at least two nonadjacent vertices from $T$. 
We consider four cases depending on the length of $C$.

Let $C$ be of length 4. By symmetry, we assume without loss of generality that $u_1,u_2\in V(C)$. Then $C=u_1v_1u_2v_2u_1$ for some $v_1,v_2\in V(F)\setminus T$. Recall that $F$ is  induced by the vertices of induced paths between every two vertices of $T$ that avoid the closed neighborhood of the third. The vertices $v_1$ and $v_2$ cannot belong to any induced $(u_1,u_3)$ or $(u_2,u_3)$-path that avoids $N_F[u_2]$ and $N_F[u_1]$.  Hence, $v_1$ and $v_2$ are vertices of an induced $(u_1,u_2)$-path that avoids $N_F[u_3]$.
Clearly, $u_1v_2u_2$ is an induced $(u_1,u_2)$-path. We obtain that $F-v_2$ is $T$-AT-witness but this contradicts the minimality of $F$. We conclude that $C$ has length at least 5.

Suppose that $C$ has length 5. Again, we can assume  without loss of generality that $u_1,u_2\in V(C)$. Then $C=u_1v_1u_2v_2v_3u_1$ for some $v_1,v_2,v_3\in V(F)\setminus T$. 
Then for every $A\subseteq \binom{X}{2}\setminus E(G[X])$, $G''=G+A$ contains a cycle of length at least four: if $u_1u_2\in A$, then $u_1u_2v_2v_3v_4v_1$ is such a cycle and if $u_1u_2\notin A$, then $C$ is a cycle of $G''$. This contradicts the condition that  $X$ is a chordal-complementing set.  Hence, $C$ has length at least 6.

Assume that $C$ has length 6. Suppose that $|V(C)\cap T|=2$. Then we can assume that $u_1,u_2\in V(C)$ and either $C=u_1v_1u_2v_2v_3v_4u_1$ or $C=u_1v_1v_2u_2v_3v_4u_1$ for some $v_1,v_2,v_3,v_4\in V(F)\setminus T$. In both cases, we obtain a contradiction with the condition that  $X$ is a chordal-complementing set in the same way as in the previous case, because 
 for every $A\subseteq \binom{X}{2}\setminus E(G[X])$, $G''=G+A$ contains a cycle of length at least four. Therefore, $T\subseteq V(C)$ and $C=u_1v_1u_2v_3u_3v_3u_1$ for some 
$v_1,v_2,v_3\in V(F)\setminus T$. We obtain that $C=F=F_5$ is a $T$-AT-witness as required by the lemma.  

Finally, let $C$ be of length at least 7. Then for two vertices of $T$, say, $u_1$ and $u_2$, $C$ contains an induced $(u_1,u_2)$-path $P$ of length at least 3. Then for every $A\subseteq \binom{X}{2}\setminus E(G[X])$, $G''=G+A$ contains a cycle of length at least four that contains $P$ as a segment. This contradicts the condition that $X$ is a chordal-complementing set.

Assume now that $F$ is a chordal graph. Then the claim of the lemma is a direct corollary of Lemma~\ref{lem:LB}.
\end{proof}

\begin{lemma}\label{lem:touching-triples-forb} 
Let  $X\subseteq V(G)$ be a chordal-complementing set for a graph $G$. Suppose that $G$ has an $X$-touching AT $T$ with an 
associated witness 
$F$ isomorphic to one of the graphs $F_1$, $F_2$, $F_3(r)$ for $r\geq 2$, $F_4(r)$ for $r\geq 1$ or $F_5$ that are shown in Figure~\ref{fig:forb-int} (a)--(e). Then for every interval complementation $H$ of $G$, $H$ has an edge $uv\notin E(G)$ such that
\begin{itemize}
\item[(i)] $u,v\in V(F)$,
\item[(ii)] either $u\notin X$ or $v\notin X$.  
\end{itemize}
\end{lemma}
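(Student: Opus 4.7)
The plan is to argue by contradiction. I would assume that $H$ is an interval complementation of $G$ for which no edge of $E(H)\setminus E(G)$ satisfies both (i) and (ii), and then exhibit a forbidden substructure in $H[V(F)]$, contradicting the fact that an induced subgraph of an interval graph is interval. The first step is to pin down the structure of $H[V(F)]$. By the contrary assumption, every edge in $E(H)\setminus E(G)$ with both endpoints in $V(F)$ lies entirely inside $X$. Since $F$ is associated with an $X$-touching AT (so $|V(F)\cap X|\leq 1$ or $V(F)\cap X\subseteq T$), such edges must have both endpoints in $T$. Because $F=(G-E(G[X]))[V(F)]$ and $G\subseteq H$, this yields $E(H[V(F)])=E(F)\cup E'$ for some $E'\subseteq \binom{T}{2}$.

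Thus the lemma reduces to a purely combinatorial claim: for every witness $F\in\{F_1, F_2, F_3(r), F_4(r), F_5\}$ and every $E'\subseteq \binom{T}{2}$, the graph $F+E'$ is not an interval graph. For the 6-cycle $F_5$, writing $F_5 = z_1 v_1 z_2 v_2 z_3 v_3 z_1$, I would use the non-terminal triple $\{v_1, v_2, v_3\}$: the $v_i$'s are pairwise non-adjacent in $F_5+E'$, and since $v_i\notin T$ and $E'\subseteq\binom{T}{2}$, their closed neighborhoods are identical to those in $F_5$. A direct check using the length-two paths $v_i z_k v_j$ shows that $\{v_1, v_2, v_3\}$ is an asteroidal triple of $F_5 + E'$, so $F_5 + E'$ is not interval by Proposition~\ref{prop:LB}.

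For each chordal witness $F\in\{F_1, F_2, F_3(r), F_4(r)\}$, the case $E' = \emptyset$ is immediate since $F$ is chordal with asteroidal triple $T$. For $E'\ne\emptyset$, fixing $z_iz_j\in E'$, I would exhibit, via inspection of Figure~\ref{fig:forb-int}, an induced $(z_i,z_j)$-path $P$ in $F$ of length at least three whose internal vertices are non-adjacent to $z_i$ and $z_j$ except at $P$'s endpoints. The cycle $P + z_iz_j$ is then a chordless cycle of length at least four in $F+E'$, so $F+E'$ is not chordal, and in particular not interval. For the parameterized families $F_3(r), F_4(r)$, the path $x_0, x_1, \dots, x_{r+1}$ plays the role of $P$; for the two fixed witnesses $F_1, F_2$, the verification is a small finite check.

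The main obstacle is the per-witness chordal case analysis: ensuring that for every chosen $z_iz_j$ and every $F_i$ one can genuinely produce an induced $(z_i, z_j)$-path $P$ satisfying the no-chord condition with $z_i, z_j$. This requires inspection of all three possible terminal pairs in the explicit graphs $F_1, F_2, F_3(r), F_4(r)$, and the key structural fact to invoke is that the minimality of an asteroidal witness forbids shortcut edges from terminals into the interior of such paths.
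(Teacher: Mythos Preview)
Your proposal is correct and follows essentially the same route as the paper's proof. Both arguments reduce to showing that $H[V(F)]$ cannot equal $F$ plus edges among the terminals, and both handle the chordal witnesses $F_1,F_2,F_3(r),F_4(r)$ by exhibiting an induced cycle of length at least four once a terminal edge is present. Your framing ``$F+E'$ is not interval for every $E'\subseteq\binom{T}{2}$'' is a clean way to package the case analysis that the paper carries out by splitting on $|V(F)\cap X|\in\{\le 1,2,3\}$ and then on which terminal pairs become adjacent in $H$.

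The one place you genuinely diverge is $F_5$: you observe that the three \emph{non-terminal} vertices of the $6$-cycle remain an asteroidal triple in $F_5+E'$ for every $E'\subseteq\binom{T}{2}$, which disposes of all subcases at once. The paper instead distinguishes $|E(H[T])|\in\{1,2,3\}$, producing induced cycles in the first two cases and noting that $F_5+\{z_1z_2,z_2z_3,z_1z_3\}$ is isomorphic to $F_4(2)$ in the last. Your argument here is shorter and avoids that extra identification; otherwise the two proofs coincide.
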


\begin{proof}
Suppose that $T$ is an $X$-touching AT with an associated witness 
isomorphic to one of the graphs $F_1$, $F_2$, $F_3(r)$ for $r\geq 2$, $F_4(r)$ for $r\geq 1$ or $F_5$. Let $H$ be an interval complementation of $G$.  If $|V(F)\cap X|\leq 1$, then the claim immediately follows form Proposition~\ref{prop:LB} as $H$ is AT-free. Assume that $|V(F)\cap X|\geq 2$. Then $V(F)\cap X\subseteq T$.

\begin{figure}[tb]
\centering
\scalebox{0.7}{
\input{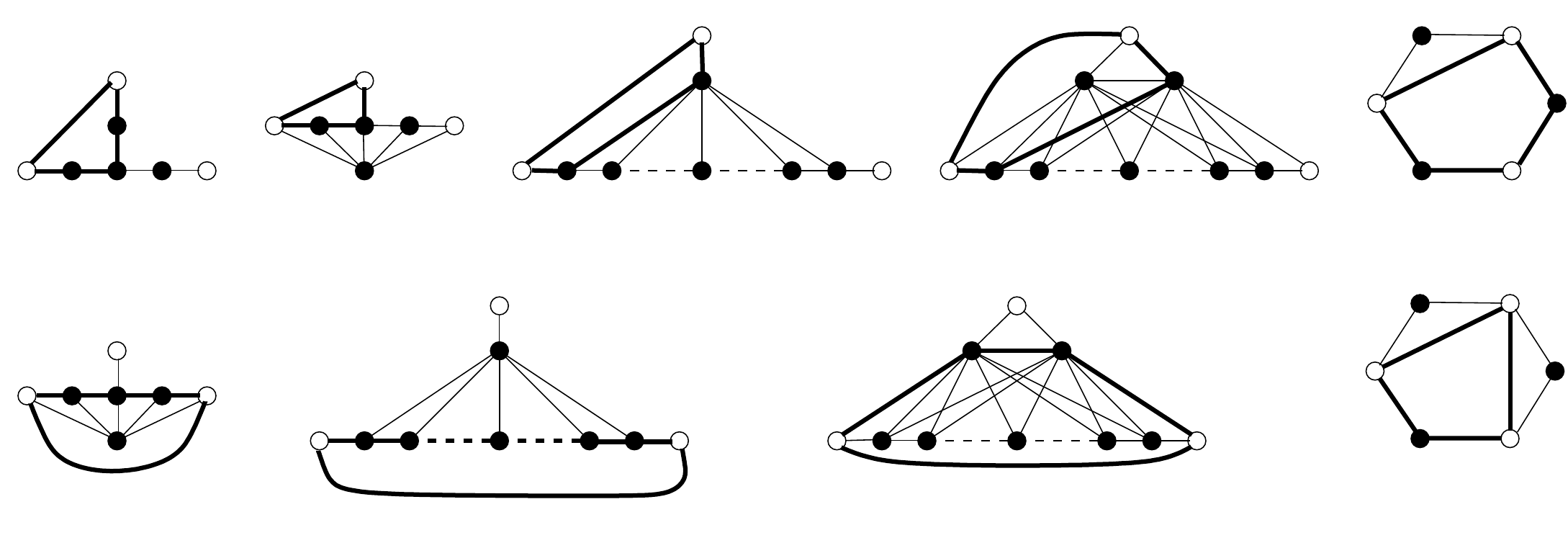_t}}
\caption{Induced cycles in $F$ shown by thick lines.}
\label{fig:cycles}
\end{figure}

Assume that $|V(F)\cap T|=2$. If the vertices of $V(F)\cap T$ are nonadjacent in $H$, then the existence of $uv\notin E(G)$ satisfying (i) and (ii) follows from Proposition~\ref{prop:LB}. 
Let these vertices be adjacent. 
Suppose that  $z_1,z_2$ of $F$ (see Figure~\ref{fig:forb-int}) are in $T$.  Observe that $F+z_1z_2$ contains an induced cycle of length at least four as it is shown in Figure~\ref{fig:cycles}~(a)--(e). Since $H$ is chordal, we obtain that there is  $uv\in E(H)\setminus E(G)$ satisfying (i) and (ii). The case $z_2,z_3\in T$ is symmetric. Assume that $z_1,z_3\in T$. By symmetry, it is sufficient to consider the cases $F=F_2$, $F=F_3(r)$ and $F=F_4(r)$. Again, we observe that $F+z_1z_3$ contains an  induced cycle of length at least four as it is shown in Figure~\ref{fig:cycles}~(f)--(h) and the claim follows.

Let $|V(F)\cap T|=3$, that is $T=\{z_1,z_2,z_3\}$. If the vertices of $V(F)\cap T$ are pairwise nonadjacent in $H$, then the existence of $uv\in E(H)\setminus E(G)$ satisfying (i) and (ii) follows from Proposition~\ref{prop:LB}. If $H[T]$ contains an edge, then we apply the same arguments as above for the cases $F=F_2$, $F=F_3(r)$ and $F=F_4(r)$ and obtain that $F+E(H[T])$ contains an induced cycle of length at least four. This implies that  there is  $uv\in E(H)\setminus E(G)$ satisfying (i) and (ii). Let $F=F_5$. If $|E(H[T])|=1$ or $|E(H[T])|=2$, we again have that $F+E(H[T])$ contains an induced cycle of length at least four (see Figure~\ref{fig:cycles}~(e) and (i)) and the claim follows. Let $|E(H[T])|=3$. Then $F+\{z_1z_2,z_2z_3,z_1z_3\}$ coincides with $F_4(2)$. Since $H$ has no induced subgraph isomorphic to $F_4(2)$, we have that there is  $uv\in E(H)\setminus E(G)$ satisfying (i) and (ii).
\end{proof}

In our compression algorithm, we have to compute a maximum independent set for chordal graphs.
It was already observed by Gavril~\cite{Gavril72} in 1972 that this in polynomial (linear) time on chordal graphs. 

\begin{proposition}[\cite{Gavril72,RoseTL76}]\label{prop:is-chord}
\probIS can be solved in time $\Oh(n+m)$ on chordal graphs.
\end{proposition}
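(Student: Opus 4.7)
The plan is to reduce the problem to a greedy scan along a \emph{perfect elimination ordering} (PEO). Recall that an ordering $v_1,\ldots,v_n$ of $V(G)$ is a PEO if, for every $i$, the later neighbours $N_G(v_i)\cap\{v_{i+1},\ldots,v_n\}$ induce a clique; equivalently, each $v_i$ is simplicial in the subgraph $G[\{v_i,\ldots,v_n\}]$. By the classical Dirac--Fulkerson--Gross characterization, $G$ is chordal if and only if it admits a PEO, and by Rose, Tarjan and Lueker such an ordering can be produced in $\Oh(n+m)$ time via Lexicographic BFS or Maximum Cardinality Search. So the first step is simply to compute a PEO of the input.

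The second step is the greedy construction of a maximum independent set $I$: initialise $I:=\emptyset$ together with a Boolean ``covered'' flag for every vertex, and scan $i=1,\ldots,n$; if $v_i$ is not yet covered, insert $v_i$ into $I$ and mark $v_i$ together with all of its neighbours as covered. Correctness is established by induction on $n$. Since $v_1$ is simplicial, $N_G(v_1)$ is a clique, so any maximum independent set either already contains $v_1$, or else contains a unique neighbour $u$ of $v_1$; in the latter case swapping $u$ for $v_1$ yields another maximum independent set containing $v_1$. Hence there is an optimal solution containing $v_1$, and deleting $\{v_1\}\cup N_G(v_1)$ leaves an induced, hence chordal, subgraph on which the restriction of $(v_2,\ldots,v_n)$ to the remaining vertices is still a PEO; the induction hypothesis applied to the residual graph then yields the optimality of the greedy choice.

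For the running time, the PEO is obtained in $\Oh(n+m)$ time by the cited linear-time algorithms, and the greedy scan inspects each vertex once and, across the whole execution, each edge at most a constant number of times (when marking neighbours as covered). Altogether the algorithm runs in $\Oh(n+m)$ time. The only non-routine ingredient is the correctness argument, which is handled by the simplicial-vertex swap above; the existence of the linear-time PEO computation and the inheritance of the PEO property under deletion of a prefix are standard consequences of chordality, so no further technical obstacle arises.
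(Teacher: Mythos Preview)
Your proof is correct and is precisely the classical argument underlying the cited references: linear-time computation of a perfect elimination ordering (Rose--Tarjan--Lueker) followed by the greedy simplicial-vertex selection (Gavril). The paper itself does not give a proof of this proposition; it merely cites it as a known tool, so there is no alternative approach in the paper to compare against. One minor wording point: when you write ``or else contains a unique neighbour $u$ of $v_1$'', you might add the one-line observation that a maximum independent set must intersect $N_G[v_1]$ (otherwise adding $v_1$ would enlarge it), which is why the two cases are exhaustive; the argument is otherwise complete.
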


\subsection{Compression}\label{sec:int-compr}
In this section, we give a compression of \probIS on $G\in\intervalmke$ parameterized by $k$. Let $(G,\ell)$ be an instance of \probIS and let a nonnegative integer $k$ be the parameter.

First, we apply the algorithm of Natanzon, Shamir and Sharan~\cite{NatanzonSS98} (see Proposition~\ref{prop:fillin-appr}) to approximate the fill-in of $G$. If the algorithm reports that $\fillin(G)>k$, we immediately stop as, clearly, $G\notin\intervalmke$. Assume that this is not the case. Then the algorithm returns a set $A\subseteq \binom{V(G)}{2}$ of size at most $8k^2$ such that $G+A$ is a chordal graphs. We define $X$ to be the set of vertices that are the end-vertices of the edges of $A$. Note that $X$ is a chordal-complementing set. 
We apply a series of reduction rules for the instance of  \probIS considered together with $X$, that is, for the triple $(G,\ell,X)$.

We apply the following reduction rule to enhance $X$.

\begin{reduction}\label{red:claws}
If $G$ has an $X$-touching AT $T$ with an associated witness $F=F_1$, then set $X:=X\cup V(F)$.
\end{reduction}

We apply the rule exhaustively but at most $k+1$ times, because Lemma~\ref{lem:touching-triples-forb} guarantees that if we find an $X$-touching AT $T$ with an associated witness $F=F_1$, then
every interval complementation of $G$ contains an edge $uv$ such that $u$ and $v$ are  nonadjacent vertices of $F$ and at most one of them is in $X$. This implies that the following rule is safe.

\begin{reduction}\label{red:stop-claws}
If Reduction Rule~\ref{red:claws} have been applied $k+1$ times, then report that $G\notin\intervalmke$ 
and stop.
\end{reduction}

Assume that the algorithm did not stop. Since $|V(F_1)|=7$, we obtain that $|X|\leq 8k^2+7k$ after this step. 

In the next step, we find and delete some irrelevant vertices of $G$. For this, set $p=8k^2+7k+2$.

\begin{reduction}\label{red:irr}
If for some vertex $x\in V(G)$, the subgraph $G[N_G(x)\setminus X]$ has an independent set of size at least $p+1$, then set $G:=G-x$ and $X:=X\setminus\{x\}$.
\end{reduction}

\begin{lemma}\label{lem:safe-irr}
Reduction Rule~\ref{red:irr} is safe.
\end{lemma}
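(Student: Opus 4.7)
The plan begins by dispatching the trivial direction: any independent set of $G-x$ is still an independent set of $G$, so if $(G-x,\ell)$ is a yes-instance then so is $(G,\ell)$. For the converse direction, let $S$ be an independent set of $G$ with $|S|\geq \ell$; if $x\notin S$ then $S$ already works in $G-x$, so assume $x\in S$, whence $S\cap N_G(x)=\emptyset$. Let $I\subseteq N_G(x)\setminus X$ be an independent set of size $p+1 = 8k^2+7k+3$ guaranteed by the hypothesis. The aim is to find $y^\star\in I$ with $N_G(y^\star)\cap (S\setminus\{x\})=\emptyset$; then $(S\setminus\{x\})\cup\{y^\star\}$ is an independent set of $G-x$ of size $|S|\geq\ell$.

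The existence of $y^\star$ is equivalent to $I\setminus B\neq\emptyset$, where $B=\{y\in I : N_G(y)\cap(S\setminus\{x\})\neq\emptyset\}$. The key structural observation I will prove is a $C_4$-obstruction in $G+A$: for every $s\in S\setminus\{x\}$ and every two distinct $y_1,y_2\in N_G(s)\cap I$, the 4-cycle $s$--$y_1$--$x$--$y_2$--$s$ has both diagonals outside $E(G)$ (since $S$ and $I$ are independent in $G$). Because $y_1,y_2\in I\subseteq V(G)\setminus X$ and $A\subseteq\binom{X}{2}$, the nonedge $y_1y_2$ can never be closed by $A$, so chordality of $G+A$ forces $sx\in A$, and in particular $s,x\in X$. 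Consequently $|N_G(s)\cap I|\leq 1$ for every $s\notin X$, and also for every $s\in S\cap X$ whenever $x\notin X$ or $sx\notin A$.

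The genuinely delicate case is $x\in X$, $s\in (S\setminus\{x\})\cap X$, and $sx\in A$: the chord $sx$ destroys the induced 4-cycle in $G+A$, and a single $s$ could a priori block many members of $I$. For this, the plan is to exploit that Reduction Rule~\ref{red:claws} has been applied exhaustively, so $G$ contains no $X$-touching asteroidal triple whose associated minimal witness is $F_1$. I will argue that any such $s$ with sufficiently many nonadjacent $I$-neighbors, taken together with $x$ and a few carefully chosen extension vertices, produces an $X$-touching AT with witness $F_1$, contradicting exhaustiveness; this bounds the aggregate contribution of these exceptional $s$'s to $|B|$ linearly in $|X|$. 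Combined with the at-most-one-$y$ bound for all non-exceptional $s$, the final count yields $|B|\leq |X|+O(1)\leq 8k^2+7k+2 = p$, so $|I\setminus B|\geq 1$.

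The main obstacle I anticipate is the structural argument in the third paragraph: pinning down the precise $F_1$-witness arising in the exceptional configuration and verifying that its vertex set is $X$-touching in the sense of the definition from Subsection~\ref{sec:int-prelim}, so that Reduction Rule~\ref{red:claws} would indeed have already absorbed it. Once this geometric step is in place the remaining bookkeeping is elementary, and the threshold $p=8k^2+7k+2$ was chosen with exactly the right slack to close the counting.
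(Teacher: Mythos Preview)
Your single-vertex replacement strategy has a genuine counting error. The $C_4$ obstruction is correct and shows that each non-exceptional $s\in S\setminus\{x\}$ has at most one neighbour in $I$. But this bounds the number of $I$-neighbours \emph{per such $s$}, not the number of such $s$'s. The set $S\setminus\{x\}$ is (one less than) a maximum independent set of $G$; its size is not bounded by any function of $k$. So arbitrarily many non-exceptional $s$'s, each blocking one distinct $y\in I$, can occur, and nothing in your argument prevents $|B|=p+1$. Concretely, take $S=\{x,s_1,\ldots,s_{p+1}\}$ with all $s_i\notin X$, each $s_i$ adjacent to $y_i\in I$ and to no other $y_j$; then every $s_i$ is non-exceptional, your $C_4$ bound is satisfied with equality everywhere, yet $B=I$ and no $y^\star$ exists. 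The inference ``at-most-one-$y$ per non-exceptional $s$ $\Rightarrow$ $|B|\le |X|+O(1)$'' is simply a non-sequitur.

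The paper's proof avoids this by \emph{not} seeking a single replacement for $x$. Instead it asks how many vertices of $I$ have a neighbour in $(S\setminus\{x\})\setminus X$. If at least three do, one obtains an $F_1$ on $x$, three such $y_i$'s, and their neighbours $s_i\in S\setminus X$ (this is exactly the configuration arising in the scenario above, and it is $X$-touching since only $x$ can lie in $X$), contradicting exhaustive application of Reduction Rule~\ref{red:claws}. Otherwise, at least $|I|-2\ge p-1$ vertices of $I$ have $S$-neighbours only inside $X$; take any $p-1$ of them as a set $I'$ and form $(S\setminus(X\cup\{x\}))\cup I'$. This is independent (no edges to $S\setminus X$ by choice of $I'$, and $I'\subseteq I$ is independent), avoids $x$, and has size at least $|S|$ because $|X\cup\{x\}|\le p-1$. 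The key move you are missing is to swap out \emph{all} of $S\cap(X\cup\{x\})$, not just $x$; the budget $p=|X|+2$ was calibrated for precisely this larger swap.
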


\begin{proof}
Denote by $G'$ and $X'$ the graph and the vertex set obtained from $G$ and $X$ respectively by an application of the rule for some $x\in V(G)$. Note that $X'=X$ if $x\notin X$.  
We show that $G$ is an independent set of size $\ell$ if and only $G'$ has an independent set of the same size. It is trivial that if $G'$ has an independent set of size $\ell$, then the same set is an independent set of $G$. Assume that $G$ has an independent set $I$ if size $\ell$. We prove that $G'$ has an independent set of size at least $\ell$. The claim is straightforward if $x\notin I$. Suppose that $x\in I$. Let $S$ be an independent set of size at least $p+1$ in $G[N_G(x)\setminus X]$. Clearly, $S\cap I=\emptyset$. 

If there is $S'\subset S$ of size $p-1$ that has no vertex adjacent to a vertex of $I\setminus (X\cup \{x\})$, then consider $I'=(I\setminus (X\cup\{x\}))\cup S'$. Observe that $I'$ is an independent set of $G'$. Since $|X|\leq p-2$, $|I'|\geq |I|\geq \ell$ and we have that $I'$ is a required independent set of size at least $\ell$ in $G'$. 

Assume from now that every $S'\subset S$ of size $p-1$ has a vertex with a neighbor in $I\setminus (X\cup \{x\})$. Because $|S|\geq p+1$, there are three distinct vertices $u_1,u_2,u_3\in S$ that have neighbors in $I\setminus (X\cup \{x\})$.  Denote these neighbors of $u_1$, $u_2$ and $u_3$ by $v_1$, $v_2$ and $v_3$ respectively. Note that $u_1,u_2,u_3$ are pairwise nonadjacent and they are adjacent to $x$. Notice also that $u_1,u_2,u_3,v_1,v_2,v_3\notin X$. If $v_i=v_j$ for some distinct $i,j\in \{1,2,3\}$, we have that $xu_iv_iu_jx$ is an induced cycle but this contradicts the property that $X$ is a chordal-complementing set. Hence, $v_1,v_2,v_3$ are pairwise distinct and $v_i$ is not adjacent to $u_j$ for distinct $i,j\in\{1,2,3\}$. 
Because $v_1,v_2,v_3\in I$, these vertices are pairwise nonadjacent. But then $G[\{x,u_1,u_2,u_3,v_1,v_2,v_3\}]$ is isomorphic to $F_1$ (see Figure~\ref{fig:forb-int} (a)). This contradict the assumption that Reduction Rule~\ref{red:claws} was applied exhaustively and the algorithm was not stopped by Reduction Rule~\ref{red:stop-claws}. This competes the safeness proof.
\end{proof}

We apply Reduction Rule~\ref{red:irr} exhaustively. The crucial property that we achieve by this rule is the following.

\begin{observation}\label{obs:size-F}
If $G$ has an $X$-touching AT $T$ with an associated witness 
$F=F_3(r)$ or $F=F_4(r)$ (see Figure~\ref{fig:forb-int} (c) and (d)), then $r\leq 2p-1=16k^2+14k+1$.
\end{observation}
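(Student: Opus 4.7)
The strategy is to argue by contradiction: if $r$ were too large, some vertex of $F$ would have too many pairwise non-adjacent neighbors lying outside $X$, and Reduction Rule~\ref{red:irr} would still apply---contradicting its exhaustive application.

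First I would fix the witness $F$ (either $F_3(r)$ or $F_4(r)$) and extract two structural properties from Figure~\ref{fig:forb-int}. The first is that $x_0,x_1,\dots,x_{r+1}$ induce a path in $F$ (hence also in $G$, since $F$ is an induced subgraph of $G-E(G[X])$ by definition of an $X$-touching AT). The second is the existence of a distinguished non-terminal vertex $v\in V(F)$---concretely, $v=y$ in $F_3(r)$ and $v=y_1$ in $F_4(r)$---which is adjacent to the entire middle portion of this path. Because $v$ is non-terminal and $F$ is an associated witness, Lemma~\ref{lem:touching-triples-forb}'s hypothesis $V(F)\cap X\subseteq T$ yields $v\notin X$; the same applies to $x_1,\dots,x_r$.

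Next I would exhibit a large independent set inside $N_G(v)\setminus X$. Since $x_0 x_1\cdots x_{r+1}$ is an induced path, the odd-indexed vertices $x_1,x_3,x_5,\dots$ are pairwise non-adjacent in $G$, and all of them sit in $N_G(v)\setminus X$. A careful count from the figure (possibly augmented by $x_0$ or by a terminal adjacent to $v$ but non-adjacent to the $x_i$'s, whichever is needed to match the constants) gives an independent set of size strictly greater than $p$ in $G[N_G(v)\setminus X]$ whenever $r\geq 2p$. Combining this with $v\notin X$ shows that Reduction Rule~\ref{red:irr} is still applicable to $v$, contradicting the fact that the rule was applied exhaustively. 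Therefore $r\leq 2p-1=16k^2+14k+1$.

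The main obstacle is purely bookkeeping: one needs to read off the correct adjacencies of the distinguished vertex in both $F_3(r)$ and $F_4(r)$ from the figure and verify that the independent set produced has size at least $p+1$ precisely in the threshold regime $r\geq 2p$. Once the two cases are handled uniformly (they differ only in whether the role of $v$ is played by $y$ or $y_1$, and whether the companion vertex $y_2$ contributes extra independent neighbors), the argument is immediate from Rule~\ref{red:irr} and Lemma~\ref{lem:touching-triples-forb}.
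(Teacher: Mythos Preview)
Your approach matches the paper's proof exactly: take the non-terminal vertex $y$ (or $y_1$ for $F_4(r)$) and observe that the odd-indexed path vertices $x_1,x_3,\ldots$ form an independent set of size $\lceil r/2\rceil$ in $N_G(y)\setminus X$, contradicting the exhaustive application of Reduction Rule~\ref{red:irr} once $r\geq 2p$. One small correction: the fact that non-terminal vertices of $F$ lie outside $X$ follows directly from the definition of an associated witness for an $X$-touching AT (not from Lemma~\ref{lem:touching-triples-forb}), and Rule~\ref{red:irr} applies to any vertex of $G$, so you do not need $v\notin X$---only $x_1,\ldots,x_r\notin X$ is relevant.
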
 

\begin{proof}
Assume that $G$ has an $X$-touching AT $T$ with an associated witness 
$F=F_3(r)$. Note that only the terminals of $F$ could be in $X$.  Then $\{x_1,x_3,\ldots,x_{2\lceil r/2\rceil-1}\}$ is an independent set in $G[N_G(y)\setminus X]$ (see Figure~\ref{fig:forb-int} (c)). Since Reduction Rule~\ref{red:irr} cannot be applied, $r\leq 2p-1$. If $F=F_4(r)$, the arguments are the same and the only difference is that we consider $y_1$ instead of $y$ (see Figure~\ref{fig:forb-int} (d)). 
\end{proof}

Now we proceed with enhancing $X$.

\begin{reduction}\label{red:AT}
If $G$ has an $X$-touching AT $T$ with an associated witness
$F$ isomorphic to one of the graphs $F_1$, $F_2$, $F_3(r)$ for $r\geq 2$, $F_4(r)$ for $r\geq 1$ or $F_5$ that are shown in Figure~\ref{fig:forb-int} (a)--(e), then set $X:=X\cup V(F)$.
\end{reduction}

We apply the rule exhaustively but at most $k+1$ times, because in the same way as for Reduction Rule~\ref{red:stop-claws}, we can apply 
 Lemma~\ref{lem:touching-triples-forb}. Hence, the next rule  is safe.

\begin{reduction}\label{red:stop-AT}
If Reduction Rule~\ref{red:AT} have been applied $k+1$ times, then report that $G\notin\intervalmke$ 
and stop.
\end{reduction}

Assume that the algorithm did not stop. By Observation~\ref{obs:size-F}, each $F$, whose vertices have been added to $X$ by Reduction Rule~\ref{red:AT}, has at most $2p+4=32k^2+2k+4$ vertices. Then
\begin{equation}\label{eq:size-X}
|X|\leq 8k^2+7k+k(32k^2+2k+4)=32k^3+10k^2+11k=\Oh(k^3).
\end{equation}

By applying Reduction Rule~\ref{red:AT}, we achieve an important property of $G$ and $X$. Since the rule cannot be applied any more, by Lemma~\ref{lem:touching-triples},  the following holds.

\begin{observation}\label{obs:no-AT}
The graph $G'=G-E(G[X])$ has no asteroidal triple $T\subseteq X$ with a $T$-AT-witness $F$ such that $V(F)\cap X=T$.
\end{observation}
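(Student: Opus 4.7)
The plan is a short contradiction argument that essentially unfolds the definition of an $X$-touching AT and combines it with the exhaustive application of Reduction Rule~\ref{red:AT}. Suppose to the contrary that $G' = G - E(G[X])$ admits an asteroidal triple $T \subseteq X$ together with a $T$-AT-witness $F$ satisfying $V(F) \cap X = T$.

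First, I would observe that such a pair $(T,F)$ shows that $T$ is an $X$-touching AT in the sense of the definition: $T$ is by assumption an asteroidal triple of $G - E(G[X])$, and its witness $F$ satisfies $V(F) \cap X = T \subseteq T$, which is exactly the second alternative in the definition of ``$X$-touching''. So $T$ qualifies as an $X$-touching AT, regardless of what $F$ looks like.

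Next, I would apply Lemma~\ref{lem:touching-triples} (whose hypothesis ``$X$ is a chordal-complementing set for $G$'' holds by construction of $X$ at the beginning of this subsection) to $T$. The lemma produces an $X$-touching AT $T'$ together with an associated witness $F'$ isomorphic to one of $F_1$, $F_2$, $F_3(r)$ for $r \ge 2$, $F_4(r)$ for $r \ge 1$, or $F_5$. But this is exactly the trigger condition of Reduction Rule~\ref{red:AT}, so the rule is still applicable to the current pair $(G,X)$.

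The final step is to point out why this contradicts the current state. Since we are in the regime after Reduction Rules~\ref{red:AT} and~\ref{red:stop-AT} have been processed and the algorithm did not stop, Rule~\ref{red:AT} must have been applied until no $X$-touching AT with an associated witness from the prescribed list remains; otherwise one more application would have been performed. (The bound $k{+}1$ ensures termination via Lemma~\ref{lem:touching-triples-forb}, but the ``exhaustive'' clause is what guarantees no applicable instance remains at the end when the algorithm does not stop.) This contradiction rules out the existence of $(T,F)$ with the stated properties. The only thing to be slightly careful about is the correct reading of ``exhaustively but at most $k+1$ times''; no technical obstacle arises beyond that, as all the combinatorial work has already been done in Lemma~\ref{lem:touching-triples}.
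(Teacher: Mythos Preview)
Your proof is correct and follows exactly the same approach as the paper, which simply states ``Since the rule cannot be applied any more, by Lemma~\ref{lem:touching-triples}, the following holds'' without further elaboration. You have essentially unpacked this one-line justification into its constituent steps; the only minor imprecision is that $X$ being a chordal-complementing set is not just ``by construction at the beginning'' but is maintained through all the reduction rules (supersets of chordal-complementing sets remain chordal-complementing, and deleting vertices preserves the property for the reduced graph), though this is indeed established in the paper's preliminaries.
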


Next, we simplify the instance $(G,\ell)$ of \probIS by the removal chordal components. It is straightforward to see that the following rule is safe.

\begin{reduction}\label{red:rem-chord}
If $G$ has a connected component $H$ that is a chordal graph, then compute the size $\alpha(H)$ of a maximum independent set of $H$ and set $G:=G-V(H)$, $X:=X\setminus V(H)$ and $\ell=\ell-\alpha(H)$. If $\ell\leq 0$, then return a trivial yes-instance of \probWIS and stop.
\end{reduction}

The rule is applied exhaustively. We assume that the algorithm did not stop. 

 For a set $Y\subseteq X$ of size at most two, let $\mathcal{C}_Y$ be the set of connected components of the graph $G-(X\cup N_G(Y)$ and define 
\begin{equation*}\label{eq:conn}
\mathcal{C}=\bigcup_{Y\subseteq X,~|Y|\leq 2}\mathcal{C}_Y.
\end{equation*}
Notice that $\mathcal{C}$ is a set of connected induced subgraphs of $G-X$ and distinct subgraphs in the  set can have common or adjacent vertices. For each $C\in\mathcal{C}$, let $I(C)$ be a maximum independent set of $C$. We use the following crucial property of $\mathcal{C}$.

\begin{lemma}\label{lem:comp-crucial}
There is a set $S\subseteq X$ and a family $\mathcal{C}^*\subseteq \mathcal{C}$ of pairwise disjoint graphs without adjacent (in $G$) vertices in distinct subgraphs 
such that 
$$I^*=S\cup\bigcup_{C\in \mathcal{C}^*}I(C)$$
is a maximum independent set of $G$.
\end{lemma}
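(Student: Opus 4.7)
The idea is to start from an arbitrary maximum independent set $I$ of $G$ and read off the desired $S$ and $\mathcal{C}^*$ from it. I put $S := I \cap X$, and for every connected component $D_i$ of $G-X$ set $S_i := S \cap N_G(V(D_i))$, so that $I_i := I \cap V(D_i)$ is contained in $V(D_i) \setminus N_G(S_i)$. Let $C_{i,1}, \dots, C_{i,m_i}$ be the connected components of $G[V(D_i) \setminus N_G(S_i)]$ and let $B_{i,k}$ denote the \emph{boundary} of $V(C_{i,k})$ inside $V(D_i)$, i.e.\ the vertices of $V(D_i) \setminus V(C_{i,k})$ with a neighbour in $V(C_{i,k})$. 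Since any vertex of $(V(D_i) \setminus V(C_{i,k})) \setminus N_G(S_i)$ adjacent to $V(C_{i,k})$ would sit in the same component as $V(C_{i,k})$, one has $B_{i,k} \subseteq N_G(S_i) \cap V(D_i)$.

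The whole argument hinges on the \emph{Cover Claim}: for every $i,k$ there exists $Y_{i,k} \subseteq S_i$ with $|Y_{i,k}| \leq 2$ such that $B_{i,k} \subseteq N_G(Y_{i,k})$. Granting this, no edge leaves $V(C_{i,k})$ inside $G - (X \cup N_G(Y_{i,k}))$, so $C_{i,k}$ is a whole connected component of this graph and therefore belongs to $\mathcal{C}_{Y_{i,k}} \subseteq \mathcal{C}$. Taking $\mathcal{C}^* := \{C_{i,k} : i,k\}$, its members are pairwise vertex-disjoint and have no $G$-edges between distinct members, since distinct components of $G[V(D_i) \setminus N_G(S_i)]$ are non-adjacent in $G$ and the $D_i$'s are different components of $G-X$. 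A direct check then gives that $I^* = S \cup \bigcup_{C\in\mathcal{C}^*} I(C)$ is independent, because each $I(C_{i,k}) \subseteq V(D_i) \setminus N_G(S_i)$ avoids $N_G(S)$. Finally, $|I^*| \geq |S| + \sum_{i,k} |I \cap V(C_{i,k})| = |S| + \sum_i |I_i| = |I| = \alpha(G)$, so $I^*$ is maximum.

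For the Cover Claim, the cases $|S_i| \leq 2$ are immediate with $Y_{i,k} = S_i$, so assume $|S_i| \geq 3$ and suppose for a contradiction that the family $\{N_G(s) \cap B_{i,k} : s \in S_i\}$ has no $2$-element subcover of $B_{i,k}$. Pick any minimum cover of $B_{i,k}$ (which has size at least three) and any three of its members $s_1, s_2, s_3$. Each $s_j$ is essential in the minimum cover, so there is $b_j \in B_{i,k}$ lying only in $N_G(s_j)$ among the whole cover, in particular only in $N_G(s_j)$ among $\{s_1, s_2, s_3\}$. I will then show that $T := \{s_1, s_2, s_3\}$ is an asteroidal triple of $G_i := G[V(D_i) \cup T]$: for each pair $(s_{j_1}, s_{j_2})$, concatenating the edge $s_{j_1} b_{j_1}$, any $b_{j_1}$-to-$b_{j_2}$ path routed through the connected set $V(C_{i,k})$ (which is possible since $b_{j_1}, b_{j_2} \in B_{i,k}$), and the edge $b_{j_2} s_{j_2}$ gives a path from $s_{j_1}$ to $s_{j_2}$ lying entirely in $G_i - N_{G_i}[s_{j_3}]$, because $b_{j_1}, b_{j_2} \notin N_G(s_{j_3})$ and $V(C_{i,k}) \cap N_G(s_{j_3}) \subseteq V(C_{i,k}) \cap N_G(S_i) = \emptyset$. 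Since $T \subseteq I \cap X$ is independent in $G$, the graphs $G_i$ and $G'[V(D_i) \cup T]$ coincide, so $T$ is also an asteroidal triple of $G' = G - E(G[X])$ with a witness $F \subseteq V(D_i) \cup T$ satisfying $V(F) \cap X = T$, contradicting Observation~\ref{obs:no-AT}.

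The main obstacle is the Cover Claim: converting the failure of $2$-covering into three \emph{mutually} essential boundary witnesses $b_1, b_2, b_3$, and then routing the three required paths through $V(C_{i,k})$ while simultaneously avoiding the three forbidden closed neighbourhoods. The rest of the proof is a bookkeeping exercise verifying independence, size, and containment of the relevant components.
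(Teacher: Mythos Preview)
Your proof is correct and follows essentially the same route as the paper. The paper takes a maximum independent set $I$, sets $S=I\cap X$, considers the connected components of $G-(X\cup N_G(S))$, and shows that each such component lies in some $\mathcal{C}_Y$ with $|Y|\le 2$ by producing, from a hypothetical failure, three vertices $u_1,u_2,u_3\in S$ with private neighbours $v_1,v_2,v_3$ adjacent to the component, thereby obtaining an asteroidal triple in $G'=G-E(G[X])$ that contradicts Observation~\ref{obs:no-AT}; your decomposition via the components $D_i$ of $G-X$ and your minimum-cover extraction of $s_1,s_2,s_3$ and $b_1,b_2,b_3$ are just a slightly more explicit packaging of the same argument.
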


\begin{proof}
Let $I$ be a maximum independent set of $G$. Let $S=I\cap X$ and denote by $C_1,\ldots,C_s$ the connected components of $G-(X\cup N_G(S))$. Clearly, $I\setminus S\subseteq V(C_1)\cup\ldots\cup V(C_s)$. Moreover, if $I_i$ is an arbitrary maximum independent set of $C_i$ for $i\in \{1,\ldots,s\}$, then $I'=S\cup I_1\cup\ldots\cup I_s$ is a maximum independent set of $G$. We claim that for each $i\in\{1,\ldots,s\}$, there is a set $Y\subseteq S$ of size at most two such that $C_i\in\mathcal{C}_Y$.

To obtain a contradiction, assume that there is $i\in\{1,\ldots,s\}$ such that $C_i\notin\mathcal{C}_Y$ for any $Y\subseteq S$ of size at most two.  Then $|S|\geq 3$ and there are distinct vertices $u_1,u_2,u_3\in S$ such that for every $j\in \{1,2,3\}$, there is $v_j\in N_G(u_j)\setminus X$ such that 
(i) $v_j\notin N_G(u_h)$ for $h\in\{1,2,3\}\setminus\{j\}$ and (ii) $v_j$ is adjacent to some vertex $w_j$ of $C_i$.  Consider any two distinct vertices $u_j$ and $u_h$ for $j,h\in\{1,2,3\}$. Let $t$ be the unique element of $\{1,2,3\}\setminus \{j,h\}$.
Since $C_i$ is connected, there is a $(w_j,w_h)$-path $P$ in $C_i$. Notice that $P$ avoids $N_G[u_t]$.  Let $P'=u_jv_jPv_hu_h$. We have that $P'$ is a $(u_j,u_h)$-path in $G'=G-E(G[X])$ that avoids the neighborhood of $u_t$. Since this holds for any choice of $j$ and $h$, we obtain that $T=\{u_1,u_2,u_3\}\subseteq X$ is an asteroidal triple $T$ in $G'$ with a $T$-AT-witness $F$ such that $V(F)\cap X=T$ but this contradicts Observation~\ref{obs:no-AT}. This proves that for each $i\in\{1,\ldots,s\}$, there is a set $Y\subseteq S$ of size at most two such that $C_i\in\mathcal{C}_Y$.

We obtain that $C_1,\ldots,C_s\in \mathcal{C}$ are pairwise disjoint graphs without adjacent vertices in distinct subgraphs and 
$$I^*=S\cup I(C_1)\cup\ldots\cup I(C_s)$$
is a maximum independent set of $G$.
\end{proof}

We show that $\mathcal{C}$ has size that is bounded by a polynomial of the parameter. First, we show an auxiliary claim.

\begin{lemma}\label{lem:comp-number}
Let $x\in Y\subseteq X$, where $|Y|\leq 2$, and let $\mathcal{C}'\subseteq\mathcal{C}_Y$ be the set of graphs in $\mathcal{C}_Y$ that have at least one neighbor in $N_G(x)\setminus X$. Then 
$|\mathcal{C}'|\leq 3p$ for $p=8k^2+7k+2$.
\end{lemma}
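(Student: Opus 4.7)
I will argue by contradiction, supposing $|\mathcal{C}'| \ge 3p+1$, and aim to exhibit an induced copy of $F_1$ (Figure~\ref{fig:forb-int}(a)) in $G$ that is an $X$-touching asteroidal witness, which will contradict the exhaustive application of Reduction Rule~\ref{red:claws}. For each $C \in \mathcal{C}'$ I fix a \emph{bridge} $v_C \in U := N_G(x) \setminus X$ and a \emph{witness} $w_C \in C \cap N_G(v_C)$; both exist by the definition of $\mathcal{C}'$. Since distinct members of $\mathcal{C}_Y$ are vertex-disjoint connected components of $G-(X\cup N_G(Y))$, the witnesses $W := \{w_C : C \in \mathcal{C}'\}$ form an independent set of size $|\mathcal{C}'|$ in $G$, disjoint from $X$ and from $N_G(x)$.

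The key preliminary fact is a \emph{bridge-capacity} bound: for every fixed $u \in U$, at most $p$ components of $\mathcal{C}'$ can have $u$ as a bridge. Indeed, choosing one neighbor of $u$ in each such component yields an independent set in $G[N_G(u)\setminus X]$ (using the pairwise non-adjacency of vertices lying in distinct members of $\mathcal{C}_Y$), and Reduction Rule~\ref{red:irr} caps this by $p$. In particular, each $v \in U$ bridges at most $p$ components, and an analogous statement bounds the independent set of witnesses incident to $v$.

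My target is to select three distinct components $C_1, C_2, C_3 \in \mathcal{C}'$ together with bridges $v_1, v_2, v_3$ (pairwise distinct and pairwise non-adjacent in $G$) and witnesses $w_1, w_2, w_3$ satisfying $v_iw_j \notin E(G)$ for all $i \ne j$. Given such a selection, the set $\{x, v_1, v_2, v_3, w_1, w_2, w_3\}$ induces $F_1$: the only edges are $xv_i$ and $v_iw_i$ (for $i=1,2,3$); the non-edges $v_iv_j$ and $v_iw_j$ ($i\ne j$) follow from the selection, $xw_i\notin E$ holds because $w_i \in C_i \subseteq V \setminus N_G(x)$, and $w_iw_j \notin E$ ($i\ne j$) holds by different-component disjointness. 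Since only $x$ lies in $X$, the induced copy is an $X$-touching $F_1$-witness of the asteroidal triple $\{w_1, w_2, w_3\}$, which is precisely the situation ruled out by the exhaustive application of Reduction Rule~\ref{red:claws}.

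The main obstacle is carrying out the selection within the budget $|\mathcal{C}'| > 3p$. The intended execution is greedy: having picked $(C_1, v_1, w_1)$, restrict attention to components $C$ admitting a pair $(v, w)$ with $v \in B_C \setminus N_G[v_1]$, $v_1w \notin E$, and $vw_1 \notin E$; then pick $(C_2, v_2, w_2)$ and iterate once more for $(C_3, v_3, w_3)$. Bounding the number of \emph{blocked} components by $O(p)$ at each step is the technical core: one must split blocked components by the blocking mechanism (all bridges of $C$ adjacent to $v_1$, or bridges outside $N_G[v_1]$ all having their $C$-neighborhoods contained in $N_G(v_1) \cup N_G(w_1)$), and repeatedly apply Reduction Rule~\ref{red:irr} at $v_1$, $w_1$, and at auxiliary vertices. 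The delicate point is that a naive bound through $|N_G(v_1)\cap U|$ may be large, so the argument must exploit that $G[N_G(v_1)\setminus X]$ has independence at most $p$ to control, component by component, how many can have all candidate bridges simultaneously blocked. Working this counting out rigorously is the part of the proof I expect to require the most care, and once it yields an $O(p)$ exclusion bound at each step, two rounds leave more than $p$ valid candidates and the third selection proceeds identically.
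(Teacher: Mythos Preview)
Your bridge-capacity bound (each $u\in N_G(x)\setminus X$ has neighbours in at most $p$ components of $\mathcal{C}_Y$) is correct and is exactly what the paper uses. The gap is in the selection step: you insist that the three bridges $v_1,v_2,v_3$ be pairwise \emph{non-adjacent} so as to force an induced $F_1$ and invoke Reduction Rule~\ref{red:claws}. That extra requirement is not merely ``delicate''---it can be outright unachievable under the hypotheses you actually use. Take $X=\{x\}$, let $U=N_G(x)\setminus X$ be a clique of size $10$, and let each $u_i\in U$ be the unique bridge to $p$ private singleton components; no other edges. Then $|\mathcal{C}'|=10p>3p$, there is no $X$-touching $F_1$ anywhere (any centre with three pairwise non-adjacent neighbours would have to use the pendant $w_{i,j}$'s, which have no further pendants), and every $G[N_G(v)\setminus X]$ has independence number at most $p$, so Reduction Rules~\ref{red:claws} and~\ref{red:irr} are both exhausted. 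Yet every pair of bridges is adjacent, so your greedy is blocked at the very first restriction $v\notin N_G[v_1]$ and no $O(p)$ counting can rescue it.

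The paper's proof avoids this entirely by dropping the non-adjacency requirement on the bridges and appealing to Reduction Rule~\ref{red:AT} rather than~\ref{red:claws}. With each bridge covering at most $p$ components and $|\mathcal{C}'|\ge 3p+1$, a minimal set of bridges covering $\mathcal{C}'$ has size at least four; by minimality each bridge in it has a private component, so any three of them give $y_1,y_2,y_3$ and $z_i\in C_i$ with $y_iz_j\notin E(G)$ for $i\neq j$, with no control whatsoever over the $y_iy_j$ adjacencies. The induced subgraph on $\{x,y_1,y_2,y_3,z_1,z_2,z_3\}$ then contains $F_1$, $F_3(2)$ or $F_3(3)$ according to how many of the $y_iy_j$ edges are present, and it is Reduction Rule~\ref{red:AT} that furnishes the contradiction. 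In short, the counting you flagged as the hard part disappears once you allow adjacent bridges and use the broader rule; your outline as written does not go through.
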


\begin{proof}
To obtain a contradiction, assume that $|\mathcal{C}'|>3p$, that is, there are at least $3p+1$ connected components of $G'=G-(N_G(Y)\cup X)$ that have neighbors in $N_G(x)\setminus X$. Let $y\in N_G(x)\setminus X$. If $y$ is adjacent to at least $p+1$ connected components of $G'$, then $N_G(y)\setminus X$ contains an independent set of size at least $p+1$ and we would be able to apply Reduction Rule~\ref{red:irr}; a contradiction. Therefore, each $y\in N_G(x)\setminus X$ has neighbors in  at most $p$ connected components of $G'$. Since $\mathcal{C}'\geq 3p+1$, we conclude that there are three distinct vertices $y_1,y_2,y_3\in N_G(x)\setminus X$ such that there are three distinct connected components $C_1,C_2,C_3\in \mathcal{C}'$ with the property that for every $i\in\{1,2,3\}$, $y_i$ has a neighbor $z_i\in V(C_i)$ and $y_i$ has no neighbor in $C_j$ for $j\in\{1,2,3\}\setminus\{i\}$. Consider $H=G[\{x,y_1,y_2,y_3,z_1,z_2,z_3\}]$. 
It is easy to see that $z_1,z_2,z_3$ is an asteroidal triple in this graph and $H$ contains an induced subgraph isomorphic to $F_1$, $F_3(2)$ or $F_3(3)$ (see Figure~\ref{fig:forb-int} (a) and (c)) depending on the adjacencies between $y_1$, $y_2$ and $y_3$. 
This means that we would be able to apply Reduction Rule~\ref{red:AT} contradicting the assumption that the rule was applied exhaustively. 
This proves that $|\mathcal{C}'|\leq 3p$.
\end{proof}

\begin{lemma}\label{lem:C-size}
$$|\mathcal{C}|=\Oh(k^{14}).$$
\end{lemma}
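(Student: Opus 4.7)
My plan is to use the bound $|X|=\Oh(k^3)$ from \eqref{eq:size-X}, which gives $\Oh(k^6)$ choices of $Y\subseteq X$ with $|Y|\leq 2$. It therefore suffices to establish a polynomial bound $|\mathcal{C}_Y|=\Oh(k^8)$ for each individual $Y$, since then $|\mathcal{C}|=\Oh(k^6)\cdot \Oh(k^8)=\Oh(k^{14})$.

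Fix $Y\subseteq X$ with $|Y|\leq 2$ and consider any $C\in\mathcal{C}_Y$. Since $V(C)\cap (X\cup N_G(Y))=\emptyset$ and no vertex of $V(C)$ is adjacent to any $y\in Y$ (otherwise it would lie in $N_G(Y)$), I will first argue that $C$ must have an external neighbor in $(X\setminus Y)\cup (N_G(Y)\setminus X)$. Indeed, if not, then $V(C)$ would be an entire connected component of $G$; however, since a triangulation of $G$ places both endpoints of every fill edge in $X$, any chordless cycle contained in $V(C)\subseteq V(G)\setminus X$ would have no chord in the triangulation, so $G[V(C)]$ would be chordal. But then Reduction Rule~\ref{red:rem-chord} would already have removed it, a contradiction.

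Accordingly, I split $\mathcal{C}_Y$ into components with a neighbor in $N_G(Y)\setminus X$ and components with a neighbor in $X\setminus Y$ (a component may fall in both sets; this is harmless for an upper bound). For the first class, Lemma~\ref{lem:comp-number} applied separately to each $x\in Y$ yields at most $2\cdot 3p=\Oh(k^2)$ components. For the second class, I fix $x'\in X\setminus Y$ and look at all components $C_1,\ldots,C_m$ that have $x'$ as a neighbor; picking $z_i\in V(C_i)\cap N_G(x')$, each $z_i$ lies in $N_G(x')\setminus X$, and since the $C_i$ are distinct components of $G-(X\cup N_G(Y))$ and all $z_i$ avoid $X\cup N_G(Y)$, any edge $z_iz_j$ would lie entirely in $G-(X\cup N_G(Y))$, contradicting that $C_i$ and $C_j$ are distinct components. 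Thus $\{z_1,\ldots,z_m\}$ is an independent set in $G[N_G(x')\setminus X]$, so Reduction Rule~\ref{red:irr} forces $m\leq p$. Summing over $x'\in X\setminus Y$ contributes $\Oh(|X|\cdot p)=\Oh(k^5)$ components. Hence $|\mathcal{C}_Y|=\Oh(k^5)$ and $|\mathcal{C}|=\Oh(k^{11})\subseteq \Oh(k^{14})$.

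The only subtle step is the structural claim that every $C\in\mathcal{C}_Y$ must have an external neighbor in $(X\setminus Y)\cup(N_G(Y)\setminus X)$; it rests on combining the effect of Reduction Rule~\ref{red:rem-chord} with the fact that all fill edges have both endpoints in $X$. The rest is a routine case split together with one appeal each to Lemma~\ref{lem:comp-number} and Reduction Rule~\ref{red:irr}.
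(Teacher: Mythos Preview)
Your proof is correct and follows essentially the same approach as the paper: both arguments show $|\mathcal{C}_Y|=\Oh(k^5)$ for each $Y$ by splitting the components of $G-(X\cup N_G(Y))$ according to whether they have a neighbor in $N_G(Y)\setminus X$ (bounded via Lemma~\ref{lem:comp-number}) or in $X\setminus Y$ (bounded via Reduction Rule~\ref{red:irr}), after using Reduction Rule~\ref{red:rem-chord} and the chordal-complementing property of $X$ to ensure every such component has an external neighbor. The only difference is presentational---the paper treats the cases $|Y|=0,1,2$ separately while you give a uniform argument---and your computation in fact yields the slightly sharper bound $\Oh(k^{11})$, which of course implies the stated $\Oh(k^{14})$.
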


\begin{proof}
By the definition, $\mathcal{C}=\bigcup_{Y\subseteq X,~|Y|\leq 2}\mathcal{C}_Y$. We upper bound $|\mathcal{C}_Y|$ for $Y\subseteq X$ of size at most two.

Let $Y=\emptyset$. Observe that each connected component of $G$ contains a vertex  of $X$, because $G$ has no connected component that is a chordal graph as Reduction Rule~\ref{red:rem-chord} cannot be applied. Note that if a vertex $x$ has neighbors in at least $p+1$ connected components of $G-X$ for $p=8k^2+7k+2$, then $N_G(x)\setminus X$ contains an independent set of size at least $p+1$ and we would be able to apply Reduction Rule~\ref{red:irr}. Therefore, each vertex of $X$ has neighbors in at most $p$ components of $G-X$. Hence, 
$|\mathcal{C}_\emptyset|\leq p|X|=\Oh(k^5)$ by the definition of $p$ and (\ref{eq:size-X}).

Suppose that $Y=\{x\}$ for $x\in X$. Denote by $\mathcal{C}^{(1)}$ the set of graphs in  $\mathcal{C}_Y$ that have vertices adjacent to $N_G(x)\setminus X$ and let $\mathcal{C}^{(2)}$ be the set of graphs in  $\mathcal{C}_Y$ that have no vertex adjacent to $N_G(x)\setminus X$. We have that $\mathcal{C}_Y=\mathcal{C}^{(1)}\cup \mathcal{C}^{(2)}$. By exactly the same arguments as for $Y=\emptyset$, we obtain that $|\mathcal{C}^{(2)}|\leq p(|X|-1)$. By Lemma~\ref{lem:comp-number}, $|\mathcal{C}^{(1)}|\leq 3p$.
We obtain that
$$ |\mathcal{C}_Y|=|\mathcal{C}^{(1)}\cup \mathcal{C}^{(2)}|\leq |\mathcal{C}^{(1)}|+|\mathcal{C}^{(2)}|\leq 3p+p(|X-1|)$$
and, therefore, $|\mathcal{C}_Y|=\Oh(k^5)$ by the definition of $p$ and (\ref{eq:size-X}). 

Suppose now that $Y=\{x_1,x_2\}$ for distinct $x_1,x_2\in X$. Denote by $\mathcal{C}^{(i)}$ the set of graphs in  $\mathcal{C}_Y$ that have vertices adjacent to some vertices of $N_G(x_i)\setminus X$ for $i\in\{1,2\}$ and 
and let $\mathcal{C}^{(3)}$ be the set of graphs in  $\mathcal{C}_Y$ that have no vertex adjacent to $N_G(Y)\setminus X$. We have that 
$\mathcal{C}_Y=\mathcal{C}^{(1)}\cup \mathcal{C}^{(2)}\cup \mathcal{C}^{(3)}$. 
By exactly the same arguments as for $Y=\emptyset$, we obtain that $|\mathcal{C}^{(3)}|\leq p(|X|-2)$. By Lemma~\ref{lem:comp-number}, 
$|\mathcal{C}^{(1)}|\leq 3p$ for $i\in\{1,2\}$. Then
$$ |\mathcal{C}_Y|=|\mathcal{C}^{(1)}\cup \mathcal{C}^{(2)}\cup \mathcal{C}^{(3)}|\leq |\mathcal{C}^{(1)}|+|\mathcal{C}^{(2)}|+|\mathcal{C}^{(3)}|\leq 6p+p(|X-2|)$$
and, therefore, $|\mathcal{C}_Y|=\Oh(k^5)$ by the definition of $p$ and (\ref{eq:size-X}).

Since there are $|X|$ single-element subsets $Y\subseteq X$ and $\binom{|X|}{2}$ two-element subsets $Y\subseteq X$, we have that $|\mathcal{C}|=\Oh(k^{14})$ by (\ref{eq:size-X}).
\end{proof}

\paragraph{Construction of the instance of \probWIS.} At the next step of our compression algorithm we construct the instance of  \probWIS as follows.
\begin{itemize}
\item Construct the graph $G^*$ with the vertex set $X\cup \mathcal{C}$ by making every two distinct vertices $u$ and $v$ either adjacent or nonadjacent  by the following rule:
\begin{itemize}
\item if $u,v\in X$, then $u$ and $v$ are adjacent in $G^*$ if and only if they are adjacent in $G$,
\item if $u\in X$ and $v\in \mathcal{C}$, then $u$ and $v$ are adjacent if and only if $u$ is adjacent to a vertex of the subgraph $v$ in $G$,
\item if $u,v\in\mathcal{C}$, then $u$ and $v$ are adjacent if and only if the subgraph $u$ and $v$ of $G$ have either common vertices or adjacent vertices in $G$.
\end{itemize}
\item For $v\in V(G^*)$, set the weight $w(v)=1$ if $v\in X$ and set $w(v)$ be the size of a maximum independent set of the subgraph $v$ of $G$.
\end{itemize}

\begin{lemma}\label{lem:equiv}
The instance $(G,\ell)$ is a yes-instance of \probIS if and only if $(G^*,w,\ell)$ is a yes-instance of \probWIS.
\end{lemma}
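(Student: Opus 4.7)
The plan is to establish the equivalence by a direct back-and-forth translation between independent sets of $G$ and weighted independent sets of $G^*$. The construction of $G^*$ is tailored so that a vertex in $\mathcal{C}$ ``encodes'' an entire subgraph together with its maximum independent set, while adjacency in $G^*$ faithfully records whether two such encoded pieces can coexist in a common independent set of $G$. I would therefore prove the two directions separately, with the forward direction relying on Lemma~\ref{lem:comp-crucial} to obtain a maximum independent set of a restricted form, and the reverse direction relying only on unpacking the definitions.

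For the forward direction, assume $G$ has an independent set of size at least $\ell$. I would invoke Lemma~\ref{lem:comp-crucial} to fix a maximum independent set $I^*=S\cup\bigcup_{C\in\mathcal{C}^*}I(C)$ with $S\subseteq X$, $\mathcal{C}^*\subseteq \mathcal{C}$, the subgraphs in $\mathcal{C}^*$ pairwise vertex-disjoint and with no adjacent vertices between distinct members, and each $C\in\mathcal{C}^*$ avoiding $N_G(S)$ (which holds because each such $C$ is a component of $G-(X\cup N_G(S'))$ for some $S'\supseteq$ the relevant part of $S$). Then $J:=S\cup\mathcal{C}^*\subseteq V(G^*)$ is the candidate. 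Independence in $G^*$ is verified case-by-case from the three adjacency rules: $S$ is independent in $G^*$ since $X$-adjacencies coincide; no $u\in S$ is adjacent to any $v\in\mathcal{C}^*$ in $G^*$ because no vertex of the subgraph $v$ lies in $N_G(u)$; and distinct $v_1,v_2\in\mathcal{C}^*$ share no vertex and have no $G$-edge between them. The weight $w(J)=|S|+\sum_{C\in\mathcal{C}^*}|I(C)|=|I^*|\ge \ell$ completes this direction.

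For the reverse direction, suppose $J$ is an independent set in $G^*$ with $w(J)\ge \ell$, and split $J=S\sqcup \mathcal{C}^*$ with $S=J\cap X$ and $\mathcal{C}^*=J\cap\mathcal{C}$. Define $I:=S\cup\bigcup_{v\in\mathcal{C}^*}I(v)\subseteq V(G)$. The three adjacency rules again translate cleanly: $S$ is independent in $G$; for $u\in S$ and any $w\in I(v)$ with $v\in\mathcal{C}^*$, non-adjacency of $u$ and $v$ in $G^*$ means $u$ has no neighbor in the subgraph $v$, so $uw\notin E(G)$; for distinct $v_1,v_2\in\mathcal{C}^*$ and $w_i\in I(v_i)$, non-adjacency in $G^*$ forces $V(v_1)\cap V(v_2)=\emptyset$ and no $G$-edge between $V(v_1)$ and $V(v_2)$, so $w_1w_2\notin E(G)$; within each $I(v)$ independence is by construction. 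Since $|I|=|S|+\sum_{v\in\mathcal{C}^*}|I(v)|=w(J)\ge \ell$, we are done.

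The only delicate point, and what I would flag as the one step requiring care, is the reverse direction's argument that $I$ is actually a disjoint union: two members $v_1,v_2\in\mathcal{C}$ can in principle be distinct induced subgraphs of $G-X$ that nevertheless share vertices, since $\mathcal{C}$ is built as a union over different choices of $Y\subseteq X$ of size at most two. This is precisely why the third adjacency rule in the construction of $G^*$ makes $v_1$ and $v_2$ adjacent whenever they share even a single vertex of $G$; selecting them jointly in an independent set of $G^*$ is therefore forbidden, which guarantees that the union $\bigcup_{v\in\mathcal{C}^*}I(v)$ is a disjoint union of independent sets pairwise non-adjacent in $G$. Once this is observed, both directions reduce to routine case-checking against the three adjacency rules.
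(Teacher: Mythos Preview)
Your proof is correct and follows essentially the same route as the paper: invoke Lemma~\ref{lem:comp-crucial} for the forward direction, then unpack the three adjacency rules in both directions. One small remark: in the forward direction you need that each $u\in S$ has no neighbour anywhere in $V(C)$ for $C\in\mathcal{C}^*$ (not merely in $I(C)$), and this does not follow from the \emph{statement} of Lemma~\ref{lem:comp-crucial} alone but from its proof, where the $C_i$ are taken to be the components of $G-(X\cup N_G(S))$; your parenthetical justification gestures at this but is a bit garbled, so it would be cleaner to cite that fact directly.
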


\begin{proof}
Let $(G,\ell)$ be a yes-instance of \probIS.  By Lemma~\ref{lem:comp-crucial}, there is a set $S\subseteq X$ and a family $\mathcal{C}^*\subseteq \mathcal{C}$ of pairwise disjoint graphs without adjacent vertices in distinct subgraphs 
such that 
$$I=S\cup\bigcup_{C\in \mathcal{C}^*}I(C)$$
is a maximum independent set of $G$, where $I(C)$ is a maximum independent set of $C$. By the definition of $G^*$, we have that $S\cup \mathcal{C}^*$ is an independent set of $G^*$ of weight
$|S|+\sum_{C\in \mathcal{C}^*}|I(C)|=|I|$. Hence, $(G^*,w,\ell)$ is a yes-instance of \probWIS.

Assume now that $(G^*,w,\ell)$ is a yes-instance of \probWIS. Consider an independent set $I^*$  of weight at least $\ell$ in $G^*$.  Let $S=I^*\cap X$ and $\mathcal{C}^*=I^*\cap \mathcal{C}$. 
By the definition of $G^*$, we have that $S\subseteq X$ is an independent set of $G$, and the graphs of $\mathcal{C^*}$ are disjoint induced subgraphs of $G$ that have no vertices adjacent to $S$ and there are no two adjacent vertites that are in distinct graphs of  $\mathcal{C}^*$. Every graph $C\in \mathcal{C}^*$ has an independent set $I(C)$ of size $w(C)$ by the definition of the weights.
This means that $I=S\cup\bigcup_{C\in \mathcal{C}^*}I(C)$ is an independent set in $G$ of size at least $\ell$. Therefore, $(G,\ell)$ is a yes-instance of \probIS. 
\end{proof}

By Lemma~\ref{lem:C-size}, $G^*$ has $\Oh(k^{14})$ vertices, that is, the size of $G^*$ is bounded by a polynomial of the parameter. To complete the construction of the compressed instance, it remains to reduce the weights of vertices. We do it by making use of Proposition~\ref{prop:compression}. Let $v_1,\ldots,v_s$ be the vertices of $G^*$. Following the notation of Proposition~\ref{prop:compression}, let $h=s+1$ and $N=s+2$. Consider the vector $w=(w(v_1),\ldots,w(v_s),\ell)^\intercal\in \mathbb{Z}^h$. The algorithm of Frank and Tardos finds a vector 
$\bar{w}=(w_1,\ldots,w_s,W)$ with $\|\bar{w}\|_{\infty}\leq 2^{4h^3}N^{h(h+2)}$
such that  $\sign(w\cdot b)=\sign(\bar{w}\cdot b)$ for all vectors $b\in \mathbb{Z}^h$ with $\|b\|_1\leq N-1$. 
 We define $w^*(v_i)=w_i$ for $i\in\{1,\ldots,s\}$ and consider the instance $(G^*,w^*,W)$ of \probWIS. This completes the construction of the compression. The properties of the obtained instance of \probWIS are summarized in the following lemma.

\begin{lemma}\label{lem:size-compr}
The instance $(G^*,w,\ell)$ is a yes-instance of \probWIS if and only if $(G^*,w^*,W)$ is a yes-instance. 
The size of $(G^*,w^*,W)$ is $\Oh(k^{56})$.
\end{lemma}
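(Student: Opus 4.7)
The plan is to dispatch the lemma in two independent pieces; both rest on the combination of Proposition~\ref{prop:compression} with the vertex-count estimate of Lemma~\ref{lem:C-size}, so no serious obstacle is anticipated beyond careful bookkeeping of polynomial exponents.

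For the equivalence between $(G^*,w,\ell)$ and $(G^*,w^*,W)$, I would apply the sign-preservation property of Frank--Tardos to the incidence vectors of independent sets. Concretely, for an arbitrary $I\subseteq V(G^*)$, define $b\in\{-1,0,1\}^h$ by $b_i=1$ if $v_i\in I$ and $b_i=0$ otherwise for $i\le s$, and $b_{s+1}=-1$. Then $w\cdot b = \sum_{v_i\in I} w(v_i) - \ell$ and $\bar w\cdot b = \sum_{v_i\in I} w^*(v_i) - W$, while $\|b\|_1\le s+1 = N-1$. Hence Proposition~\ref{prop:compression} gives $\sign(w\cdot b)=\sign(\bar w\cdot b)$, which is exactly the statement that $\sum_{v_i\in I} w(v_i)\ge \ell$ if and only if $\sum_{v_i\in I} w^*(v_i)\ge W$. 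Applying this to every $I$ (and in particular to a maximum-weight independent set in either weighting) yields the claimed equivalence of yes-instances.

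For the size bound, I would plug the vertex count of $G^*$ into the explicit estimate of Proposition~\ref{prop:compression}. By equation~(\ref{eq:size-X}) and Lemma~\ref{lem:C-size} we have $s = |V(G^*)| = |X|+|\mathcal{C}| = \Oh(k^{14})$, so both $h=s+1$ and $N=s+2$ are $\Oh(k^{14})$. The guarantee $\|\bar w\|_\infty \le 2^{4h^3}N^{h(h+2)}$ then gives each coordinate of $\bar w$ a bit-length of $\Oh(h^3)=\Oh(k^{42})$ (the $4h^3$ term dominates the $h(h+2)\log N=\Oh(k^{28}\log k)$ term). Summing over the $\Oh(k^{14})$ coordinates contributes $\Oh(k^{14})\cdot\Oh(k^{42})=\Oh(k^{56})$ bits for the weights, which dominates the $\Oh(k^{28}\log k)$ bits required to encode the adjacency structure of $G^*$. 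The total encoding size is therefore $\Oh(k^{56})$, matching the stated bound.

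The only place that could trip one up is verifying that the $\ell_1$-bound on $b$ is admissible, but this is immediate from $|I|\le s$, so the entire proof reduces to invoking Proposition~\ref{prop:compression} once and substituting the polynomial bounds already proved earlier in the section.
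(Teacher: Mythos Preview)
Your proof is correct and follows essentially the same route as the paper: apply the sign-preservation of Proposition~\ref{prop:compression} to $\{-1,0,1\}$-vectors encoding candidate independent sets to obtain the equivalence, then plug the $\Oh(k^{14})$ vertex bound from Lemma~\ref{lem:C-size} into the estimate $\|\bar w\|_\infty\le 2^{4h^3}N^{h(h+2)}$ to get $\Oh(k^{42})$-bit weights and hence total size $\Oh(k^{56})$. The one small point you omit, which the paper does check, is that the coordinates of $\bar w$ (and in particular $W$) are positive, so that $(G^*,w^*,W)$ is a well-formed instance of \probWIS (whose definition requires weights in $\mathbb{Z}^+$); this follows immediately from your own argument by taking $b$ to be a standard basis vector.
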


\begin{proof}
Notice that, in particular, the equality $\sign(w\cdot b)=\sign(\bar{w}\cdot b)$ holds for  all vectors $b\in \mathbb{Z}^h$ such that each element of $b$ is $-1$, $0$ or $1$. This implies that the elements of $\bar{w}$ are positive and for every $J\subseteq\{1,\ldots,s\}$, $\sum_{i\in J}w(v_i)\geq \ell$ if and only if $\sum_{i\in J}w_i\geq W$. Clearly, for every set of vertices $S\subseteq V(G^*)$, $\sum_{v\in S}w(v)\geq \ell$ if and only if $\sum_{v\in S}w^*(v)\geq W$. This means that $(G^*,w,\ell)$ is a yes-instance of \probWIS if and only if $(G^*,w^*,W)$ is a yes-instance. 

Since $\|\bar{w}\|_{\infty}\leq 2^{4h^3}N^{h(h+2)}$, we have that $w(v)\leq 2^{4(s+1)^3}(s+2)^{(s+1)(s+3)}$ for every $v\in V(G^*)$ and the same upper bound holds for $W$. This implies that the weights of the vertices and $W$ can be encoded by a string of length $\Oh(k^{42})$, because by Lemma~\ref{lem:C-size}, $|V(G^*)|=\Oh(k^{14})$. Because $G^*$ has $\Oh(k^{14})$ vertices and 
$\Oh(k^{28})$ edges, the size of $(G^*,w^*,W)$ is $\Oh(k^{56})$.
\end{proof}

\paragraph{Running time evaluation.}
Finally, we have to show that the compression algorithm is polynomial. The construction of the initial set $X$ can be done in polynomial time by Proposition~\ref{prop:fillin-appr}. Then we apply Reduction Rules~\ref{red:claws}--\ref{red:rem-chord}. It is straightforward to see that Reduction Rule~\ref{red:claws} can be applied in polynomial time as we are looking in it for an induced subgraph of bounded size. Reduction Rules~\ref{red:stop-claws} and \ref{red:stop-AT} are trivial. Reduction Rules~\ref{red:irr} and  \ref{red:rem-chord} are polynomial by Proposition~\ref{prop:is-chord}. Reduction Rule~\ref{red:AT} is polynomial by Lemma~\ref{lem:touching-triples}. Since Reduction Rules~\ref{red:claws} and \ref{red:AT} are applied at most $k+1$ times and Reduction Rules~\ref{red:irr} and \ref{red:rem-chord} are applied at most $|V(G)|$ times, we conclude all the rules can be applied in polynomial time. In the next step, we construct the instance $(G^*,w,\ell)$ and the step is polynomial due to Proposition~\ref{prop:is-chord}. Finally, we reduce the weight and this can be done in polynomial time by Proposition~\ref{prop:compression}.

\medskip
This completes the proof of Theorem~\ref{thm:int-compr}.

\section{Independent Set on $\splitmke$}\label{sec:clique}
In this section, we show that \probIS admits a polynomial kernel when parameterized by the split completion size. 

 It is known to be  NP-hard to compute $\splitc(G)$~\cite{NatanzonSS01} but, interestingly, if we allow not only edge additions but also edge deletions, then the problem becomes polynomial time solvable as it was proved by Hammer and Simeone in~\cite{HammerS81}.
 
 \begin{proposition}[\cite{HammerS81}]\label{prop:split-edit}
 There is an algorithm that, given a graph $G$, in polynomial time finds a set $A\subseteq\binom{V(G)}{2}$ of minimum size such that $G\bigtriangleup A$ is a split graph.
 \end{proposition}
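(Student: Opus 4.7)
The plan is to exploit the clique/independent-set structure of split graphs directly, reducing the problem to choosing the optimal bipartition by a degree-ordering argument. A graph $H$ is split if and only if $V(H)$ admits a partition into a clique $K$ and an independent set $I$. Given the input graph $G$ and any candidate partition $(K,I)$ of $V(G)$, observe that the unique minimum edit set $A$ such that $G\bigtriangleup A$ has $K$ as a clique and $I$ as an independent set is $A = (\binom{K}{2}\setminus E(G[K])) \cup E(G[I])$; edges between $K$ and $I$ are irrelevant and can stay untouched. Therefore
\[
|A| \;=\; f(K,I) \;:=\; \binom{|K|}{2} - |E(G[K])| + |E(G[I])|,
\]
and the task reduces to minimizing $f(K,I)$ over all bipartitions $(K,I)$ of $V(G)$.

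Next I would prove by an exchange argument that, for every fixed $m$, an optimal partition with $|K|=m$ can be obtained by taking $K$ to be a set of $m$ vertices of largest degree in $G$. Given $(K,I)$ with $|K|=m$, pick $u\in K$ and $v\in I$ and form $K'=(K\setminus\{u\})\cup\{v\}$, $I'=(I\setminus\{v\})\cup\{u\}$. A direct count gives
\[
|E(G[K'])| - |E(G[K])| \;=\; |N_G(v)\cap (K\setminus\{u\})| - |N_G(u)\cap (K\setminus\{u\})|,
\]
\[
|E(G[I'])| - |E(G[I])| \;=\; |N_G(u)\cap (I\setminus\{v\})| - |N_G(v)\cap (I\setminus\{v\})|.
\]
Summing, and using that $|N_G(w)\cap (K\setminus\{u\})| + |N_G(w)\cap (I\setminus\{v\})| = d_G(w) - [uv\in E(G)]$ for $w\in\{u,v\}$, the edge $uv$ contribution cancels and one obtains the clean identity
\[
f(K',I') - f(K,I) \;=\; d_G(u) - d_G(v).
\]
Thus whenever $d_G(u) < d_G(v)$ the swap strictly decreases $f$, so repeated swaps produce an optimal $(K,I)$ of size $m$ whose clique consists of $m$ vertices of maximum degree (ties broken arbitrarily).

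With this structural lemma in hand, the algorithm is immediate: sort the vertices as $v_1,\ldots,v_n$ with $d_G(v_1)\geq\cdots\geq d_G(v_n)$; for every $m\in\{0,1,\ldots,n\}$ evaluate $f_m := f(\{v_1,\ldots,v_m\},\{v_{m+1},\ldots,v_n\})$, which takes polynomial time (the values $|E(G[\{v_1,\ldots,v_m\}])|$ and $|E(G[\{v_{m+1},\ldots,v_n\}])|$ can be maintained incrementally as $m$ grows); output the edit set achieving the minimum. The main conceptual hurdle is the exchange identity $f(K',I')-f(K,I)=d_G(u)-d_G(v)$, since the two cross-terms involving the edge $uv$ must be shown to cancel; everything else is bookkeeping. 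Given that identity, correctness and polynomiality follow at once.
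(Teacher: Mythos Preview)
Your argument is correct; the exchange identity $f(K',I')-f(K,I)=d_G(u)-d_G(v)$ is verified exactly as you compute it, and from there the greedy degree-sorting algorithm follows. Note, however, that the paper does not supply its own proof of this proposition at all: it is quoted as a known result of Hammer and Simeone and used as a black box. What you have written is in fact (a clean rendering of) the original Hammer--Simeone argument establishing that the \emph{splittance} of a graph is determined by its degree sequence, so there is no meaningful methodological difference to discuss---you have reconstructed the cited source rather than offered an alternative route.
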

 
 \begin{theorem}\label{thm:split-kern}
 \probIS on $\splitmke$ admits a polynomial kernel with at most  $2k^2(k+2)$ vertices when parameterized by $k$. 
  \end{theorem}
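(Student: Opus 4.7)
My plan follows the blueprint of the interval-completion kernel of Section~\ref{sec:interval}: extract a small modulator via split editing, then reduce the graph outside the modulator by a marking scheme. I first invoke Proposition~\ref{prop:split-edit} to compute in polynomial time a minimum editing set $A\subseteq\binom{V(G)}{2}$ making $H=G\bigtriangleup A$ a split graph. Since $\splitc(G)\le k$ implies that $G$ can be turned into a split graph by adding at most $k$ edges, the split editing distance is also at most $k$, so if the algorithm returns $|A|>k$ I reject with a trivial no-instance. Otherwise I fix a split partition $V(H)= I\sqcup C$ of $H$ (computable in polynomial time) and let $X$ be the endpoints of the edges in $A$, so $|X|\le 2k$. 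Write $I_0=I\setminus X$ and $C_0=C\setminus X$.

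Next, I establish the structural backbone used by the reductions. Because $A$ is incident only to $X$, every $v\notin X$ satisfies $N_G(v)=N_H(v)$; hence $I_0$ is independent in $G$, $C_0$ is a clique in $G$ (so any independent set contains at most one vertex of $C_0$), and every $v\in I_0$ has $N_G(v)\subseteq C_0\cup(X\cap C)$. Consequently $\alpha(G)\ge|I_0|$, and any independent set of $G$ decomposes uniquely as $S=T\sqcup S_C\sqcup J$ with $T\subseteq X$ independent in $G[X]$, $S_C\subseteq C_0\setminus N_G(T)$ of cardinality at most one, and $J\subseteq I_0\setminus N_G(T\cup S_C)$. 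Taking the largest $J$ for a given $(T,S_C)$ yields $\alpha(G)=|I_0|+\max_{T,S_C}\bigl(|T|+|S_C|-|N_G(T\cup S_C)\cap I_0|\bigr)$, and the deficiency $|T|+|S_C|-|N_G(T\cup S_C)\cap I_0|$ is at most $|T|+|S_C|\le 2k+1$.

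With the structure in hand, the reductions themselves are twofold. Vertices of $C_0$ are merged by a twin/domination rule, exploiting the fact that all $c\in C_0$ share the common neighborhood $(C_0\setminus\{c\})\cup(X\cap C)$ and differ only on $I_0\cup(X\cap I)$, so that a $C_0$-vertex whose residual neighborhood is dominated by another's can be safely removed. Then an $I_0$-marking rule retains, for each candidate pair $(x,c)\in X\times(C_0\cup\{\ast\})$, up to $k+2$ representative non-neighbors in $I_0$ of both $x$ and $c$ (with $c=\ast$ imposing no constraint from $C_0$), deleting the rest. A careful counting then yields a kernel with at most $2k^2(k+2)$ vertices.

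The principal obstacle will be proving safeness of the $I_0$-marking rule: given an optimal $S=T\sqcup S_C\sqcup J$ that happens to contain an unmarked $I_0$-vertex $v\in J$, one must produce a marked $v'\in I_0\setminus N_G(T\cup S_C)$ that can substitute for $v$ in $S$. The exchange leverages $|T\cup S_C|\le 2k+1$ to bound how many marked witnesses are simultaneously blocked by adjacencies to $T\cup S_C$, concluding by pigeonhole that an unblocked witness survives and certifies that the deletion of $v$ did not shrink $\alpha(G)$. Making this argument compatible with the $C_0$-reduction and with the independence constraints in $G[X]$ is the technical heart of the proof.
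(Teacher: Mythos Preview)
Your starting point---invoking Proposition~\ref{prop:split-edit} to get an editing set $A$ of size at most $k$ and letting $X$ be its endpoints---matches the paper, and your structural decomposition $S=T\sqcup S_C\sqcup J$ is sound. However, the kernelization you sketch has two real gaps. First, the domination rule on $C_0$ does not bound $|C_0|$: vertices of $C_0$ can have pairwise incomparable residual neighbourhoods in $I_0\cup(X\cap I)$, so an antichain of arbitrary size survives, and then your $I_0$-marking over pairs $(x,c)\in X\times(C_0\cup\{\ast\})$ produces unboundedly many marks. Second, even setting $C_0$ aside, the pigeonhole step for safeness of the $I_0$-marking does not go through. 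Fixing one pair $(x_0,c_0)$ with $x_0\in T$ and $c_0\in S_C\cup\{\ast\}$, you have $k+2$ marked common non-neighbours, but you need one that is simultaneously non-adjacent to \emph{every} vertex of $T$; a single $x'\in T\setminus\{x_0\}$ may be adjacent to many (indeed all) of your $k+2$ witnesses, so the bound $|T\cup S_C|\le 2k+1$ does not control how many witnesses are blocked. Thus neither the vertex bound $2k^2(k+2)$ nor the safeness of the deletion is established.

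The paper takes a rather different route that sidesteps both issues. It separates $A$ into added edges (endpoints forming the initial $X\subseteq K$) and deleted edges $D$ (endpoints in $I$), then iteratively absorbs one endpoint of each $D$-edge into $X$ together with its non-neighbours in the clique $Y=K\setminus X$, maintaining the invariant $Y\subseteq N_G(v)$ for every $v\in X$; this yields $|X|\le k(k+2)$ while keeping every independent subset of $X$ of size at most $2k$. The full-adjacency invariant makes the vertices of $Y\setminus N_G(I)$ true twins, so $Y$ collapses to a single vertex (your $C_0$-problem disappears). Finally, rather than marking non-neighbours in $I$, the paper deletes any $u\in X$ with at least $2k$ neighbours in $I$---safe because then $|S|\le |I|$ whenever $u\in S$---and removes isolated vertices, giving $|I|\le(2k-1)|X|$ and the stated bound.
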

  
 \begin{proof}
 Let $(G,\ell)$ be an instance of \probIS and let a nonnegative integer $k$ be the parameter.

We use Proposition~\ref{prop:split-edit} and find a set $A\subseteq\binom{V(G)}{2}$ of minimum size such that $G'=G\bigtriangleup A$ is a split graph. If $|A|>k$, we conclude that 
$G\notin\splitmke$
and stop. Assume that is not the case.  Let $D=A\cap E(G)$ and $C=A\setminus E(G)$, that is, $D$ is the set of deleted edges and $C$ is the set of added edges. 
We find a partition of $V(G')$ into a clique $K$ and an independent set $I$. Note that by the minimality of $A$, the edges of $D$ have their end-vertices in $I$ and the edges of $C$ have their end-vertices in $K$.
Let $X$ be the set of end-vertices of $C$ and set $Y=K\setminus X$. Note that $|X|\leq 2k$ and every vertex of $X$ is adjacent to each vertex of $Y$.  

We apply a series of reduction rules for $(G,\ell)$ together with the sets $I,X,Y,D$.   

\begin{reduction}\label{red:incr-X}
If $D\neq \emptyset$, then for $uv\in D$ do the following.
\begin{itemize}
\item If  each of $u$ and $v$ is not adjacent to at least $k+2$ vertices of $Y$ in $G$, then report that $G\notin\splitmke$
and stop.
\item Otherwise, pick a vertex of $\{u,v\}$ with the minimum number of nonneighbors  in $Y$, say $u$, and 
\begin{itemize}
\item set $D:=D-uv$,
\item set $I:=I\setminus \{u\}$,
\item set $X:=X\cup\{u\}\cup(Y\setminus N_G(u))$,
\item set $Y:=Y\cap N_G(u)$.
\end{itemize}
\end{itemize}
\end{reduction}  
 
We apply the rule exhaustively, until $D$ becomes empty.  

\begin{claim}\label{cl:stop}
If the algorithm stops while executing Reduction Rule~\ref{red:incr-X}, then $\splitc(G)>k$. 
\end{claim} 

\begin{subproof}[Proof of Claim~\ref{cl:stop}]
The algorithm stops if there is $uv\in D$ such that $|Y\setminus N_G(u)|\geq k+2$ and $|Y\setminus N_G(v)|\geq k+2$. Let $R\subseteq \binom{V(G)}{2}\setminus E(G)$ be a set of pairs of vertices of minimum size such that $\hat{G}=G+R$ is a split graph. Let $(S,Z)$ be a partition of $V(\hat{G})$ into an independent set $S$ and a clique $Z$.  Notice that either $u\in Z$ or $v\in Z$. By symmetry, assume without loss of generality that $u\in Z$. Observe that $Y$ is a clique of $G$.  Hence, $|Y\setminus Z|\leq 1$.  This implies that $R$ contains at least $k+1$ edges incident to $u$ in $\hat{G}$ whose other end-vertices are in $Y$. Therefore, $\splitc(G)=|R|\geq k+1$.  
\end{subproof}

Claim~~\ref{cl:stop} guarantees that if we stop by Reduction Rule~\ref{cl:stop}, then $G\notin\splitmke$.  Assume that the algorithm did not stop. Then we obtain that the constructed sets $X$, $Y$ and $I$ have the properties that are summarized in the following claim. 

\begin{claim}\label{cl:prop}
The sets $X,Y,I$ form a partition of the vertices of $G$ such that 
\begin{itemize}
\item[(i)] $I$ is an independent set in $G$,
\item[(ii)] $Y$ is a clique in $G$,
\item[(iii)] for every $v\in X$, $Y\subseteq N_G(v)$,
\item[(iv)] $|X|\leq (k+2)k$,
\item[(v)] for every independent set $S\subseteq X$, $|S|\leq 2k$. 
\end{itemize}
\end{claim} 
  
\begin{subproof}[Proof of Claim~\ref{cl:prop}]
It is straightforward that $(X,Y,I)$ is a partition of $V(G)$.

To see (i), it is sufficient to observe that only the edges of $D$ had both their end-vertices in $I$ in the initial  $I$ and we exclude at least one  end-vertex of every edge of $D$ from $I$ by Reduction Rule~\ref{red:incr-X}.

The property (ii) is trivial as $Y$ was a clique before we started to apply Reduction Rule~\ref{red:incr-X} and we only delete vertices from $Y$ by the rule. 

For (iii), observe that if $v$ is a vertex of the initial set $X$, then $Y\subseteq N_G(v)$ by the definition of $X$ and $Y$. Then, if we add a vertex $u\in I$ to $Y$ by Reduction Rule~\ref{red:incr-X}, then we delete the vertices of $Y\setminus N_G(u)$ from $Y$. Note that these vertices are included in $X$ and, since $Y$ is a clique, they are adjacent to all remaining vertices of $Y$.
Hence, $Y\subseteq N_G(v)$ for every $v\in X$.
  
To show (iv), notice that initially $|X|\leq 2|C|$. Then, whenever we apply Reduction Rule~\ref{red:incr-X}, we add to $X$ at most $k+2$ vertices. The rule is applied at most $|D|$ times.
We obtain that $|X|\leq 2|C|+(k+2)|D|\leq (k+2)|A|\leq (k+2)k$.

Finally, to prove (v), observe that initially $|X|\leq 2|C|$ and, therefore, every independent set with its vertices in the initial set $X$ has size at most $2|C|$. By each application of  Reduction Rule~\ref{red:incr-X}, we put a vertex $u\in I$  in $X$ and add a clique $Y\setminus N_G(u)$. This means that we can increase the maximum size of independent subset of $X$ by at most two. The rule is applied at most $|D|$ times. We conclude that the maximum size of independent subset of $X$ is at most $2|C|+2|D|=2|A|\leq 2k$.
\end{subproof}

These properties allow us to apply the next two rules.

\begin{reduction}\label{red:del-Y}
Set $Y:=Y\setminus N_G(I)$.
\end{reduction}  

\begin{reduction}\label{red:red-Y}
If $|Y|\geq 2$, then delete $|Y|-1$ arbitrary vertices of $Y$ and set $I:=I\cup Y$.
\end{reduction}  

\begin{claim}\label{cl:del-Y}
Reduction Rules~\ref{red:del-Y} and \ref{red:red-Y} are safe and the set $I$ constructed by Reduction Rule~\ref{red:red-Y} is independent. 
\end{claim} 
  
\begin{subproof}[Proof of Claim~\ref{cl:del-Y}]
Let $G'$ be the graph obtained from $G$ by the application of Reduction Rules~\ref{red:del-Y} and \ref{red:red-Y} and denote by $I'$ the set obtained from $I$. For the safeness proof, it is sufficient to show that if $G$ has an independent set $S$ with $|S|\geq \ell$, then $G'$ has an independent set of size at least $\ell$.  If $S\cap Y=\emptyset$, $S\subseteq V(G')$, that is, $S$ is an independent set of $G'$. Suppose that there is $v\in S\cap Y$. Since $Y$ is a clique, $v$ is the unique vertex of $S$ in $Y$. Since $Y\subseteq N_G(u)$ for every $u\in X$, $X\cap S=\emptyset$. This means that $S\setminus \{v\}\subseteq I$.

Suppose that there is $u\in I$ such that $v\in N_G(u)$. Consider $S'=(S\setminus \{v\})\cup\{u\}$. We obtain that $S'$ is an independent set, since $S'\subseteq I$. Clearly, $|S'|=|S|\geq \ell$.

Assume now that $v\in Y\setminus N_G(I)$. Note that the vertices of $Y\setminus N_G(I)$ are true twins in $G$, that is, for every $x,y\in Y\setminus N_G(I)$, $N_G[x]=N_G[y]$. Hence, we can assume without loss of generality that $u$ was not deleted by Reduction Rule~\ref{cl:del-Y} and $S\subseteq V(G')$.

To see that $I'$ constructed by Reduction Rule~\ref{red:red-Y} is independent, it is sufficient to observe that we include in $I$ a unique vertex of $Y$ that is not adjacent to other vertices of $I$.
\end{subproof}

Observe that after applying Reduction Rules~\ref{red:del-Y} and \ref{red:red-Y}, we have that $(X,I)$ is a partition of $V(G)$, where $I$ is an independent set.

\begin{reduction}\label{red:red-I}
If there is a vertex $u\in X$ such that $|N_G(u)|\geq 2k$, then set $X:=X\setminus \{u\}$.
\end{reduction} 

\begin{claim}\label{cl:red-I}
Reduction Rule~\ref{red:red-I} is safe.
\end{claim}

\begin{subproof}[Proof of Claim~\ref{cl:red-I}]
Denote by $G'$ the graph obtained from $G$ by the application of Reduction Rule~\ref{red:red-I} for a vertex $u\in X$. It is sufficient to show that if $G$ has an independent set $S$ with $|S|\geq \ell$, then $G'$ has an independent set of size at least $\ell$.  If $u\notin  S$, then $S$ is an independent set of $G'$. Assume that $u\in S$. Consider the set $I$. Observe that
$|S\cap I|\leq |I|-2k$. By Claim~\ref{cl:prop} (v), $|S\cap X|\leq 2k$. We obtain that $|I|\geq |S|$. Since $I$ is an independent set in $G'$, the claim follows.
\end{subproof}

We apply the rule exhaustively. We obtain $G$ with the property that  $|N_G(u)\cap I|\leq 2k-1$ for every $u\in X$.

Finally, we exhaustively apply the following reduction rule. 

\begin{reduction}\label{red:red-isol}
If there is an isolated vertex $u$, then set $G:=G-u$, $X:=X\setminus\{u\}$, $I:=I\setminus\{u\}$ and $\ell:=\ell-1$.
\end{reduction} 

It is straightforward to see that the rule is safe. 

Now we can show that the obtained graph $G$ has bounded size.

\begin{claim}\label{cl:size-G}
The graph $G$ has at most $2k^2(k+2)$ vertices.
\end{claim}

\begin{subproof}[Proof of Claim~\ref{cl:size-G}]
After the exhaustive application of Reduction Rule~\ref{red:red-isol}, $G$ has no isolated vertices. Recall that by Claim~\ref{cl:prop} (iv), $|X|\leq (k+2)k$. Since each vertex of $X$ is adjacent to at most $2k-1$ vertices in $I$, $|I|\leq (2k-1)(k+2)k$. Then $|V(G)|\leq |X|+|I|\leq (k+2)k+(2k-1)(k+2)k=2k^2(k+2)$.  
\end{subproof}

This completes the construction of the kernel. 

The initial construction of $X$, $Y$, $D$ and $I$ is polynomial by Proposition~\ref{prop:split-edit}. It  is straightforward to see that Reduction Rules~\ref{red:incr-X}--\ref{red:red-isol} can be applied in polynomial time. Hence, our kernelization algorithm is polynomial.
\end{proof}

\section{Conclusion}\label{sec:concl}
In this  paper, we initiated the study of parameterized subexponential and kernelization algorithms on $\chordalmke$ graphs. The existence of such algorithms makes this graph class  a very interesting object for studies. For other structural parameters, like treewidth or vertex cover, we have quite good understanding about the complexity of various optimization problems derived from general meta-theorems like Courcelle's or Pilipczuk's theorems~\cite{Courcelle90,DBLP:journals/corr/abs-1104-3057} and advanced algorithmic techniques
\cite{cut-and-count,rank-treewidth2,FominLPS16}. We believe that further exploration of the 
 complexity landscape of fill-in parameterization is an interesting research direction. 
 If an optimization problem is \classNP-complete on chordal graphs, like \probDS, then on 
 $\chordalmke$ this problem is in \classParaNP.
  On the other hand, even if a problem is solvable in polynomial time on chordal graphs, in theory, there is nothing preventing it from being \classParaNP on  $\chordalmke$. Is there a natural graph problem with this property? 
  For many problems that are solvable in polynomial time on chordal graphs, we 
 also established \classFPT algorithms on  $\chordalmke$ class. This does not exclude a possibility that there are  problems that are not \classFPT parameterized by $k$ but solvable in polynomial time for every fixed $k$.
 We do not know any such problem (in other words, the problem in class \classXP)
 yet. 
 It will be interesting to see, if there is any natural graph problem of such complexity.  
  In addition, we proved that there are problems  that are  \classFPT   on $\chordalmke$ when parameterized by $k$ and which cannot be solved in subexponential time unless ETH fails. We believe it would be exciting to 
   obtain a logical characterization of problems that can be solved in subexponential time on  $\chordalmke$ when parameterized by $k$, similar to the classical Courcelle's theorem~\cite{Courcelle90}.

\medskip
Some concrete open problems. Observe that for our subexponential   dynamic programming algorithms, we only need a $k$-almost chordal tree decomposition of the input graph, that is, a decomposition where each bag can be made a clique by adding at most $k$ edges. 
(Recall Definition~\ref{def:almost}.)
The maximum of numbers $\cliqc(G[X_t])\leq k$ can be significantly smaller than the minimum fill-in of a graph.    For graphs in $\chordalmke$, we can find fill-in in a subexponential in $k$ time by the algorithm of Fomin and Villanger~\cite{FominV13}.  However, we do not know  if it is  \classFPT in $k$ to decide, whether a graph admits a $k$-almost chordal tree decomposition. And  if yes,  can it be done in subexponential time?

The only reason why many of the algorithms introduced in our work run in time 
$2^{\Oh(\sqrt{k}\log k)}\cdot n^{\Oh(1)}$ and not 
$2^{\Oh(\sqrt{k})}\cdot n^{\Oh(1)}$ is because we do not know how to find a fill-in of size $k$  in time 
$2^{\Oh(\sqrt{k})}\cdot n^{\Oh(1)}$.  
The best known lower bound  rules out algorithms of time $2^{o(\sqrt{k})}\cdot n^{\Oh(1)}$  \cite{CaoS20} and better algorithms for fill-in would imply better algorithms for various optimization problems on  $\chordalmke$ graphs. Moreover, to get rid of the logarithm in the exponent, we do not need an exact algorithm. 
By the results of Natanzon, Shamir and Sharan~\cite{NatanzonSS98} (Proposition~\ref{prop:fillin-appr}),  $\fillin(G)$ can be approximated in polynomial time within a polyopt factor $8\cdot \fillin(G)$. Deciding whether $\fillin(G)\leq k$ can be done in time $2^{\Oh(\sqrt{k}\log k)}\cdot n^{\Oh(1)}$ by the results of 
Fomin and Villanger~\cite{FominV13} (Proposition~\ref{prop:FV12}). Is there an \classFPT constant-factor approximation algorithm with running time $2^{\Oh(\sqrt{k})}\cdot n^{\Oh(1)}$? The existence of such an algorithm would improve running times of the algorithms for many problems. For example, pipelined with our results, it would imply that  \probWIS is solvable in  $2^{\Oh(\sqrt{k})}\cdot n^{\Oh(1)}$ time on \chordalmke.

 Finally, we proved that \probIS  on $\intervalmke$ and $\splitmke$ admit  polynomial  kernels when parameterized by $k$. We leave open the question whether or not this problem has a polynomial 
(Turing) kernel on $\chordalmke$.

\medskip\noindent\textbf{Acknowledgement.} We thank Torstein Str\o mme, Daniel Lokshtanov, and Pranabendu Misra for fruitful discussions on the topic of this paper.  We also grateful to Saket Saurabh for helpful suggestions that allowed us to improve our results.

\bibliographystyle{siam}
\bibliography{Chordlike}

\end{document}